\newtheorem{Theorem}{Theorem}[section]
\newtheorem{lem}[Theorem]{Lemma}
\newtheorem{Remark}[Theorem]{Remark}
\newtheorem{Definition}[Theorem]{Definition}
\newtheorem{Corollary}[Theorem]{Corollary}
\newtheorem{Construction}{Construction}[section]
\newtheorem{Example}[Theorem]{Example}
\numberwithin{equation}{section}
\begin{document}
	
	\title{Function-Correcting Codes for Symbol-Pair Read Channels\let\thefootnote\relax\footnotetext{E-Mail addresses: xiaqingfeng@mails.ccnu.edu.cn (Q. Xia), hwliu@ccnu.edu.cn (H. Liu), bocongchen@foxmail.com (B. Chen).}}
	\author{Qingfeng Xia$^1$,~ Hongwei Liu$^1$ and Bocong Chen$^2$}
	\date{\small $1.$ School of Mathematics and Statistics, Central China Normal University, Wuhan, 430079, China\\
$2.$ School of Future Technology, South China University of Technology, Guangzhou, 511442,  China\\}
	\maketitle
	{\noindent\small{\bf Abstract:} Function-correcting codes are a class of codes designed to protect the function evaluation of a message against errors whose key advantage is the reduced redundancy.
In this paper, we extend function-correcting codes from binary symmetric channels to  symbol-pair read channels.
We introduce irregular-pair-distance codes and connect them with function-correcting symbol-pair codes.
Using the connection, we derive general upper and lower bounds on the optimal redundancy of function-correcting symbol-pair codes.
For ease of evaluation, we simplify these bounds and employ the simplified bounds to specific functions including pair-locally binary functions, pair weight functions and pair weight distribution functions.
	
	\vspace{1ex}
	{\noindent\small{\bf Keywords:}
		function-correcting codes; optimal redundancy; symbol-pair read channels; upper bounds; lower bounds.}
	
	2020 \emph{Mathematics Subject Classification}:  94B60, 94B65

\section{Introduction}
In standard communication systems, a sender desires to  transmit a message to a receiver via a noisy channel.
Classically, we assume that each part of the message is of equal importance to the receiver,
so the common goal is to construct an error-correcting code with a suitable decoder such that the  full message can be recovered correctly.
In reality, however, maybe only a certain attribute of the message, i.e.,
the result of evaluating a certain function on the message, is of particular interest to the receiver.
Of course, if the receiver is able to recover the whole message,
it is easy to  evaluate the function on the message to obtain the desired attribute; at any rate,
when the message is long and the function image is small, the solution is not efficient.
If we protect only the specific function value of interest, the redundancy will be reduced.
To meet such real need, Lenz, Bitar, Wachter-Zeh and Yaakobi \cite{LBWY} introduced a new paradigm, called function-correcting codes,
to put forward the study.

Since the key advantage of function-correcting codes is to reduce redundancy, in the  seminal paper \cite{LBWY},
the authors tried to use the smallest amount of redundancy that allows the recovery of the attribute;
in other words, they tried to find the optimal redundancy of function-correcting codes designed for a given function.
For this purpose, the notion of
irregular-distance codes was introduced and
it was showed that the optimal redundancy of a function-correcting code is given by the shortest length of an irregular-distance code which is the most important result of \cite{LBWY}.
Based on the connection, the other results of \cite{LBWY} lie in giving some general bounds on the optimal redundancy and employing these general results to specific functions.
The authors concluded the paper \cite{LBWY} by pointing out some further research directions, which
include the study of function-correcting codes under different channels.

In the era of information explosion, we need to store a huge amount of data,
so high density data storage devices are widely used in practice.
Motivated by the application of high-density data storage technologies, Cassuto and Blaum introduced the model of symbol-pair read channels
whose outputs are overlapping with pairs of symbols as shown in  Figure \ref{Figure1} below.
The model is based on the scenarios where, due to physical limitations, individual symbols cannot be read off the channel,
and therefore, each channel read contains contributions from two adjacent symbols.
It is worth noting that the advantage of the overlap between adjacent pair reads suggested by the model
over the natural method of partitioning the symbols to disjoint pairs.
It provides two observations of each symbol which offer better capabilities for error detection.
In the  classical  work \cite{CB}, Cassuto and Blaum introduced some fundamental notions about symbol-pair read channels,
established the relationship between the pair distance and the Hamming distance
and got the conclusion that a code $\mathcal{C}$ can correct $t$-pair errors if and only if its minimum pair distance $d_{p}\geq 2t+1$.
In addition, they provided some constructions and a decoding algorithm and derived the Sphere Packing bound and the Gilbert-Varshamov bound.
After that, the notions of symbol-pair read channel and symbol-pair  codes raised a lot of  scolars' interest, for example,
see \cite{CJKWY,W,EGY,YXW,CL,DGZZZ,KZL,LELP,LG,JML,ML,TL,THM} and the references therein.


Motivated by the aforementioned works, especially the
works of  \cite{LBWY,CB}, in this paper, we study
function-correcting codes
for symbol-pair read channels, which protect a certain attribute of the message transmitting through
the symbol-pair read channels, see Figure \ref{Figure2} below.
Analogous to that of \cite{LBWY}, the notion of
function-correcting symbol-pair codes is introduced.
Using function-correcting symbol-pair codes, we can reduce the redundancy and improve the efficiency of information storage and reading
under the condition that the receiver can recover the important attribute of the message.
For the large amounts of data in high-density storage devices, this may play a  significant role.

In this paper, we try to maximize the advantage of function-correcting symbol-pair codes,
so our central goal is to find upper and lower bounds on the optimal redundancy.
To this end,
we introduce the notion of irregular-pair-distance codes and connect them with function-correcting symbol-pair codes.
In the process, our main approach is to separate the message and redundancy vector of each codeword
which is more complicated than that in the  binary symmetric channel.
We derive upper and lower bounds on the optimal redundancy in terms of the shortest length on the irregular-pair-distance codes,
see Theorem \ref{theorem 3.1}.
Next, we employ these general results to specific functions including pair-locally binary functions, pair weight functions and pair weight distribution functions.
Let $t$ be the number of pair-errors. For pair-locally binary functions, we claim that the optimal redundancy is between $2t-2$ and $2t-1$.
And for pair weight functions, the optimal redundancy can be controlled between $\frac{20t^{3}-20t}{9(t+1)^{2}}$ and $\frac{4t-4}{1-2\sqrt{ln(2t-1)/(2t-1)}}$ under some conditions.
In addition, we find that the optimal redundancy of function-correcting symbol-pair codes for pair weight distribution functions is between $2t-2$ and $2t$.
Apart from these results, we provide some explicit constructions of function-correcting symbol-pair codes.

This paper is organized as follows.
In Section 2, we present the basic concepts and some fundamental results about irregular-distance codes, locally binary functions and symbol-pair codes. Next, we reveal the connection between function-correcting symbol-pair codes and irregular-pair-distance codes and give some general results in Section 3. In Sections 4, 5, 6, we apply those general results to pair-locally binary functions, pair weight functions and pair weight distribution functions respectively.
In Section 7, we compare the redundancy of classical symbol-pair codes with that of function-correcting symbol-pair codes.
In Section 8, we conclude our paper.

\begin{figure}[h]
  \centering
  \begin{tikzpicture}
  \node[rectangle,draw=white](A) at (0,0){$\boldsymbol{u}$};
  \node[rectangle,draw=black](B) at (2,0){Encoder};
  \node[rectangle,draw=black](C) at (6,0){Channel};
  \node[rectangle,draw=black](D) at (10,0){Decoder};
  \node[rectangle,draw=white](E) at (12,0){$\boldsymbol{u}$};
  \draw[->](0.5,0)--(1,0);
  \draw[->](3,0)--(5,0);
  \draw[->](7,0)--(9,0);
  \draw[->](11,0)--(11.5,0);
  \node[rectangle,draw=white](F) at (4,0.5){$\boldsymbol{c}$};
  \node[rectangle,draw=white](G) at (8,0.5){$\mathop{\boldsymbol{y}}\limits^{\leftrightarrow}$};
\end{tikzpicture}
  \caption{Classical Symbol-Pair Codes.}\label{Figure1}
\end{figure}
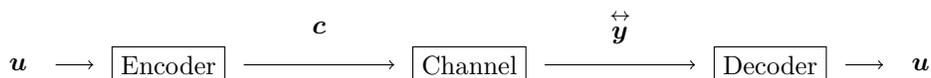

$\boldsymbol{u}\in\mathbb{Z}_{2}^{k}$ is a message to be read off. To guarantee recoverability of the message, we encodes the message to $\boldsymbol{c}\in\mathbb{Z}_{2}^{n}$. Through a symbol-pair read channel, the receiver gets a pair vector $\mathop{\boldsymbol{y}}\limits^{\leftrightarrow}\in(\mathbb{Z}_{2}\times\mathbb{Z}_{2})^{n}$ such that $d_{H}(\pi(\boldsymbol{c}),\mathop{y}\limits^{\leftrightarrow})\leq t$. According to $\mathop{\boldsymbol{y}}\limits^{\leftrightarrow}$, the receiver can correctly recover $\boldsymbol{u}$.

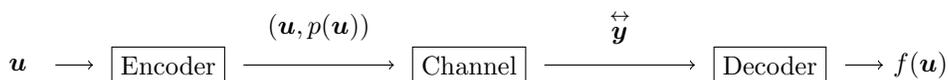
\begin{figure}[h]
  \centering
  \begin{tikzpicture}
  \node[rectangle,draw=white](A) at (0,0){$\boldsymbol{u}$};
  \node[rectangle,draw=black](B) at (2,0){Encoder};
  \node[rectangle,draw=black](C) at (6,0){Channel};
  \node[rectangle,draw=black](D) at (10,0){Decoder};
  \node[rectangle,draw=white](E) at (12,0){$f(\boldsymbol{u})$};
  \draw[->](0.5,0)--(1,0);
  \draw[->](3,0)--(5,0);
  \draw[->](7,0)--(9,0);
  \draw[->](11,0)--(11.5,0);
  \node[rectangle,draw=white](F) at (4,0.5){$(\boldsymbol{u},p(\boldsymbol{u}))$};
  \node[rectangle,draw=white](G) at (8,0.5){$\mathop{\boldsymbol{y}}\limits^{\leftrightarrow}$};
\end{tikzpicture}
  \caption{Function-Correcting Symbol-Pair Codes.}\label{Figure2}
\end{figure}

 $\boldsymbol{u}\in\mathbb{Z}_{2}^{k}$ is a message which features an attribute $f(\boldsymbol{u})$ that is of interest to the receiver. To guarantee recoverability of the attribute, we encodes the message to $(\boldsymbol{u},p(\boldsymbol{u}))\in\mathbb{Z}_{2}^{k+r}$. Through a symbol-pair read channel, the receiver gets a pair vector $\mathop{\boldsymbol{y}}\limits^{\leftrightarrow}\in(\mathbb{Z}_{2}\times\mathbb{Z}_{2})^{k+r}$ such that $d_{H}(\pi(Enc(\boldsymbol{u})),\mathop{y}\limits^{\leftrightarrow})\leq t$. With the knowledge of $f$ and $\mathop{\boldsymbol{y}}\limits^{\leftrightarrow}$, the receiver can correctly infer $f(\boldsymbol{u})$.

\section{Preliminaries}
In this section, we provide some basic knowledge about irregular distance codes,
locally binary functions and symbol-pair codes which will be used later. Throughout this paper,
we restrict our attention to binary codes, namely codes over the alphabet  $\mathbb{Z}_{2}$, even though some of the results can be easily generalized to larger alphabets.
\subsection{Irregular-Distance Codes and Locally Binary Functions}
We first review the notions  of irregular-distance codes and locally binary functions \cite{LBWY}.
As usual, let $\mathbb{N}_0$ denote the set of non-negative
integers. For a positive integer $M$, $[M]$ stands for the set $\{1,2,\ldots, M\}$ and $\mathbb{N}_{0}^{M\times M}$ stands for the set of
matrices having size $M\times M$ with entries being non-negative integers.
For a matrix $\boldsymbol{D}\in\mathbb{N}_{0}^{M\times M}$, we denote by $[\boldsymbol{D}]_{ij}$ the $(i,j)th$ entry of the matrix $\boldsymbol{D}$.
Given two vectors $\mathbf{x},\mathbf{y}\in \mathbb{Z}_2^n$, $d_H(\mathbf{x},\mathbf{y})$ denotes the Hamming distance between $\mathbf{x}$ and $\mathbf{y}$, i.e., $d_H(\mathbf{x},\mathbf{y})$ is equal to the number of coordinates where $\mathbf{x}$ and $\mathbf{y}$ differs.
\begin{Definition}
For a given matrix $\boldsymbol{D}\in\mathbb{N}_{0}^{M\times M}$, $\mathcal{P}=\{\boldsymbol{p}_{1},\boldsymbol{p}_{2},\ldots,\boldsymbol{p}_{M}\}\subseteq\mathbb{Z}_{2}^{r}$ is called a
$\mathbf{D}$-irregular-distance code ($\boldsymbol{D}$-code for short) if there exists an ordering of the codewords of $\mathcal{P}$ such that $d_H(\boldsymbol{p}_{i},\boldsymbol{p}_{j})\geq[\boldsymbol{D}]_{ij}$ for all $i,j\in[M]$.
In addition, $N(\boldsymbol{D})$ is defined to be the smallest integer $r$ such that there exists a $\boldsymbol{D}$-code of length $r$.
If $[\boldsymbol{D}]_{ij}=D$ for all $i\neq j$,  we write $N(\boldsymbol{D})$ as  $N(M,D)$.
\end{Definition}
Two upper bounds on $N(M,D)$ were reported in   \cite[Lemmas 3 and 4]{LBWY}; it seems that
both of them have small typos.
In \cite[Lemmas 3]{LBWY}, according to the definition of Hadamard codes given in \cite{H},  the order of a Hadamard matrix should be $2D$.
\begin{lem}(\cite[Lemma 3]{LBWY})\label{lemma 2.1}
  Let $D$ be a positive integer such that there exists a Hadamard matrix of order $2D$ and
Let $M$ be a positive integer satisfying $M\leq 4D$. Then, $N(M,D)\leq 2D.$
\end{lem}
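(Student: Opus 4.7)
The plan is to exhibit an explicit $\boldsymbol{D}$-code of length $2D$ and size $4D$, from which any $M\leq 4D$ codewords may be extracted to bound $N(M,D)$. The construction is the classical binary Hadamard code attached to a Hadamard matrix of order $2D$.

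First, I would fix a Hadamard matrix $H$ of order $2D$; by definition $HH^{\top}=2D\cdot I_{2D}$, so any two distinct rows of $H$ are orthogonal in $\mathbb{R}^{2D}$. Applying the entrywise map $+1\mapsto 0$, $-1\mapsto 1$ converts the rows into binary vectors in $\mathbb{Z}_{2}^{2D}$; orthogonality of a pair of $\pm 1$-rows of length $2D$ translates to their binary images agreeing in exactly $D$ coordinates and disagreeing in the other $D$, so the $2D$ resulting vectors lie pairwise at Hamming distance exactly $D$.

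Next, I would double the code by adjoining, for every row $c$, its complement $\bar{c}=c+\mathbf{1}$. Three distance checks suffice: $d_H(c,\bar{c})=2D\geq D$; for distinct rows $c,c'$ we already have $d_H(c,c')=D$; and $d_H(c,\bar{c'})=2D-d_H(c,c')=D$. Moreover, no row coincides with the complement of another, for otherwise two distinct rows of $H$ would be at Hamming distance $2D$, contradicting the uniform distance $D$ between distinct rows established above. Hence the augmented set has exactly $4D$ distinct codewords in $\mathbb{Z}_{2}^{2D}$, all at pairwise Hamming distance at least $D$. Selecting any $M\leq 4D$ of them yields a $\boldsymbol{D}$-code (with $[\boldsymbol{D}]_{ij}=D$ for $i\neq j$) of length $2D$, which by definition of $N(M,D)$ gives $N(M,D)\leq 2D$.

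There is no substantial obstacle; the only delicate point, and the reason the excerpt flags a typo, is the \emph{order} of the Hadamard matrix. Starting from a Hadamard matrix of order $D$ would yield binary rows at pairwise distance $D/2$, too small to satisfy the required constraint; the factor-of-two correction in the hypothesis is exactly what forces the row-pair distance up to $D$, and the complementation step then doubles the alphabet of codewords from $2D$ to $4D$, matching the cardinality bound $M\leq 4D$.
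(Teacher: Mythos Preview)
Your argument is correct and is precisely the standard Hadamard-code construction underlying this bound. The paper itself does not supply a proof of Lemma~\ref{lemma 2.1}; it merely cites \cite[Lemma~3]{LBWY} and corrects the stated order of the Hadamard matrix, so there is nothing further to compare against.
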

 In \cite[Lemma 4]{LBWY}, it seems that $-2$ was missed  in the enumerator. The corrected form is given below.
\begin{lem}(\cite[Lemma 4]{LBWY})\label{lemma 2.2}
For any positive integers $M,D$ with $D\geq 10$ and $M\leq D^{2}$, we have
$$N(M,D)\leq\frac{2D-2}{1-2\sqrt{ln(D)/D}}.$$
\end{lem}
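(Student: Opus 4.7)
The plan is a Gilbert--Varshamov-style greedy construction, quantified by Hoeffding's inequality. Set
$$r := \left\lceil \frac{2D-2}{1-2\sqrt{\ln(D)/D}} \right\rceil,$$
which is finite and strictly greater than $2D-2$ because the hypothesis $D\ge 10$ forces the denominator to be positive (one easily checks $2\sqrt{\ln(10)/10}<1$, and $(\ln D)/D$ is decreasing for $D\ge 3$). Proceeding greedily, I select $\boldsymbol{p}_1,\ldots,\boldsymbol{p}_M\in\mathbb{Z}_2^{r}$ one at a time so that each $\boldsymbol{p}_i$ avoids the union $\bigcup_{j<i} B(\boldsymbol{p}_j,D-1)$ of Hamming balls of radius $D-1$. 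This succeeds whenever $(M-1)\,V(r,D-1)<2^{r}$, where $V(r,D-1)=\sum_{k=0}^{D-1}\binom{r}{k}$.

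Next I bound the ball volume. Writing $V(r,D-1)/2^{r}=\Pr[X\le D-1]$ with $X\sim\mathrm{Binom}(r,1/2)$ and applying Hoeffding's inequality with deviation $t=r/2-(D-1)>0$ (positivity follows from $r>2D-2$), I obtain
$$V(r,D-1)\ \le\ 2^{r}\,\exp\!\left(-\,\frac{2(r/2-D+1)^{2}}{r}\right).$$
Feeding this into the greedy condition and using $M\le D^{2}$, hence $\ln M\le 2\ln D$, the sufficient condition collapses after taking logarithms to the clean inequality
$$2\sqrt{r\ln D}\ \le\ r-2D+2.$$

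Finally, to verify this inequality at the chosen $r$, I rearrange its defining equation as $r-2D+2=2r\sqrt{\ln(D)/D}$. Squaring the target inequality and dividing by the positive quantity $r\ln D$ then reduces it to $r\ge D$, which is immediate from $r>2D-2\ge D$ for $D\ge 10$. The only real bookkeeping issue is tracking the $-2$ in the numerator of the bound: it arises from working with the sharp tail $\Pr[X\le D-1]$ rather than $\Pr[X\le D]$ in Hoeffding, and without it the chain above loses just enough slack to falsify the stated inequality. This is precisely the typo flagged in the remark preceding the statement.
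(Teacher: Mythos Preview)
The paper does not prove this lemma; it is quoted from \cite[Lemma 4]{LBWY} with only the remark that the original statement appears to have dropped the $-2$ in the numerator. So there is no in-paper argument to compare against, and your reconstruction via Gilbert--Varshamov plus Hoeffding is the standard route and is essentially correct.

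Two small points of bookkeeping. First, after taking the ceiling you write ``rearrange its defining equation as $r-2D+2=2r\sqrt{\ln(D)/D}$,'' but with the ceiling this is only the inequality $r-2D+2\ge 2r\sqrt{\ln(D)/D}$; fortunately that is exactly the direction you need, and your reduction to $r\ge D$ goes through unchanged. Second, your construction literally yields $N(M,D)\le\lceil r_0\rceil$ with $r_0=\frac{2D-2}{1-2\sqrt{\ln(D)/D}}$, whereas the lemma as stated has no ceiling. This off-by-one between $\lceil r_0\rceil$ and $r_0$ is a routine abuse in asymptotic bounds of this kind and is surely the intended reading in \cite{LBWY}, but if you want a fully rigorous inequality you should either insert the ceiling into the statement or exhibit the extra slack explicitly.
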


We now turn to  present the definition of locally binary functions, which was also  given in \cite{LBWY}.
Let $f:~\mathbb{Z}_2^k\rightarrow {\rm Im}(f)=\{f(\boldsymbol{u})\,|\,\boldsymbol{u}\in \mathbb{Z}_2^k\}$ be a function.
  The {\em function ball of  $f$ with radius $\rho$ around $\boldsymbol{u}\in\mathbb{Z}_{2}^{k}$} is defined by $$B_{f}(\boldsymbol{u},\rho)=\{f(\boldsymbol{u}^{\prime})\,|\,\boldsymbol{u}^{\prime}\in\mathbb{Z}_{2}^{k}, d_H(\boldsymbol{u},\boldsymbol{u}^{\prime})\leq\rho\}.$$

\begin{Definition}
  A function $f\,:\,\mathbb{Z}_{2}^{k}\rightarrow Im(f)$ is called a $\rho$-locally binary function
  if for all $\boldsymbol{u}\in\mathbb{Z}_{2}^{k}$, we have
  $|B_{f}(\boldsymbol{u},\rho)|\leq 2.$
\end{Definition}
\subsection{Symbol-Pair Codes}
In this subsection, we review some definitions and basic facts about symbol-pair codes.
Readers can refer to  \cite{CB,CJKWY,EGY} for more details.
Each vector in $\mathbb{Z}_{2}^{n}$ can be represented as a symbol-pair vector in $(\mathbb{Z}_{2}\times\mathbb{Z}_{2})^{n}$ as below.
\begin{Definition}
  Let $\boldsymbol{x}=(x_{0},\ldots,x_{n-1})$ be a vector in $\mathbb{Z}_{2}^{n}$. The symbol-pair read vector of $\boldsymbol{x}$ is defined as $$\pi(\boldsymbol{x})=\big((x_{0},x_{1}),(x_{1},x_{2}),\ldots,(x_{n-2},x_{n-1}),(x_{n-1},x_{0})\big).$$
\end{Definition}
We have seen that any vector $\boldsymbol{x}\in\mathbb{Z}_{2}^{n}$ has a pair representation $\pi(\boldsymbol{x})\in(\mathbb{Z}_{2}\times\mathbb{Z}_{2})^{n}$. Clearly, not all  vectors in $(\mathbb{Z}_{2}\times\mathbb{Z}_{2})^{n}$
can be written in the form $\pi(\boldsymbol{x})$ for some $\boldsymbol{x}\in \mathbb{Z}_2^n$. Pair vectors
that can be expressed in the form $\pi(\boldsymbol{x})$ are said to be  {\em consistent}.
In the   symbol-pair read channel, the number of pair-errors is bounded by an integer $t$.
The following gives the definition of $t$-pair errors, see \cite{CB}.
\begin{Definition}
Let $\boldsymbol{x}=(x_{0},\ldots,x_{n-1})$ be a vector in $\mathbb{Z}_{2}^{n}$.
A  vector
$$\mathop{\boldsymbol{y}}\limits^{\leftrightarrow}=((\lhd y_{0},\rhd y_{0}),\ldots,\\(\lhd y_{n-1},\rhd y_{n-1}))\in(\mathbb{Z}_{2}\times\mathbb{Z}_{2})^{n}$$
is the result of $t$-pair errors from $\boldsymbol{x}$ if $d_{H}(\pi(\boldsymbol{x}),\mathop{\boldsymbol{y}}\limits^{\leftrightarrow})\leq t$.
In the expression of $\mathop{\boldsymbol{y}}\limits^{\leftrightarrow}$, $\lhd$ and $\rhd$ designate the left and right symbols respectively.
\end{Definition}
After defining the $t$-pair errors, the next natural step is to prove the necessary and sufficient condition on the code for achieving correctability of $t$-pair errors. For the purpose, we need to define the pair distance.
\begin{Definition}(\cite{CB})
  For $\boldsymbol{x},\boldsymbol{y}\in\mathbb{Z}_{2}^{n}$, the pair distance between $\boldsymbol{x}$ and $\boldsymbol{y}$ is defined as $$d_{p}(\boldsymbol{x},\boldsymbol{y})=d_{H}(\pi(\boldsymbol{x}),\pi\big(\boldsymbol{x})\big).$$
\end{Definition}
The pair-distance is related to the Hamming distance in the following lemma.
\begin{lem}(\cite{CB})\label{lemma 2.3}
  For $\boldsymbol{x}$, $\boldsymbol{y}$ in $\mathbb{Z}_{2}^{n}$,
  we have $$d_{H}(\boldsymbol{x},\boldsymbol{y})+1\leq d_{p}(\boldsymbol{x},\boldsymbol{y})\leq 2d_{H}(\boldsymbol{x},\boldsymbol{y}).$$In the extreme cases in which $d_{H}(\boldsymbol{x},\boldsymbol{y})$ equals $0$ or $n$, clearly $d_{p}(\boldsymbol{x},\boldsymbol{y})=d_{H}(\boldsymbol{x},\boldsymbol{y}).$
\end{lem}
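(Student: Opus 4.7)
The plan is to reduce both inequalities to a set-theoretic statement by looking at the symmetric difference of $\boldsymbol{x}$ and $\boldsymbol{y}$. Define $S=\{i\in\{0,1,\ldots,n-1\}:x_{i}\neq y_{i}\}$, so $|S|=d_{H}(\boldsymbol{x},\boldsymbol{y})$. Looking at the definition of $\pi$, the $i$-th coordinate of $\pi(\boldsymbol{x})$ and $\pi(\boldsymbol{y})$ are the pairs $(x_{i},x_{i+1})$ and $(y_{i},y_{i+1})$ (indices mod $n$ because of the wrap-around term $(x_{n-1},x_{0})$). These two pairs disagree exactly when $i\in S$ or $i+1\in S$. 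Writing $S-1=\{i-1\bmod n : i\in S\}$, this immediately gives the identity
$$d_{p}(\boldsymbol{x},\boldsymbol{y}) \;=\; |S\cup(S-1)|.$$
Everything else will follow from elementary counting on this identity.

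For the upper bound $d_{p}\leq 2d_{H}$, I would just apply the trivial union bound $|S\cup(S-1)|\leq |S|+|S-1|=2|S|$, since a cyclic shift is a bijection.

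For the lower bound $d_{p}\geq d_{H}+1$, the key observation is that $|S\cup(S-1)|=|S|+|(S-1)\setminus S|$, so it suffices to prove that $(S-1)\setminus S$ is nonempty whenever $0<|S|<n$. The main obstacle is simply spelling this out carefully: I would argue that if $(S-1)\subseteq S$, then since $S-1$ has the same cardinality as $S$ in the finite cyclic group $\mathbb{Z}/n\mathbb{Z}$, we must have $S-1=S$, which forces $S$ to be closed under the cyclic shift by $-1$; because the orbit of this shift is all of $\mathbb{Z}/n\mathbb{Z}$, this would force $S=\emptyset$ or $S=\{0,\ldots,n-1\}$, contradicting $0<|S|<n$. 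Hence some element of $S-1$ lies outside $S$, giving $|(S-1)\setminus S|\geq 1$.

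Finally, the extreme cases are immediate from the same identity. If $d_{H}=0$, then $S=\emptyset$ so $S\cup(S-1)=\emptyset$ and $d_{p}=0$. If $d_{H}=n$, then $S=\{0,\ldots,n-1\}$ so $S-1=S$ and $S\cup(S-1)=\{0,\ldots,n-1\}$, giving $d_{p}=n$. In both situations $d_{p}=d_{H}$, as claimed.
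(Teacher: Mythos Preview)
Your proof is correct. The identity $d_{p}(\boldsymbol{x},\boldsymbol{y})=|S\cup(S-1)|$ is the right reduction, and both the union bound for the upper inequality and the shift-invariance argument for the lower inequality are clean and valid.

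Note, however, that the paper does not actually prove this lemma: it is quoted verbatim from Cassuto and Blaum \cite{CB} and stated without proof. So there is no in-paper argument to compare against. Your set-theoretic formulation is essentially the standard way this fact is established (and is in the same spirit as the original argument in \cite{CB}, which counts the positions where the pair-read vectors differ in terms of the runs of the Hamming-difference vector). If anything, your invariance argument for $(S-1)\setminus S\neq\emptyset$ is slightly slicker than the usual ``pick the last index of a maximal run'' argument, but the two are equivalent in content.
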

Let $\mathcal{C}\subset\mathbb{Z}_{2}^{n}$ be a code and let $$d_{p}=\min\limits_{\boldsymbol{x}\in\mathcal{C},\boldsymbol{y}\in\mathcal{C},\boldsymbol{x}\neq\boldsymbol{y}}d_{p}(\boldsymbol{x},\boldsymbol{y})$$ be the minimum pair-distance of $\mathcal{C}$. The necessary and sufficient condition for correctability of $t$-pair errors is given as follows.
\begin{lem}(\cite{CB})\label{lemma 2.4}
  A code $\mathcal{C}$ can correct $t$-pair errors if and only if $d_{p}\geq 2t+1$.
\end{lem}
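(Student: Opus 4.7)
The plan is to argue this in the classical Hamming-style way, but with the pair metric replacing the usual Hamming metric on codewords. The key observation is that by definition, $d_p(\boldsymbol{x},\boldsymbol{y})=d_H(\pi(\boldsymbol{x}),\pi(\boldsymbol{y}))$, so the pair distance is literally a Hamming distance on pair vectors in $(\mathbb{Z}_2\times\mathbb{Z}_2)^n$. This lets me reuse the standard triangle-inequality argument without reinventing anything.

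For the forward direction, assume $d_p\geq 2t+1$. I will argue by contradiction: suppose the received pair vector $\mathop{\boldsymbol{y}}\limits^{\leftrightarrow}$ is compatible with two distinct codewords $\boldsymbol{x},\boldsymbol{x}'\in\mathcal{C}$, meaning $d_H(\pi(\boldsymbol{x}),\mathop{\boldsymbol{y}}\limits^{\leftrightarrow})\leq t$ and $d_H(\pi(\boldsymbol{x}'),\mathop{\boldsymbol{y}}\limits^{\leftrightarrow})\leq t$. Applying the triangle inequality for $d_H$ on $(\mathbb{Z}_2\times\mathbb{Z}_2)^n$ gives
\[
d_p(\boldsymbol{x},\boldsymbol{x}')=d_H(\pi(\boldsymbol{x}),\pi(\boldsymbol{x}'))\leq d_H(\pi(\boldsymbol{x}),\mathop{\boldsymbol{y}}\limits^{\leftrightarrow})+d_H(\mathop{\boldsymbol{y}}\limits^{\leftrightarrow},\pi(\boldsymbol{x}'))\leq 2t,
\]
which contradicts $d_p\geq 2t+1$. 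Hence the decoder can unambiguously identify the transmitted codeword.

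For the converse, I will prove the contrapositive: if $d_p\leq 2t$, then $\mathcal{C}$ fails to correct $t$ pair-errors. Pick $\boldsymbol{x}\neq\boldsymbol{x}'$ in $\mathcal{C}$ with $d_p(\boldsymbol{x},\boldsymbol{x}')\leq 2t$, and let $S\subseteq\{0,1,\ldots,n-1\}$ be the set of coordinates at which $\pi(\boldsymbol{x})$ and $\pi(\boldsymbol{x}')$ differ. Since $|S|\leq 2t$, I can split $S=S_1\sqcup S_2$ with $|S_1|\leq t$ and $|S_2|\leq t$. I then build $\mathop{\boldsymbol{y}}\limits^{\leftrightarrow}\in(\mathbb{Z}_2\times\mathbb{Z}_2)^n$ coordinatewise: outside $S$, copy either $\pi(\boldsymbol{x})$ or $\pi(\boldsymbol{x}')$ (they agree there); on $S_1$, copy $\pi(\boldsymbol{x}')$; on $S_2$, copy $\pi(\boldsymbol{x})$. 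By construction $d_H(\pi(\boldsymbol{x}),\mathop{\boldsymbol{y}}\limits^{\leftrightarrow})=|S_1|\leq t$ and $d_H(\pi(\boldsymbol{x}'),\mathop{\boldsymbol{y}}\limits^{\leftrightarrow})=|S_2|\leq t$, so $\mathop{\boldsymbol{y}}\limits^{\leftrightarrow}$ is simultaneously a valid $t$-pair error pattern from both codewords and no decoder can distinguish them.

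The only subtlety I would flag is that $\mathop{\boldsymbol{y}}\limits^{\leftrightarrow}$ constructed in the converse need not be \emph{consistent}, i.e., it may not equal $\pi(\boldsymbol{z})$ for any $\boldsymbol{z}\in\mathbb{Z}_2^n$. This is fine because the definition of $t$-pair errors the paper adopts only requires $\mathop{\boldsymbol{y}}\limits^{\leftrightarrow}\in(\mathbb{Z}_2\times\mathbb{Z}_2)^n$ with $d_H(\pi(\boldsymbol{x}),\mathop{\boldsymbol{y}}\limits^{\leftrightarrow})\leq t$, without any consistency constraint; symbol-pair channels are modeled as producing arbitrary pairs after noise. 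So no extra work is needed to verify that such a $\mathop{\boldsymbol{y}}\limits^{\leftrightarrow}$ can actually arise from the channel. This is really the only step that could cause confusion, but it is not a genuine obstacle given the definitions already stated.
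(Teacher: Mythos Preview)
Your argument is correct and is the standard one: the pair distance is a Hamming distance on $(\mathbb{Z}_2\times\mathbb{Z}_2)^n$, so the usual triangle-inequality proof goes through verbatim, and your observation that the adversarial received word in the converse need not be consistent is exactly right given the paper's Definition of $t$-pair errors. Note, however, that the paper does not supply its own proof of this lemma at all; it is simply quoted from \cite{CB} as a known result, so there is nothing in the paper to compare your argument against.
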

We recall two bounds on the size of symbol-pair codes . Firstly, we represent the Sphere Packing bound, which was established in \cite{CB}.
\begin{lem}
  If $\mathcal{C}\subset\mathbb{Z}_{2}^{n}$ is a code with $M$ codewords that correct all $t$-pair errors, then $MB(n,t)\leq 2^{n},$ where $B(n,t)$ is the number of words with pair-distance $t$ or less from $\boldsymbol{x}\in\mathbb{Z}_{2}^{n}$.
\end{lem}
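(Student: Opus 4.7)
The plan is to adapt the classical sphere-packing argument to the symbol-pair setting. First I would record that the pair-distance is a metric on $\mathbb{Z}_{2}^{n}$: this is immediate from $d_{p}(\boldsymbol{x},\boldsymbol{y})=d_{H}(\pi(\boldsymbol{x}),\pi(\boldsymbol{y}))$ together with the fact that the Hamming distance satisfies the usual metric axioms, so non-negativity, symmetry and, crucially, the triangle inequality transfer to $d_{p}$. I would also note that the map $\pi$ is $\mathbb{Z}_{2}$-linear coordinate-by-coordinate, so $d_{p}$ is translation invariant; consequently, for every center $\boldsymbol{x}\in\mathbb{Z}_{2}^{n}$ the pair-ball
$$B_{p}(\boldsymbol{x},t)=\{\boldsymbol{z}\in\mathbb{Z}_{2}^{n}:d_{p}(\boldsymbol{x},\boldsymbol{z})\leq t\}$$
has the same cardinality, namely $B(n,t)$ as defined in the statement.

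Next I would exploit Lemma \ref{lemma 2.4}: since $\mathcal{C}$ corrects all $t$-pair errors, its minimum pair-distance satisfies $d_{p}\geq 2t+1$. Suppose for contradiction that some $\boldsymbol{z}\in\mathbb{Z}_{2}^{n}$ lies in $B_{p}(\boldsymbol{c}_{1},t)\cap B_{p}(\boldsymbol{c}_{2},t)$ for two distinct codewords $\boldsymbol{c}_{1},\boldsymbol{c}_{2}\in\mathcal{C}$. Then the triangle inequality for $d_{p}$ gives
$$d_{p}(\boldsymbol{c}_{1},\boldsymbol{c}_{2})\leq d_{p}(\boldsymbol{c}_{1},\boldsymbol{z})+d_{p}(\boldsymbol{z},\boldsymbol{c}_{2})\leq 2t,$$
contradicting $d_{p}\geq 2t+1$. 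Hence the pair-balls around the $M$ codewords are pairwise disjoint subsets of $\mathbb{Z}_{2}^{n}$.

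Finally, summing the cardinalities of these disjoint balls yields
$$M\cdot B(n,t)=\sum_{\boldsymbol{c}\in\mathcal{C}}|B_{p}(\boldsymbol{c},t)|\leq|\mathbb{Z}_{2}^{n}|=2^{n},$$
which is the desired inequality. The only non-routine step is confirming that $d_{p}$ genuinely satisfies the triangle inequality and that the ball-size is center-independent; once these are in place the argument reduces to the textbook disjoint-ball packing, so I do not expect a serious obstacle.
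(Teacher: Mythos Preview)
The paper does not actually supply a proof of this lemma; it is quoted from Cassuto--Blaum \cite{CB} as a known result. Your argument is correct and is exactly the classical sphere-packing proof transported to the pair metric, which is how the original reference establishes it: the key observations that $d_{p}$ inherits the triangle inequality from $d_{H}$ via $\pi$ and that $\pi$ is $\mathbb{Z}_{2}$-linear (hence ball sizes are center-independent) are just what is needed, and the disjointness of radius-$t$ balls then follows from Lemma~\ref{lemma 2.4}.
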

Next we review the computational formula of $B(n,t)$. For integers $n>l\geq L$, let $D(n,l,L)$ be the number of sequences of length $n$ and
Hamming weight $l$ that have $L$ runs. It follows from \cite{EGY} that
 $$D(n,l,L)=\frac{n}{L}\left(\begin{array}{c}
                              l-1 \\
                              L-1
                            \end{array}\right)\left(\begin{array}{c}
                              n-l-1 \\
                              L-1
                            \end{array}\right).$$
We denote by $S(n,i)$ the number of words with pair-distance $i$ from $\boldsymbol{x}\in\mathbb{Z}_{2}^{n}$. Then $$S(n,i)=\sum_{l=\lceil\frac{i}{2}\rceil}^{i-1}D(n,l,i-l)$$ and $$B(n,t)=1+\sum_{i=1}^{t}S(n,i).$$

We end this section by restating  the Singleton bound for symbol-pair codes \cite{CJKWY}.
\begin{lem}
  If $\mathcal{C}\subset\mathbb{Z}_{2}^{n}$ is a code with the minimum pair-distance $d_{p}$, the number of codewords $M$ satisfies
  $M\leq 2^{n-d_{p}+2}.$
\end{lem}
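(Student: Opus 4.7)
The plan is to reduce the problem to the classical (Hamming-distance) Singleton bound by exploiting the relationship between pair distance and Hamming distance recorded in Lemma \ref{lemma 2.3}.

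First, I would extract from Lemma \ref{lemma 2.3} the key inequality $d_H(\boldsymbol{x},\boldsymbol{y}) \geq d_p(\boldsymbol{x},\boldsymbol{y}) - 1$, which holds whenever $\boldsymbol{x} \neq \boldsymbol{y}$ (the edge cases $d_H = 0$ or $d_H = n$ do not interfere, since $d_H = 0$ means $\boldsymbol{x} = \boldsymbol{y}$, and if $d_H = n$ then $d_p = n$ so $d_H = d_p \geq d_p - 1$ trivially). Applying this to every pair of distinct codewords of $\mathcal{C}$, I conclude that the minimum Hamming distance $d_H(\mathcal{C})$ of $\mathcal{C}$ satisfies $d_H(\mathcal{C}) \geq d_p - 1$.

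Next I would invoke the classical Singleton bound for binary codes, which asserts $M \leq 2^{n - d_H(\mathcal{C}) + 1}$. Combining this with the previous step gives
\[
M \;\leq\; 2^{\,n - d_H(\mathcal{C}) + 1} \;\leq\; 2^{\,n - (d_p - 1) + 1} \;=\; 2^{\,n - d_p + 2},
\]
which is the desired inequality. Alternatively, one could present the argument by a puncturing construction: delete any $d_p - 2$ coordinates of $\mathcal{C}$ to form $\mathcal{C}'\subseteq \mathbb{Z}_2^{n-d_p+2}$, then observe that two distinct codewords cannot collapse under this puncturing, since otherwise they would have Hamming distance at most $d_p - 2$, contradicting $d_H(\mathcal{C}) \geq d_p - 1$; hence $|\mathcal{C}| = |\mathcal{C}'| \leq 2^{n - d_p + 2}$.

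There is essentially no obstacle to overcome: the statement is a direct corollary of Lemma \ref{lemma 2.3} together with a well-known bound, and the only minor point worth checking is the boundary behavior of the pair/Hamming distance relation, which is explicitly dealt with in Lemma \ref{lemma 2.3}. I would choose the first presentation (reduction to the classical Singleton bound) for brevity, as it makes the structural reason for the ``$+2$'' in the exponent transparent: one unit comes from the usual Singleton bound, and the extra unit comes from the gap $d_p \geq d_H + 1$ between pair and Hamming distances.
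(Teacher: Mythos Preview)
Your argument contains a genuine error in the very first step. Lemma~\ref{lemma 2.3} states that $d_H(\boldsymbol{x},\boldsymbol{y}) + 1 \leq d_p(\boldsymbol{x},\boldsymbol{y})$, which rearranges to $d_H(\boldsymbol{x},\boldsymbol{y}) \leq d_p(\boldsymbol{x},\boldsymbol{y}) - 1$, the \emph{opposite} of the inequality you claim to extract. Concretely, for $\boldsymbol{x}=(0,0,0,0)$ and $\boldsymbol{y}=(1,0,1,0)$ one has $d_H=2$ but $d_p=4$, so $d_H < d_p - 1$. Hence you cannot conclude $d_H(\mathcal{C}) \geq d_p - 1$, and the reduction to the classical Singleton bound collapses. (The only lower bound on $d_H$ that Lemma~\ref{lemma 2.3} actually supplies is $d_H \geq d_p/2$, which via the Hamming Singleton bound yields merely $M \leq 2^{\,n - \lceil d_p/2 \rceil + 1}$, strictly weaker than the stated result.) Your alternative puncturing argument fails for the same reason, since it also rests on the false inequality $d_H(\mathcal{C}) \geq d_p - 1$.

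The paper itself does not prove this lemma; it is quoted from \cite{CJKWY}. The argument there is indeed by puncturing, but the essential idea you are missing is that one must delete $d_p - 2$ \emph{consecutive} coordinates. If two codewords agreed on the remaining $n - d_p + 2$ positions, they would differ only on a subset of a block of $d_p - 2$ consecutive positions; such a block, together with its cyclic shift by one, occupies at most $d_p - 1$ pair-positions, forcing pair distance at most $d_p - 1$, a contradiction. Puncturing at arbitrary (non-consecutive) coordinates does not give this control, which is precisely why the naive route through the Hamming Singleton bound falls short.
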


\section{General Results on the Optimal Redundancy}
In this section, we mainly establish the general results on function-correcting symbol-pair codes (FCSPCs).
For this purpose, we need to describe a relation between FCSPCs and irregular-pair-distance codes.
\subsection{The Connection between FCSPCs and Irregular-Pair-Distance Codes}
Let $\boldsymbol{u}\in\mathbb{Z}_{2}^{k}$ be a binary message and let
$f\,:\,\mathbb{Z}_{2}^{k}\rightarrow Im(f)=\{f(\boldsymbol{u})\,|\,\boldsymbol{u}\in\mathbb{Z}_{2}^{k}\}$
be a function computed on $\boldsymbol{u}$ with  $E=|Im(f)|\leq 2^{k}$.
The message is encoded via the encoding function $$Enc\,:\,\mathbb{Z}_{2}^{k}\rightarrow\mathbb{Z}_{2}^{k+r}\,,\,~~~Enc(\boldsymbol{u})=(\boldsymbol{u},\boldsymbol{p(u)}),$$
where $\boldsymbol{p(u)}\in\mathbb{Z}_{2}^{r}$ is
the redundancy vector and $r$ is the redundancy. The resulting codeword $Enc(\boldsymbol{u})$ is transmitted over a symbol-pair read channel, resulting in $\mathop{\boldsymbol{y}}\limits^{\leftrightarrow}\in(\mathbb{Z}_{2}\times\mathbb{Z}_{2})^{k+r}$ with $d_{H}(\pi(Enc(\boldsymbol{u})),\mathop{\boldsymbol{y}}\limits^{\leftrightarrow}) \leq t$.
For symbol-pair read channels, we introduce the following definition.
\begin{Definition}
 An encoding function $Enc\,:\,\mathbb{Z}_{2}^{k}\rightarrow\mathbb{Z}_{2}^{k+r}$, $Enc(\boldsymbol{u})=(\boldsymbol{u},\boldsymbol{p(u)})$ defines a function-correcting symbol-pair code (FCSPC for short) for the function $f\,:\,\mathbb{Z}_{2}^{k}\rightarrow Im(f)$ if for all $\boldsymbol{u}_{1},\!\boldsymbol{u}_{2}\in \mathbb{Z}_{2}^{k}$ with $f(\boldsymbol{u}_{1})\neq  f(\boldsymbol{u}_{2})$, it holds that $$d_{p}(Enc(\boldsymbol{u}_{1}),Enc(\boldsymbol{u}_{2}))\geq 2t+1.$$
\end{Definition}
\begin{Remark}{\rm
By this definition, given any pair vector $\mathop{\boldsymbol{y}}\limits^{\leftrightarrow}$, which is obtained by $t$-pair errors from $Enc(\boldsymbol{u})$, the receiver can uniquely recover $f(\boldsymbol{u})$, if it has knowledge about the function $f$ and $\mathop{\boldsymbol{y}}\limits^{\leftrightarrow}$.}
\end{Remark}
Analogous to \cite{LBWY},   we give the definition of the optimal redundancy of an FCSPC designed for the function $f$, the main quality of interest in this paper.
\begin{Definition}
  The optimal redundancy $r_{p}^{f}(k,t)$ is defined as the smallest $r$ such that there exists an FCSPC with encoding function $Enc\,:\,\mathbb{Z}_{2}^{k}\rightarrow\mathbb{Z}_{2}^{k+r}$ for $f$.
\end{Definition}
In order to find the optimal redundancy, we need to  connect FCSPCs with irregular-pair-distance codes.
For any integer $M$, we write $[M]^{+}\triangleq max\{M,0\}$ and recall that $[M]=\{1,\ldots,M\}$.
Then we introduced two pair-distance matrices of a function $f$.
\begin{Definition}
Let $\boldsymbol{u}_{1},\ldots,\boldsymbol{u}_{M}\in \mathbb{Z}_{2}^{k}.$
We define two kinds of pair-distance matrices $\boldsymbol{D}_{f}^{(1)}(t,\boldsymbol{u}_{1},\ldots,\boldsymbol{u}_{M})$ and $\boldsymbol{D}_{f}^{(2)}(t,\boldsymbol{u}_{1},\ldots,\boldsymbol{u}_{M})$ which are $M\times M$ matrices with entries
		$$\left[\boldsymbol{D}_f^{(1)}\left(t, \boldsymbol{u}_1, \ldots, \boldsymbol{u}_M\right)\right]_{i j}= \begin{cases}{\left[2t-d_{p}\left(\boldsymbol{u}_i, \boldsymbol{u}_j\right)\right]^{+},} & \text {if } f\left(\boldsymbol{u}_i\right) \neq f\left(\boldsymbol{u}_j\right), \\ 0, & \text { otherwise, }\end{cases}$$
and
		$$\ \left[\boldsymbol{D}_f^{(2)}\left(t, \boldsymbol{u}_1, \ldots, \boldsymbol{u}_M\right)\right]_{i j}= \begin{cases}{\left[2t+2-d_{p}\left(\boldsymbol{u}_i, \boldsymbol{u}_j\right)\right]^{+},} & \text {if } f\left(\boldsymbol{u}_i\right) \neq f\left(\boldsymbol{u}_j\right), \\ 0, & \text { otherwise. }\end{cases}$$
\end{Definition}
Let $\mathcal{P}=\{\boldsymbol{p}_{1},\boldsymbol{p}_{2},\ldots,\boldsymbol{p}_{M}\}\subseteq\mathbb{Z}_{2}^{r}$ be a code of length $r$ and cardinality $M$. Irregular-pair-distance codes are formally defined as follows.
\begin{Definition}
  Let $\boldsymbol{D}\in \mathbb{N}_{0}^{M\times M}$. If there exists an ordering of the codewords of $\mathcal{P}=\{\boldsymbol{p}_{1},\boldsymbol{p}_{2},\ldots,\boldsymbol{p}_{M}\}$ such that $d_{p}(\boldsymbol{p}_{i},\boldsymbol{p}_{j})\geq\left[\boldsymbol{D}\right]_{ij}$ for all $i,j\in \left[M\right]$,
  then $\mathcal{P}$ is called a $\boldsymbol{D}$-irregular-pair-distance code ($\boldsymbol{D}_{p}$-code for short).
  Moreover, let $N_{p}(\boldsymbol{D})$ be
   the   smallest integer $r$ such that  there exists a $\boldsymbol{D}_{p}$-code of length $r$.
 In the special case $[\boldsymbol{D}]_{ij}=D$ for all $i\neq j$,
  we simply  write $N_{p}(\boldsymbol{D})=N_{p}(M,D)$.
\end{Definition}
From the very definition, a $\boldsymbol{D}_{p}$-code requires individual pair-distance constraint between each pair of codewords.
With these definitions and facts, we can give the connection between FCSPCs and irregular-pair-distance codes.
\begin{Theorem}\label{theorem 3.1}
  For any function $f\,:\,\mathbb{Z}_{2}^{k}\rightarrow Im(f)$ and  $\{\boldsymbol{u}_{1},\ldots,\boldsymbol{u}_{2^{k}}\}=\mathbb{Z}_{2}^{k}$, we have
   	$$N_{p}\big(\boldsymbol{D}_{f}^{(1)}(t,\boldsymbol{u}_{1},\ldots,\boldsymbol{u}_{2^{k}})\big)\leq r_{p}^{f}(k,t)\leq N_{p}\big(\boldsymbol{D}_{f}^{(2)}(t,\boldsymbol{u}_{1},\ldots,\boldsymbol{u}_{2^{k}})\big).$$
\end{Theorem}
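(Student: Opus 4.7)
My plan is to first establish a \emph{concatenation inequality} linking the pair distance of a joined vector to the pair distances of its two halves, and then apply it once in each direction to obtain the sandwich in the theorem. Specifically, I claim that for every $\boldsymbol{u}_1,\boldsymbol{u}_2\in\mathbb{Z}_2^k$ and every $\boldsymbol{p}_1,\boldsymbol{p}_2\in\mathbb{Z}_2^r$,
\[
 d_p(\boldsymbol{u}_1,\boldsymbol{u}_2)+d_p(\boldsymbol{p}_1,\boldsymbol{p}_2)-1\;\le\; d_p\bigl((\boldsymbol{u}_1,\boldsymbol{p}_1),(\boldsymbol{u}_2,\boldsymbol{p}_2)\bigr)\;\le\; d_p(\boldsymbol{u}_1,\boldsymbol{u}_2)+d_p(\boldsymbol{p}_1,\boldsymbol{p}_2)+1.\quad(\star)
\]
The $\pm 1$ slack in $(\star)$ is precisely what accounts for the gap between the entry $2t-d_p(\boldsymbol{u}_i,\boldsymbol{u}_j)$ of $\boldsymbol{D}_f^{(1)}$ and the entry $2t+2-d_p(\boldsymbol{u}_i,\boldsymbol{u}_j)$ of $\boldsymbol{D}_f^{(2)}$ in the statement of the theorem.

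To verify $(\star)$ I would expand the three cyclic pair representations. Setting $\boldsymbol{c}_i=(\boldsymbol{u}_i,\boldsymbol{p}_i)$, the $k+r$ pair positions of $\pi(\boldsymbol{c}_i)$ split into the $k-1$ interior message pairs $(u_{i,l},u_{i,l+1})$, the $r-1$ interior redundancy pairs $(p_i(l),p_i(l+1))$, and two boundary pairs $(u_{i,k-1},p_i(0))$ and $(p_i(r-1),u_{i,0})$, whereas $d_p(\boldsymbol{u}_1,\boldsymbol{u}_2)$ and $d_p(\boldsymbol{p}_1,\boldsymbol{p}_2)$ use the same interior pairs but carry their own cyclic wrap pairs $(u_{i,k-1},u_{i,0})$ and $(p_i(r-1),p_i(0))$. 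Subtracting, the interior contributions cancel and
\[
 d_p(\boldsymbol{c}_1,\boldsymbol{c}_2)-d_p(\boldsymbol{u}_1,\boldsymbol{u}_2)-d_p(\boldsymbol{p}_1,\boldsymbol{p}_2)\;=\;b_1+b_2-w_u-w_p,
\]
where $b_1,b_2,w_u,w_p\in\{0,1\}$ are the four boundary-disagreement indicators. Each of these is a logical OR of coordinate disagreements at the four ``corner'' positions $u_{\cdot,k-1}$, $u_{\cdot,0}$, $p_{\cdot}(0)$, $p_{\cdot}(r-1)$, so enumerating the $2^4$ corner-disagreement patterns shows $b_1+b_2-w_u-w_p\in\{-1,0,+1\}$, which is $(\star)$.

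With $(\star)$ in hand, both directions of the theorem are one-line applications. For the upper bound, take a $\boldsymbol{D}_f^{(2)}$-code $\{\boldsymbol{p}_1,\ldots,\boldsymbol{p}_{2^k}\}$ of length $N_p(\boldsymbol{D}_f^{(2)})$ and set $Enc(\boldsymbol{u}_i)=(\boldsymbol{u}_i,\boldsymbol{p}_i)$; for any $i,j$ with $f(\boldsymbol{u}_i)\neq f(\boldsymbol{u}_j)$, the left half of $(\star)$ together with $d_p(\boldsymbol{p}_i,\boldsymbol{p}_j)\ge[2t+2-d_p(\boldsymbol{u}_i,\boldsymbol{u}_j)]^{+}$ yields $d_p(Enc(\boldsymbol{u}_i),Enc(\boldsymbol{u}_j))\ge 2t+1$ (splitting according to whether $d_p(\boldsymbol{u}_i,\boldsymbol{u}_j)\le 2t+1$ or $d_p(\boldsymbol{u}_i,\boldsymbol{u}_j)\ge 2t+2$), so $Enc$ is an FCSPC, forcing $r_p^f(k,t)\le N_p(\boldsymbol{D}_f^{(2)})$. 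For the lower bound, start from an optimal FCSPC of redundancy $r_p^f(k,t)$ and let $\boldsymbol{p}_i=\boldsymbol{p}(\boldsymbol{u}_i)$; whenever $f(\boldsymbol{u}_i)\neq f(\boldsymbol{u}_j)$, the right half of $(\star)$ combined with $d_p(Enc(\boldsymbol{u}_i),Enc(\boldsymbol{u}_j))\ge 2t+1$ gives $d_p(\boldsymbol{p}_i,\boldsymbol{p}_j)\ge 2t-d_p(\boldsymbol{u}_i,\boldsymbol{u}_j)$, and since $d_p(\boldsymbol{p}_i,\boldsymbol{p}_j)\ge 0$ automatically handles the $[\cdot]^{+}$ truncation, the indexed family $(\boldsymbol{p}_1,\ldots,\boldsymbol{p}_{2^k})$ is a $\boldsymbol{D}_f^{(1)}$-code of length $r_p^f(k,t)$. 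The principal obstacle is pinning down the $\pm 1$ (rather than a naive $\pm 2$) slack in $(\star)$; it is here that the cyclic convention in $\pi$ forces the corner-by-corner case analysis, and this is precisely the ``separation of message and redundancy'' complication flagged in the introduction.
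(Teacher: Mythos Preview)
Your proposal is correct and follows essentially the same route as the paper: the paper isolates precisely your concatenation inequality $(\star)$ as a separate lemma, proving it via the identity $d_p(\boldsymbol{c}_1,\boldsymbol{c}_2)-d_p(\boldsymbol{u}_1,\boldsymbol{u}_2)-d_p(\boldsymbol{p}_1,\boldsymbol{p}_2)=d_H(\widetilde{\boldsymbol{u}},\widetilde{\boldsymbol{v}})-d_H(\widehat{\boldsymbol{u}},\widehat{\boldsymbol{v}})$ for the two ``boundary'' pair-vectors and then ruling out $\pm 2$ by a short contradiction (if one Hamming term is $0$ all four corner coordinates agree, forcing the other to be $0$), whereas you obtain the same $\pm 1$ slack by the equivalent $2^4$ corner-pattern enumeration. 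The two applications of $(\star)$ to derive the upper and lower bounds are then identical to the paper's, with your direct argument for the lower bound replacing the paper's contradiction but to the same effect.
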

The following lemma plays an important role in the proving of Theorem \ref{theorem 3.1}.
\begin{lem}
  Let $\boldsymbol{u}=\left(\boldsymbol{u}^{(1)},\boldsymbol{u}^{(2)}\right)\in \mathbb{Z}_{2}^{m+r}$ and  $\boldsymbol{v}=\left(\boldsymbol{v}^{(1)},\boldsymbol{v}^{(2)}\right)\in \mathbb{Z}_{2}^{m+r}$, where $\boldsymbol{u}^{(1)}=\left(u_{1},\ldots,u_{m}\right)$, $\boldsymbol{u}^{(2)}=\left(u_{m+1},\ldots,u_{m+r}\right)$, $\boldsymbol{v}^{(1)}=\left(v_{1},\ldots,v_{m}\right)$ and  $\boldsymbol{v}^{(2)}=\left(v_{m+1},\ldots,v_{m+r}\right)$.
  Then
  $$d_{p}(\boldsymbol{u}^{(1)},\boldsymbol{v}^{(1)})+d_{p}(\boldsymbol{u}^{(2)},\boldsymbol{v}^{(2)})-1\leq d_{p}(\boldsymbol{u},\boldsymbol{v})\leq d_{p}(\boldsymbol{u}^{(1)},\boldsymbol{v}^{(1)})+d_{p}(\boldsymbol{u}^{(2)},\boldsymbol{v}^{(2)})+1.$$
\end{lem}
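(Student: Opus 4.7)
The plan is to reduce both inequalities to a single algebraic identity involving four binary indicators. I first write the three pair representations explicitly: $\pi(\boldsymbol{u}^{(1)})$ consists of the consecutive pairs $(u_i,u_{i+1})$ for $i=1,\ldots,m-1$ together with the wrap-around pair $(u_m,u_1)$; $\pi(\boldsymbol{u}^{(2)})$ consists of the consecutive pairs $(u_i,u_{i+1})$ for $i=m+1,\ldots,m+r-1$ together with the wrap-around pair $(u_{m+r},u_{m+1})$; and $\pi(\boldsymbol{u})$ consists of all of these interior consecutive pairs from $\boldsymbol{u}^{(1)}$ and $\boldsymbol{u}^{(2)}$ together with the transition pair $(u_m,u_{m+1})$ and the outer wrap-around pair $(u_{m+r},u_1)$. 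Since $d_p$ is just the Hamming distance between the pair representations, all the indicator contributions from the shared interior pairs cancel in the difference $d_p(\boldsymbol{u},\boldsymbol{v})-d_p(\boldsymbol{u}^{(1)},\boldsymbol{v}^{(1)})-d_p(\boldsymbol{u}^{(2)},\boldsymbol{v}^{(2)})$, leaving only the contribution from four ``boundary'' pair positions.

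To analyse what remains, let $a,b,c,d\in\{0,1\}$ denote the indicators of $u_1\neq v_1$, $u_m\neq v_m$, $u_{m+1}\neq v_{m+1}$, and $u_{m+r}\neq v_{m+r}$, respectively. A pair $(x,y)$ differs from $(x',y')$ exactly when $x\neq x'$ or $y\neq y'$, so the two extra boundary pairs of $\pi(\boldsymbol{u})$ contribute $b\vee c$ and $a\vee d$, while the wrap-around of $\pi(\boldsymbol{u}^{(1)})$ contributes $a\vee b$ and that of $\pi(\boldsymbol{u}^{(2)})$ contributes $c\vee d$. Using $x\vee y=x+y-xy$ for $x,y\in\{0,1\}$ and simplifying yields the tidy identity
$$(b\vee c)+(a\vee d)-(a\vee b)-(c\vee d)=(a-c)(b-d),$$
whose right-hand side lies in $\{-1,0,1\}$ because $a,b,c,d\in\{0,1\}$. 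This equals $d_p(\boldsymbol{u},\boldsymbol{v})-d_p(\boldsymbol{u}^{(1)},\boldsymbol{v}^{(1)})-d_p(\boldsymbol{u}^{(2)},\boldsymbol{v}^{(2)})$, so both the upper bound and the lower bound of the lemma follow at once.

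The main obstacle will be the careful bookkeeping of the cyclic pair representation: I must verify that the interior consecutive pairs of $\pi(\boldsymbol{u}^{(1)})$ and $\pi(\boldsymbol{u}^{(2)})$ together with the two extra boundary pairs exactly account for all $m+r$ pairs of $\pi(\boldsymbol{u})$ with no overlap, and that small-parameter corner cases (e.g.\ $m=1$ or $r=1$, where a wrap-around pair of one of the halves can coincide with an interior pair) do not disturb the accounting. Once that decomposition is in hand, the proof collapses to the one-line algebraic identity above, making the result essentially a statement about how the symbol-pair construction interacts with concatenation at two ``seams''.
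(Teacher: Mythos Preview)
Your proof is correct and follows essentially the same approach as the paper: both arguments isolate the four ``seam'' pairs $(u_m,u_{m+1})$, $(u_{m+r},u_1)$, $(u_m,u_1)$, $(u_{m+r},u_{m+1})$ and show that the difference $d_p(\boldsymbol{u},\boldsymbol{v})-d_p(\boldsymbol{u}^{(1)},\boldsymbol{v}^{(1)})-d_p(\boldsymbol{u}^{(2)},\boldsymbol{v}^{(2)})$ depends only on these. The paper then rules out the values $\pm 2$ by a short contradiction, whereas your algebraic identity $(b\vee c)+(a\vee d)-(a\vee b)-(c\vee d)=(a-c)(b-d)$ gives the range $\{-1,0,1\}$ directly and is a slightly cleaner finish; the corner cases $m=1$ or $r=1$ you flag cause no trouble, since then $a=b$ (resp.\ $c=d$) and the identity still applies verbatim.
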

\begin{proof}
  According to the definition of pair-distance,
  $d_{p}(\boldsymbol{u},\boldsymbol{v})=|\{i\in[m+r]\,|\,(u_{i},u_{i+1})\neq(v_{i},v_{i+1})\}|$, where
  the subscripts are taken modulo $m+r$. Similarly, we have
  $$d_{p}(\boldsymbol{u}^{(1)},\boldsymbol{v}^{(1)})=|\{i\in[m]\,:\,(u_{i},u_{i+1})\neq(v_{i},v_{i+1})\}|,$$
where the subscripts are taken  modulo $m$;
  $$d_{p}(\boldsymbol{u}^{(2)},\boldsymbol{v}^{(2)})=|\{j\in[r]\,:\,(u_{m+j},u_{m+j+1})\neq(v_{m+j},v_{m+j+1})\}|,$$
  where the subscripts   $js$ are taken modulo $r$.
  Let $\widetilde{\boldsymbol{u}}=((u_{m},u_{m+1}),(u_{m+r},u_{1}))$, $\widehat{\boldsymbol{u}}=((u_{m},u_{1}),(u_{m+r},u_{m+1}))$, $\widetilde{\boldsymbol{v}}=((v_{m},v_{m+1}),(v_{m+r},v_{1}))$ and  $\widehat{\boldsymbol{v}}=((v_{m},v_{1}),(v_{m+r},v_{m+1}))$. It follows that $$d_{p}(\boldsymbol{u},\boldsymbol{v})-d_{p}(\boldsymbol{u}^{(1)},\boldsymbol{v}^{(1)})-d_{p}(\boldsymbol{u}^{(2)},\boldsymbol{v}^{(2)})
  =d_{H}(\widetilde{\boldsymbol{u}},\widetilde{\boldsymbol{v}})-d_{H}(\widehat{\boldsymbol{u}},\widehat{\boldsymbol{v}}).$$
  Since $0\leq d_{H}(\widetilde{\boldsymbol{u}},\widetilde{\boldsymbol{v}})\leq 2$ and  $0\leq d_{H}(\widehat{\boldsymbol{u}},\widehat{\boldsymbol{v}})\leq 2$, one has
  $-2\leq d_{H}(\widetilde{\boldsymbol{u}},\widetilde{\boldsymbol{v}})-d_{H}(\widehat{\boldsymbol{u}},\widehat{\boldsymbol{v}})\leq 2.$
If $d_{H}(\widetilde{\boldsymbol{u}},\widetilde{\boldsymbol{v}})-d_{H}(\widehat{\boldsymbol{u}},\widehat{\boldsymbol{v}})=2$,
then
$d_{H}(\widetilde{\boldsymbol{u}},\widetilde{\boldsymbol{v}})=2\ \text{and}\ d_{H}(\widehat{\boldsymbol{u}},\widehat{\boldsymbol{v}})=0.$
  Since $d_{H}(\widehat{\boldsymbol{u}},\widehat{\boldsymbol{v}})=0$, $$u_{1}=v_{1},\,u_{m}=v_{m},\,u_{m+1}=v_{m+1},\,u_{m+r}=v_{m+r}.$$
  Thus, $d_{H}(\widetilde{\boldsymbol{u}},\widetilde{\boldsymbol{v}})=0$,  contradicting to the fact $d_{H}(\widetilde{\boldsymbol{u}},\widetilde{\boldsymbol{v}})=2$.
Similarly, $d_{H}(\widetilde{\boldsymbol{u}},\widetilde{\boldsymbol{v}})-d_{H}(\widehat{\boldsymbol{u}},\widehat{\boldsymbol{v}})\neq -2.$
This leads to
$-1\leq d_{H}(\widetilde{\boldsymbol{u}},\widetilde{\boldsymbol{v}})-d_{H}(\widehat{\boldsymbol{u}},\widehat{\boldsymbol{v}})\leq 1.$
Therefore, we conclude that
$$d_{p}(\boldsymbol{u}^{(1)},\boldsymbol{v}^{(1)})+d_{p}(\boldsymbol{u}^{(2)},\boldsymbol{v}^{(2)})-1\leq d_{p}(\boldsymbol{u},\boldsymbol{v})\leq d_{p}(\boldsymbol{u}^{(1)},\boldsymbol{v}^{(1)})+d_{p}(\boldsymbol{u}^{(2)},\boldsymbol{v}^{(2)})+1.$$
\end{proof}
We are now in a position to prove Theorem \ref{theorem 3.1}.

\noindent
{\bf Proof of Theorem \ref{theorem 3.1}}~
We first claim that $r_{p}^{f}(k,t)\geq N_{p}(\boldsymbol{D}_{f}^{(1)}(t,\boldsymbol{u}_{1},\ldots,\boldsymbol{u}_{2^{k}})).$
Suppose otherwise that  $r_{p}^{f}(k,t)<N_{p}(\boldsymbol{D}_{f}^{(1)}(t,\boldsymbol{u}_{1},\ldots,\boldsymbol{u}_{2^{k}})).$		
Let $Enc\,:\,\mathbb{Z}_{2}^{k}\rightarrow\mathbb{Z}_{2}^{k+r}\,,\,\boldsymbol{u}_{i}\mapsto\left(\boldsymbol{u}_{i},\boldsymbol{p}_{i}\right)$ with $r=r_{p}^{f}(k,t)$ define an FCSPC for the function $f$. Then, there must exist $i,j\in [2^{k}]$ with $f(\boldsymbol{u}_{i})\neq f(\boldsymbol{u}_{j})$ such that $d_{p}(\boldsymbol{p}_{i},\boldsymbol{p}_{j})<2t-d_{p}(\boldsymbol{u}_{i},\boldsymbol{u}_{j}).$
This forces
$$d_{p}(Enc(\boldsymbol{u}_{i}),Enc(\boldsymbol{u}_{j}))\leq d_{p}(\boldsymbol{u}_{i},\boldsymbol{u}_{j})+d_{p}(\boldsymbol{p}_{i},\boldsymbol{p}_{j})+1<2t+1,$$
which contradicts to the definition of FCSPC. We thus have $r_{p}^{f}(k,t)\geq N_{p}(\boldsymbol{D}_{f}^{(1)}(t,\boldsymbol{u}_{1},\ldots,\boldsymbol{u}_{2^{k}}))$, as claimed.

Then, we aim to show that $r_{p}^{f}(k,t)\leq N_{p}(\boldsymbol{D}_{f}^{(2)}(t,\boldsymbol{u}_{1},\ldots,\boldsymbol{u}_{2^{k}}))$.
For this purpose,
let $\mathcal{P}=\{\boldsymbol{p}_{1},\ldots,\boldsymbol{p}_{2^{k}}\}$ be a $\boldsymbol{D}_{f}^{(2)}(t,\boldsymbol{u}_{1},\ldots,\boldsymbol{u}_{2^{k}})_{p}$-code of length $N_{p}(\boldsymbol{D}_{f}^{(2)}(t,\boldsymbol{u}_{1},\ldots,\boldsymbol{u}_{2^{k}}))$ and define the encoding function
	$Enc\,:\,\mathbb{Z}_{2}^{k}\rightarrow\mathbb{Z}_{2}^{k+r}\,,\,\boldsymbol{u}_{i}\mapsto(\boldsymbol{u}_{i},\boldsymbol{p}_{i})$.	
For any $i,j\in [2^{k}]$ with $f(\boldsymbol{u}_{i})\neq f(\boldsymbol{u}_{j})$, one has
$$d_{p}(Enc(\boldsymbol{u}_{i}),Enc(\boldsymbol{u}_{j}))\geq d_{p}(\boldsymbol{u}_{i},\boldsymbol{u}_{j})+d_{p}(\boldsymbol{p}_{i},\boldsymbol{p}_{j})-1\geq 2t+2-1=2t+1.$$
Thus, the encoding function $Enc$ defines an FCSPC for the function $f$. Hence, $r_{p}^{f}(k,t)\leq N_{p}(\boldsymbol{D}_{f}^{(2)}(t,\boldsymbol{u}_{1},\ldots,\boldsymbol{u}_{2^{k}})).$\qed\\
Theorem \ref{theorem 3.1} provides  upper and lower bounds on the optimal redundancy of FCSPCs designed for generic functions
and transforms the computation of $r_{p}^{f}(k,t)$ to the computation of $N_{p}(\boldsymbol{D}_{f}^{(1)}(t,\boldsymbol{u}_{1},\ldots,\boldsymbol{u}_{2^{k}}))$ and $N_{p}(\boldsymbol{D}_{f}^{(2)}(t,\boldsymbol{u}_{1},\ldots,\boldsymbol{u}_{2^{k}}))$.
If the function $f$ is a constant function,
then $$N_{p}(\boldsymbol{D}_{f}^{(1)}(t,\boldsymbol{u}_{1},\ldots,\boldsymbol{u}_{2^{k}}))=\\r_{p}^{f}(k,t)=
N_{p}(\boldsymbol{D}_{f}^{(2)}(t,\boldsymbol{u}_{1},\ldots,\boldsymbol{u}_{2^{k}}))=0.$$
\subsection{Simplified Bounds of the Optimal Redundancy}
Theorem \ref{theorem 3.1} provides upper and lower bounds for the optimal redundancy, depending on the pair-distance matrices $\boldsymbol{D}_{f}^{(1)}(t,\boldsymbol{u}_{1},\ldots,\boldsymbol{u}_{2^{k}})$ and $\boldsymbol{D}_{f}^{(2)}(t,\boldsymbol{u}_{1},\ldots,\boldsymbol{u}_{2^{k}})$; however, it is not an easy task to get the true values of $N_{p}(\boldsymbol{D}_{f}^{(1)}(t,\boldsymbol{u}_{1},\ldots,\boldsymbol{u}_{2^{k}}))$ and $N_{p}(\boldsymbol{D}_{f}^{(2)}(t,\boldsymbol{u}_{1},\ldots,\boldsymbol{u}_{2^{k}}))$  in general. For ease of evaluation, we derive simplified and possibly sub-optimal bounds by considering   subsets of $\mathbb{Z}_{2}^{k}$.

As a first corollary of Theorem \ref{theorem 3.1}, we find a simplified lower bound on $r_{p}^{f}(k,t)$ which acts on arbitrary subset of $\mathbb{Z}_{2}^{k}$.
\begin{Corollary}\label{corollary 3.1}
  Let $\{\boldsymbol{u}_{1},\ldots,\boldsymbol{u}_{M}\}$ be a subset of $\mathbb{Z}_{2}^{k}$.
  Then, the redundancy of an FCSPC for $f$ satisfies
 $$r_{p}^{f}(k,t)\geq N_{p}(\boldsymbol{D}_{f}^{(1)}\big(t,\boldsymbol{u}_{1},\ldots,\boldsymbol{u}_{M})\big).$$
Particularly, for any function $f$ with $|Im(f)|\geq 2$, $r_{p}^{f}(k,t)\geq 2t-2$.
\end{Corollary}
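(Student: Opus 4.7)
The plan is to derive both parts of the corollary from Theorem \ref{theorem 3.1} via a short monotonicity argument for $N_{p}$. First I would extend the subset $\{\boldsymbol{u}_{1},\ldots,\boldsymbol{u}_{M}\}$ to a full enumeration $\{\boldsymbol{u}_{1},\ldots,\boldsymbol{u}_{2^{k}}\}=\mathbb{Z}_{2}^{k}$ by appending the remaining messages in any order. The key observation is that $\boldsymbol{D}_{f}^{(1)}(t,\boldsymbol{u}_{1},\ldots,\boldsymbol{u}_{M})$ is precisely the top-left $M\times M$ principal submatrix of $\boldsymbol{D}_{f}^{(1)}(t,\boldsymbol{u}_{1},\ldots,\boldsymbol{u}_{2^{k}})$, since every entry depends only on the two indexed messages.

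Next I would record the following monotonicity fact: whenever $\boldsymbol{D}'$ is a principal submatrix of a larger matrix $\boldsymbol{D}$, one has $N_{p}(\boldsymbol{D}')\leq N_{p}(\boldsymbol{D})$. This is immediate from the definitions: any $\boldsymbol{D}_{p}$-code realizing $N_{p}(\boldsymbol{D})$ restricts, by retaining only the codewords indexed by the rows of $\boldsymbol{D}'$, to a $\boldsymbol{D}'_{p}$-code of the same length (the surviving pair-distance constraints are inherited). Chaining this with Theorem \ref{theorem 3.1} yields
$$r_{p}^{f}(k,t)\geq N_{p}\big(\boldsymbol{D}_{f}^{(1)}(t,\boldsymbol{u}_{1},\ldots,\boldsymbol{u}_{2^{k}})\big)\geq N_{p}\big(\boldsymbol{D}_{f}^{(1)}(t,\boldsymbol{u}_{1},\ldots,\boldsymbol{u}_{M})\big),$$
which is the first claim.

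For the specific bound $r_{p}^{f}(k,t)\geq 2t-2$, I would instantiate the above with a carefully chosen $M=2$. Since $|Im(f)|\geq 2$, pick any $\boldsymbol{w},\boldsymbol{w}'\in\mathbb{Z}_{2}^{k}$ with $f(\boldsymbol{w})\neq f(\boldsymbol{w}')$ and walk between them by flipping one coordinate at a time; somewhere along this Hamming path $f$ must change across an edge, producing $\boldsymbol{u}_{1},\boldsymbol{u}_{2}$ with $d_{H}(\boldsymbol{u}_{1},\boldsymbol{u}_{2})=1$ and $f(\boldsymbol{u}_{1})\neq f(\boldsymbol{u}_{2})$. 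By Lemma \ref{lemma 2.3}, $d_{p}(\boldsymbol{u}_{1},\boldsymbol{u}_{2})=2$, so the off-diagonal entries of $\boldsymbol{D}_{f}^{(1)}(t,\boldsymbol{u}_{1},\boldsymbol{u}_{2})$ equal $[2t-2]^{+}$. The remaining step is the elementary observation that $N_{p}(2,D)\geq D$ for $D\geq 1$, which follows because two distinct vectors in $\mathbb{Z}_{2}^{r}$ have pair-read vectors in $(\mathbb{Z}_{2}\times\mathbb{Z}_{2})^{r}$, hence pair-distance at most $r$; the case $2t-2\leq 0$ is vacuous. The only potentially delicate point is the existence of the adjacent $f$-separating pair, but this is just connectivity of the Hamming cube and poses no real obstacle.
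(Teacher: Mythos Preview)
Your proof is correct and follows essentially the same route as the paper: extend the subset to all of $\mathbb{Z}_{2}^{k}$, use the submatrix monotonicity of $N_{p}$, and then combine with Theorem \ref{theorem 3.1}; for the $2t-2$ bound, pick two messages with different $f$-values at pair-distance $2$ and reduce to $N_{p}(2,2t-2)$. The only cosmetic differences are that you justify the existence of the adjacent $f$-separating pair via a Hamming-path argument (the paper simply asserts it) and you argue $N_{p}(2,D)\geq D$ via the trivial upper bound $d_{p}\leq r$, whereas the paper states $N_{p}(2,2t-2)=2t-2$ and exhibits the repetition code.
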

\begin{proof}
  Let $\{\boldsymbol{u}_{1},\ldots,\boldsymbol{u}_{M},\boldsymbol{u}_{M+1},\ldots,\boldsymbol{u}_{2^{k}}\}=\mathbb{Z}_{2}^{k}$
  and $\{\boldsymbol{p}_{1},\ldots,\boldsymbol{p}_{M},\boldsymbol{p}_{M+1},\ldots,\boldsymbol{p}_{2^{k}}\}$ be a $\boldsymbol{D}_{f}^{(1)}(t,\boldsymbol{u}_{1},\ldots,\boldsymbol{u}_{2^{k}})$-code of length $N_{p}(\boldsymbol{D}_{f}^{(1)}(t,\boldsymbol{u}_{1},\ldots,\boldsymbol{u}_{2^{k}}))$,
  then $\{\boldsymbol{p}_{1},\ldots,\boldsymbol{p}_{M}\}$ is a $\boldsymbol{D}_{f}^{(1)}(t,\boldsymbol{u}_{1},\ldots,\boldsymbol{u}_{M})$-code.
  Thus, $$N_{p}(\boldsymbol{D}_{f}^{(1)}(t,\boldsymbol{u}_{1},\ldots,\boldsymbol{u}_{2^{k}}))\geq N_{p}(\boldsymbol{D}_{f}^{(1)}(t,\boldsymbol{u}_{1},\ldots,\boldsymbol{u}_{M})).$$
  By Theorem \ref{theorem 3.1}, $r_{p}^{f}(k,t)\geq N_{p}(\boldsymbol{D}_{f}^{(1)}(t,\boldsymbol{u}_{1},\ldots,\boldsymbol{u}_{2^{k}})).$
Hence, $r_{p}^{f}(k,t)\geq N_{p}(\boldsymbol{D}_{f}^{(1)}(t,\boldsymbol{u}_{1},\ldots,\boldsymbol{u}_{M})).$
Since $|Im(f)|\geq2$, there exists $\boldsymbol{u},\,\boldsymbol{u}^{\prime}\in\mathbb{Z}_{2}^{k}$ such that
$f(\boldsymbol{u})\neq f(\boldsymbol{u}^{\prime})$ and $d_{p}(\boldsymbol{u},\boldsymbol{u}^{\prime})=2$.
By the conclusion above, $$r_{p}^{f}(k,t)\geq N_{p}(\boldsymbol{D}_{f}^{(1)}(t,\boldsymbol{u},\boldsymbol{u}^{\prime}))=N_{p}(2,2t-2).$$
$N_{p}(2,2t-2)=2t-2$, which is attained by the repetition code $\{(0,\ldots,0),(1,\ldots,1)\}$ of length $2t-2$.
\end{proof}
We now provide a simplified upper bound on $r_{p}^{f}(k,t)$ which is obtained by focusing on a representative subset of information vectors.
These  information vectors have different function values and are close in pair-distance. To make things clear, we give  the following definitions.
\begin{Definition}
  The pair-distance between two function values $f_{1},f_{2}\in Im(f)$ is defined as the smallest pair-distance between two information vectors that evaluate to $f_{1}$ and $f_{2}$,  i.e.,
  $$d_{p}^{f}\left(f_1, f_2\right) \triangleq \min _{\boldsymbol{u}_1, \boldsymbol{u}_2 \in \mathbb{Z}_2^k} d_{p}\left(\boldsymbol{u}_1, \boldsymbol{u}_2\right) \text { s.t. } f\left(\boldsymbol{u}_1\right)=f_1~\hbox{and}~f\left(\boldsymbol{u}_2\right)=f_2.$$
\end{Definition}
\begin{Definition}
  The function pair-distance matrices of a function are denoted by the $E\times E$ matrices $\boldsymbol{D}_{f}^{(1)}(t,f_{1},\ldots,f_{E})$ and $\boldsymbol{D}_{f}^{(2)}(t,f_{1},\ldots,f_{E})$ with entries $$\left[\boldsymbol{D}_{f}^{(1)}\left(t, f_1, \ldots, f_E\right)\right]_{i j}= \begin{cases}{\left[2 t-d_{p}^{f}\left(f_i, f_j\right)\right]^{+},} & \text {if } i\neq j, \\ 0, & \text { otherwise, }\end{cases}$$
   and
   $$\left[\boldsymbol{D}_{f}^{(2)}\left(t, f_1, \ldots, f_E\right)\right]_{i j}= \begin{cases}{\left[2 t+2-d_{p}^{f}\left(f_i, f_j\right)\right]^{+},} & \text {if } i\neq j, \\ 0, & \text { otherwise,}\end{cases}$$
   where $E=|Im(f)|$.
\end{Definition}
With these definitions, we can give an upper bound for the optimal redundancy.
\begin{Theorem}\label{theorem 3.2}
  For arbitrary function $f\,:\,\mathbb{Z}_{2}^{k}\rightarrow Im(f)$, we have
  $r_{p}^{f}(k,t)\leq N_{p}(\boldsymbol{D}_{f}^{(2)}\left(t, f_1, \ldots, f_E\right)).$
\end{Theorem}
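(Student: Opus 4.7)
The plan is to construct an FCSPC of redundancy $r = N_p(\boldsymbol{D}_f^{(2)}(t, f_1, \ldots, f_E))$ directly from a $\boldsymbol{D}_f^{(2)}(t, f_1, \ldots, f_E)_p$-code. Pick such a code $\mathcal{P} = \{\boldsymbol{p}_1, \ldots, \boldsymbol{p}_E\} \subseteq \mathbb{Z}_2^r$ realizing the minimum length. The essential idea is that the redundancy only needs to depend on the function value, not on the individual message. So define $Enc : \mathbb{Z}_2^k \to \mathbb{Z}_2^{k+r}$ by $Enc(\boldsymbol{u}) = (\boldsymbol{u}, \boldsymbol{p}_i)$ whenever $f(\boldsymbol{u}) = f_i$. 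Showing that $Enc$ is an FCSPC amounts to verifying $d_p(Enc(\boldsymbol{u}), Enc(\boldsymbol{u}')) \geq 2t+1$ for all pairs with $f(\boldsymbol{u}) \neq f(\boldsymbol{u}')$.

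First I would apply the decomposition lemma proved just before Theorem \ref{theorem 3.1} to bound the pair-distance of the concatenation from below by the sum of pair-distances in the two blocks, minus $1$. That is, for $\boldsymbol{u}, \boldsymbol{u}' \in \mathbb{Z}_2^k$ with $f(\boldsymbol{u}) = f_i \neq f_j = f(\boldsymbol{u}')$,
\[
d_p(Enc(\boldsymbol{u}), Enc(\boldsymbol{u}')) \;\geq\; d_p(\boldsymbol{u}, \boldsymbol{u}') + d_p(\boldsymbol{p}_i, \boldsymbol{p}_j) - 1.
\]
Next I would invoke the definition of $d_p^f$ as a minimum, yielding $d_p(\boldsymbol{u}, \boldsymbol{u}') \geq d_p^f(f_i, f_j)$, and the $\boldsymbol{D}_f^{(2)}$-code property, giving $d_p(\boldsymbol{p}_i, \boldsymbol{p}_j) \geq [2t+2 - d_p^f(f_i, f_j)]^+$.

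The small subtlety, and the only place the argument requires any care, is handling the $[\,\cdot\,]^+$ truncation in the definition of $\boldsymbol{D}_f^{(2)}$. I would split into two cases. If $d_p^f(f_i, f_j) < 2t+2$, the bound on $d_p(\boldsymbol{p}_i, \boldsymbol{p}_j)$ is exactly $2t+2 - d_p^f(f_i, f_j)$, and combining with $d_p(\boldsymbol{u}, \boldsymbol{u}') \geq d_p^f(f_i, f_j)$ the two terms add to at least $2t+2$, so the whole expression is at least $2t+1$. If instead $d_p^f(f_i, f_j) \geq 2t+2$, the matrix entry is $0$, but then $d_p(\boldsymbol{u}, \boldsymbol{u}') \geq 2t+2$ already by itself, so again the lower bound is at least $2t+1$. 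This confirms the FCSPC property and hence $r_p^f(k,t) \leq r = N_p(\boldsymbol{D}_f^{(2)}(t, f_1, \ldots, f_E))$.

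The main obstacle is really just the off-by-one arising from the circular nature of the pair-distance: the extra $-1$ in the decomposition lemma is exactly why the function pair-distance matrix here uses the threshold $2t+2$ rather than $2t+1$ (as one might naively guess by analogy with the Hamming-distance case of \cite{LBWY}). Keeping the $+2$ (as opposed to $+1$) consistent with the corresponding choice in $\boldsymbol{D}_f^{(2)}$ of Theorem \ref{theorem 3.1} is what makes the two sides of the case split line up, so I would be careful to state the two cases explicitly to make this clear.
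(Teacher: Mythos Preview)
Your proposal is correct and follows essentially the same route as the paper: assign the same redundancy vector $\boldsymbol{p}_i$ to every message with function value $f_i$, then use the decomposition lemma together with $d_p(\boldsymbol{u},\boldsymbol{u}')\geq d_p^f(f_i,f_j)$ and the $\boldsymbol{D}_f^{(2)}$-code property to reach $2t+1$. The only difference is that you handle the $[\cdot]^{+}$ truncation by an explicit case split, whereas the paper tacitly writes $d_p(\boldsymbol{p}_i,\boldsymbol{p}_j)\geq 2t+2-d_p^f(f_i,f_j)$ and absorbs the large-distance case into the same line; your version is slightly more careful but not a different argument.
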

\begin{proof}
  We describe the process of  how to construct an FCSPC for the function $f$.
Denote by $\boldsymbol{p}_{i}$ the redundancy vector assigned to all
$\boldsymbol{u}\in\mathbb{Z}_{2}^{k}$ with $f(\boldsymbol{u})=f_{i}.$
Therefore, two information vectors with the same function value have the same redundancy vector.
Then, we choose $\boldsymbol{p}_{1},\ldots,\boldsymbol{p}_{E}$ such that $$d_{p}(\boldsymbol{p}_{i},\boldsymbol{p}_{j})\geq 2t+2-d_{p}^{f}(f_{i},f_{j}),\,i,j\in[E].$$  It follows that for any $\boldsymbol{u}_{i},\,\boldsymbol{u}_{j}\in\mathbb{Z}_{2}^{k}$ with $f(\boldsymbol{u}_{i})=f_{i}$, $f(\boldsymbol{u}_{j})=f_{j}$, $f_{i}\neq f_{j}$, we have $$d_{p}(Enc(\boldsymbol{u}_{i}),Enc(\boldsymbol{u}_{j}))\geq d_{p}(\boldsymbol{u}_{i},\boldsymbol{u}_{j})+d_{p}(\boldsymbol{p}_{i},\boldsymbol{p}_{j})-1\geq 2t+2-1=2t+1.$$According to the definition of FCSPC, we have constructed an FCSPC. We also guarantee the existence of such parity vectors $\boldsymbol{p}_{1},\ldots,\boldsymbol{p}_{E}$ if they have length $N_{p}(\boldsymbol{D}_{f}^{(2)}\left(t, f_1, \ldots, f_E\right)).$
  Hence, $r_{p}^{f}(k,t)\leq N_{p}(\boldsymbol{D}_{f}^{(2)}\left(t, f_1, \ldots, f_E\right)).$
\end{proof}
Since the sizes of $\boldsymbol{D}_{f}^{(1)}\big(t,\boldsymbol{u}_{1},\ldots,\boldsymbol{u}_{M})$ and $\boldsymbol{D}_{f}^{(2)}\left(t, f_1, \ldots, f_E\right)$ are smaller than those of  $\boldsymbol{D}_{f}^{(1)}(t,\boldsymbol{u}_{1},\ldots,\boldsymbol{u}_{2^{k}})$ and $\boldsymbol{D}_{f}^{(2)}(t,\boldsymbol{u}_{1},\ldots,\boldsymbol{u}_{2^{k}})$,
the values of $N_{p}(\boldsymbol{D}_{f}^{(1)}\big(t,\boldsymbol{u}_{1},\ldots,\boldsymbol{u}_{M}))$ and
$N_{p}(\boldsymbol{D}_{f}^{(2)}\left(t, f_1, \ldots, f_E\right))$ are easier to evaluate in general.
\subsection{Bounds on $N_{p}(\boldsymbol{D})$}
In order to obtain the optimal redundancy by applying the results above, we must get the value of $N_{p}(\boldsymbol{D})$.
Though it is difficult to get the exact value of $N_{p}(\boldsymbol{D})$, we can provide some upper and lower bounds of $N_{p}(\boldsymbol{D})$. For convenience, we firstly give a definition.
\begin{Definition}
  We call a symmetric matrix $\boldsymbol{D}\in\mathbb{N}_{0}^{M\times M}$ distance matrix if $[\boldsymbol{D}]_{ij}\geq 0$ and $[\boldsymbol{D}]_{ii}=0$.
\end{Definition}
Then, we present a lower bound on $N_{p}(\boldsymbol{D})$.
\begin{lem}\label{lemma 3.2}
  For any distance matrix $\boldsymbol{D}\in \mathbb{N}_{0}^{M\times M}$, we have
  $$N_p(\boldsymbol{D})\geq \begin{cases}{\frac{8}{3(M^{2}-1)}\sum\limits_{i,j,i<j}[\boldsymbol{D}]_{ij},} & \text{M is odd},\\
			{\frac{8}{3M^{2}}\sum\limits_{i,j,i<j}[\boldsymbol{D}]_{ij},} & M\equiv 0(\bmod 4),\\
			{\frac{8}{3M^{2}-4}\sum\limits_{i,j,i<j}[\boldsymbol{D}]_{ij},} & M\equiv 2(\bmod 4).
		 \end{cases}$$
\end{lem}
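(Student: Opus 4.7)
I will take $\mathcal{P}=\{\boldsymbol{p}_{1},\ldots,\boldsymbol{p}_{M}\}\subseteq\mathbb{Z}_{2}^{r}$ to be a $\boldsymbol{D}_{p}$-code with $r=N_{p}(\boldsymbol{D})$ and prove the lemma by a double-counting argument. Since $d_{p}(\boldsymbol{p}_{i},\boldsymbol{p}_{j})\geq[\boldsymbol{D}]_{ij}$ for every $i<j$, we immediately have
$$\sum_{i<j}[\boldsymbol{D}]_{ij}\;\leq\;\sum_{i<j}d_{p}(\boldsymbol{p}_{i},\boldsymbol{p}_{j}),$$
so the heart of the argument is to upper-bound the right-hand side in terms of $r$ and $M$ alone.

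For this, I would decompose $d_{p}$ position by position. For each $\ell\in[r]$ the symbol-pair $\bigl(p_{i,\ell},p_{i,\ell+1\bmod r}\bigr)$ takes a value in the four-element alphabet $\{0,1\}^{2}$; letting $n_{a}^{(\ell)}$ count how many rows $i$ realize the value $a\in\{0,1\}^{2}$, the contribution of position $\ell$ to $\sum_{i<j}d_{p}(\boldsymbol{p}_{i},\boldsymbol{p}_{j})$ equals the number of unordered pairs of rows whose pair-value at position $\ell$ differs, namely
$$\binom{M}{2}-\sum_{a\in\{0,1\}^{2}}\binom{n_{a}^{(\ell)}}{2}\;=\;\frac{1}{2}\Bigl(M^{2}-\sum_{a}\bigl(n_{a}^{(\ell)}\bigr)^{2}\Bigr).$$
Maximizing this quantity thus reduces to the elementary integer problem of minimizing $\sum_{a}n_{a}^{2}$ over non-negative integers with $\sum_{a}n_{a}=M$.

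The remaining step is a short case analysis on $M\bmod 4$, dictated by how close to balanced a four-cell distribution of $M$ items can be. For $M=4q$ the balanced tuple $(q,q,q,q)$ gives $\min\sum_{a}n_{a}^{2}=M^{2}/4$; for $M=4q+2$ the tuple $(q{+}1,q{+}1,q,q)$ gives $\min=(M^{2}+4)/4$; and for $M$ odd the near-balanced $(q{+}1,q,q,q)$ or $(q{+}1,q{+}1,q{+}1,q)$ gives $\min=(M^{2}+3)/4$. Substituting into the per-position bound yields caps of $3M^{2}/8$, $(3M^{2}-4)/8$ and $3(M^{2}-1)/8$ respectively. Summing over the $r$ positions gives
$$\sum_{i<j}[\boldsymbol{D}]_{ij}\;\leq\;r\cdot C_{M},$$
where $C_{M}$ is one of these three values, and rearranging produces the three cases of the displayed inequality. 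The only mildly technical point I anticipate is verifying that the most balanced partition really minimizes $\sum_{a}n_{a}^{2}$, which is a standard convexity/smoothing argument; a sanity check worth noting is that the cyclic wrap-around at position $r$ does not restrict the distribution of pair-values at any single fixed $\ell$, since all four elements of $\{0,1\}^{2}$ can appear freely across the $M$ rows.
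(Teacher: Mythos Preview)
Your proposal is correct and follows essentially the same Plotkin-style argument as the paper: both decompose $\sum_{i<j}d_{p}(\boldsymbol{p}_{i},\boldsymbol{p}_{j})$ column by column over the four-letter pair alphabet, rewrite the per-column contribution as $\tfrac{1}{2}\bigl(M^{2}-\sum_{a}n_{a}^{2}\bigr)$, and then minimize $\sum_{a}n_{a}^{2}$ over integer partitions of $M$ into four parts via the same $M\bmod 4$ case analysis. Your presentation is slightly more streamlined (working directly with $\sum_{i<j}$ and invoking convexity for the extremal partition), but the substance is identical.
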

\begin{proof}
  Let $\mathcal{P}=\{\boldsymbol{p}_{1},\ldots,\boldsymbol{p}_{M}\}$ be a $\boldsymbol{D}_{p}$-code of length $r$ and $\boldsymbol{P}$ be the $M\times r$ matrix whose rows are the symbol-pair read vectors of codewords of $\mathcal{P}$.
Let $S=\sum\limits_{i,j}d_{p}(\boldsymbol{p}_{i},\boldsymbol{p}_{j})$ and $n_{k(\alpha,\beta)}$ be the number of times $(\alpha,\beta)\in\mathbb{Z}_{2}\times\mathbb{Z}_{2}$ occurs in the $k$th column of $\boldsymbol{P}$.
Then, \begin{equation*}
			\setlength{\abovedisplayskip}{3pt}
			\setlength{\belowdisplayskip}{4pt}
			\begin{aligned}
				S  = & \sum_{k=1}^{r}[n_{k(0,0)}(M-n_{k(0,0)})+n_{k(0,1)}(M-n_{k(0,1)})+n_{k(1,0)}(M-n_{k(1,0)})+n_{k(1,1)}(M-n_{k(1,1)})]\\
				= & \sum_{k=1}^{r}[M(n_{k(0,0)}+n_{k(0,1)}+n_{k(1,0)}+n_{k(1,1)})-n_{k(0,0)}^{2}--n_{k(0,1)}^{2}-n_{k(1,0)}^{2}-n_{k(1,1)}^{2}]\\
				 = &\sum_{k=1}^{r}[M^{2}-n_{k(0,0)}^{2}--n_{k(0,1)}^{2}-n_{k(1,0)}^{2}-n_{k(1,1)}^{2}]\\
				 = & \sum_{k=1}^{r}[M^{2}-\sum_{(\alpha,\beta)\in\mathbb{Z}_{2}\times\mathbb{Z}_{2}}n_{k(\alpha,\beta)}^{2}].
			\end{aligned}
\end{equation*}
For each $1\leq k\leq r$, one has
		\begin{equation*}
			\setlength{\abovedisplayskip}{4pt}
			\setlength{\belowdisplayskip}{3pt}
			\begin{aligned}
				M^{2}= & [\sum_{(\alpha,\beta)\in\mathbb{Z}_{2}\times\mathbb{Z}_{2}}n_{k(\alpha,\beta)}]^{2}\\ = & \sum_{(\alpha,\beta)\in\mathbb{Z}_{2}\times\mathbb{Z}_{2}}n_{k(\alpha,\beta)}^{2}+2n_{k(0,0)}n_{k(0,1)}+2n_{k(0,0)}n_{k(1,0)}+ 2n_{k(0,0)}n_{k(1,1)}\\ & +2n_{k(0,1)}n_{k(1,0)}+2n_{k(0,1)}n_{k(1,1)}+2n_{k(1,0)}n_{k(1,1)}.
			\end{aligned}
		\end{equation*}
Then,\begin{equation*}
		\setlength{\abovedisplayskip}{3pt}
		\setlength{\belowdisplayskip}{3pt}
		\begin{aligned}
			S= & \sum_{k=1}^{r}[2n_{k(0,0)}n_{k(0,1)}+2n_{k(0,0)}n_{k(1,0)}+2n_{k(0,0)}n_{k(1,1)}+\\ & 2n_{k(0,1)}n_{k(1,0)}+2n_{k(0,1)}n_{k(1,1)}+2n_{k(1,0)}n_{k(1,1)}].
		\end{aligned}
	\end{equation*}
Hence,\begin{equation*}
			\setlength{\abovedisplayskip}{3pt}
			\setlength{\belowdisplayskip}{5pt}
			\begin{aligned}
				\sum_{i,j,i<j}d_{p}(\boldsymbol{p}_{i},\boldsymbol{p}_{j}) =\frac{1}{2}S
				= & \sum_{k=1}^{r}[n_{k(0,0)}n_{k(0,1)}+n_{k(0,0)}n_{k(1,0)}+n_{k(0,0)}n_{k(1,1)}\\
				& +n_{k(0,1)}n_{k(1,0)}+n_{k(0,1)}n_{k(1,1)}+n_{k(1,0)}n_{k(1,1)}].
			\end{aligned}
		\end{equation*}

If $M\equiv 0 \pmod 4$, the right hand side of the equation above is maximized when $\{n_{k(0,0)},n_{k(0,1)},n_{k(1,0)},n_{k(1,1)}\}=\{\frac{M}{4},\frac{M}{4},\frac{M}{4},\frac{M}{4}\}.$
	Thus, \begin{equation}\nonumber
		\setlength{\abovedisplayskip}{4pt}
		\setlength{\belowdisplayskip}{4pt}
		\sum_{i,j,i<j}[\boldsymbol{D}]_{ij}\leq\sum_{i,j,i<j}d_{p}(\boldsymbol{p}_{i},\boldsymbol{p}_{j})\leq\frac{3}{8}M^{2}r.
	\end{equation}
We have \begin{equation}\nonumber
		\setlength{\abovedisplayskip}{4pt}
		\setlength{\belowdisplayskip}{4pt}
		N_{p}(\boldsymbol{D})\geq\frac{8}{3M^{2}}\sum_{i,j,i<j}[\boldsymbol{D}]_{ij}.
	\end{equation}
If $M\equiv 1 \pmod 4$, the right hand side of the equation above is maximized when $\{n_{k(0,0)},n_{k(0,1)},n_{k(1,0)},n_{k(1,1)}\}=\{\frac{M-1}{4},\frac{M-1}{4},\frac{M-1}{4},\frac{M+3}{4}\}.$ Similarly, we have \begin{equation}\nonumber
			\setlength{\abovedisplayskip}{4pt}
			\setlength{\belowdisplayskip}{4pt}
			N_{p}(\boldsymbol{D})\geq\frac{8}{3(M^{2}-1)}\sum_{i,j,i<j}[\boldsymbol{D}]_{ij}.
		\end{equation}
	If $M\equiv 2 \pmod 4$, the right hand side of the equation above is maximized when $\{n_{k(0,0)},n_{k(0,1)},n_{k(1,0)},n_{k(1,1)}\}=\{\frac{M-2}{4},\frac{M-2}{4},\frac{M+2}{4},\frac{M+2}{4}\}.$ This leads to
 \begin{equation}\nonumber
		\setlength{\abovedisplayskip}{4pt}
		\setlength{\belowdisplayskip}{4pt}
		N_{p}(\boldsymbol{D})\geq\frac{8}{3M^{2}-4}\sum_{i,j,i<j}[\boldsymbol{D}]_{ij}.
	\end{equation}
	If $M\equiv 3 \pmod 4$, the right hand side of the equation above is maximized when $\{n_{k(0,0)},n_{k(0,1)},n_{k(1,0)},n_{k(1,1)}\}=\{\frac{M+1}{4},\frac{M+1}{4},\frac{M+1}{4},\frac{M-3}{4}\}.$  We conclude that \begin{equation}\nonumber
		\setlength{\abovedisplayskip}{4pt}
		\setlength{\belowdisplayskip}{2pt}
		N_{p}(\boldsymbol{D})\geq\frac{8}{3(M^{2}-1)}\sum_{i,j,i<j}[\boldsymbol{D}]_{ij}.
	\end{equation}
\end{proof}
\begin{Remark}{\rm
  In fact, Lemma \ref{lemma 3.2} is a generalization of the Plotkin's bound. Indeed, take $[\boldsymbol{D}]_{ij}=D$ for all $i\neq j$, Lemma \ref{lemma 3.2} implies $N_{p}(M,D)\geq\frac{4(M-1)}{3M}D$, a variant of Plotkin's bound for symbol-pair codes.}
\end{Remark}
Apart from the lower bound, we also find an upper bound on $N_{p}(\boldsymbol{D})$.
\begin{lem}\label{lemma 3.3}
  For any distance matrix $\boldsymbol{D}\in\mathbb{N}_{0}^{M\times M}$ and any permutation $\pi\,:\,[M]\rightarrow[M]$, $$N_{p}(\boldsymbol{D})\leq\min_{r\in\mathbb{N}}\{r\,:\,2^{r}>\max_{j\in[M]}\sum_{i=1}^{j-1}B(r,[\boldsymbol{D}]_{\pi(i)\pi(j)-1})\},$$
	where $B(r,d)$ is the number of words with pair-distance $d$ or less from $\boldsymbol{x}\in\mathbb{Z}_{2}^{r}$.
\end{lem}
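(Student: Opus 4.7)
The plan is to establish this Gilbert--Varshamov style bound by a greedy construction. Fix any $r\in\mathbb{N}$ satisfying $2^{r}>\max_{j\in[M]}\sum_{i=1}^{j-1}B(r,[\boldsymbol{D}]_{\pi(i)\pi(j)}-1)$. I aim to build a $\boldsymbol{D}_{p}$-code of length $r$ by choosing codewords one at a time in the order prescribed by $\pi$: pick $\boldsymbol{p}_{\pi(1)}\in\mathbb{Z}_{2}^{r}$ arbitrarily, and at step $j\geq 2$, assuming $\boldsymbol{p}_{\pi(1)},\ldots,\boldsymbol{p}_{\pi(j-1)}$ are already chosen, select $\boldsymbol{p}_{\pi(j)}\in\mathbb{Z}_{2}^{r}$ so that $d_{p}(\boldsymbol{p}_{\pi(j)},\boldsymbol{p}_{\pi(i)})\geq[\boldsymbol{D}]_{\pi(i)\pi(j)}$ holds for every $i<j$ (using that $\boldsymbol{D}$ is symmetric).

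The key observation is that the pair-distance is translation-invariant on $\mathbb{Z}_{2}^{r}$, since $\pi(\boldsymbol{x}+\boldsymbol{z})$ and $\pi(\boldsymbol{y}+\boldsymbol{z})$ differ in a coordinate exactly when $\pi(\boldsymbol{x})$ and $\pi(\boldsymbol{y})$ do. Consequently, for each $i<j$ the set of $\boldsymbol{x}\in\mathbb{Z}_{2}^{r}$ with $d_{p}(\boldsymbol{x},\boldsymbol{p}_{\pi(i)})\leq[\boldsymbol{D}]_{\pi(i)\pi(j)}-1$ has cardinality exactly $B(r,[\boldsymbol{D}]_{\pi(i)\pi(j)}-1)$, which is the forbidden ball around $\boldsymbol{p}_{\pi(i)}$ at step $j$. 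By a union bound, the total number of vectors in $\mathbb{Z}_{2}^{r}$ that are forbidden at step $j$ is at most $\sum_{i=1}^{j-1}B(r,[\boldsymbol{D}]_{\pi(i)\pi(j)}-1)$, and our assumption on $r$ ensures that this quantity is strictly less than $2^{r}=|\mathbb{Z}_{2}^{r}|$. Therefore, some admissible $\boldsymbol{p}_{\pi(j)}$ exists.

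By induction on $j$, the procedure succeeds for every $j\in[M]$ and produces a set $\mathcal{P}=\{\boldsymbol{p}_{1},\ldots,\boldsymbol{p}_{M}\}\subseteq\mathbb{Z}_{2}^{r}$ with $d_{p}(\boldsymbol{p}_{i},\boldsymbol{p}_{j})\geq[\boldsymbol{D}]_{ij}$ for all $i,j\in[M]$, i.e., a $\boldsymbol{D}_{p}$-code of length $r$. Hence $N_{p}(\boldsymbol{D})\leq r$, and taking the minimum over all admissible $r$ yields the stated bound.

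The only delicate point in the argument is the translation-invariance of the pair-ball size; I would include a brief justification since it is what makes the union bound meaningful (without it, the ball around $\boldsymbol{p}_{\pi(i)}$ need not have size $B(r,d)$). Everything else is a routine greedy/sphere-covering argument.
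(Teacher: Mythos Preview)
Your proposal is correct and follows essentially the same greedy Gilbert--Varshamov argument as the paper: both choose the codewords one at a time in the order given by $\pi$, exclude the forbidden pair-balls around the previously selected codewords, and use the hypothesis on $r$ to guarantee an admissible choice at each step. Your explicit justification of translation-invariance of the pair distance (hence of the pair-ball size $B(r,d)$) is a welcome addition that the paper leaves implicit.
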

\begin{proof}
  We construct a $\boldsymbol{D}_{p}$-code of length $r$ by iteratively selecting valid codewords.
 Assume first for simplicity that $\pi(i)=i$. Start by choosing an arbitrary codeword $\boldsymbol{p}_{1}\in\mathbb{Z}_{2}^{r}$.
Then, choose a valid codeword $\boldsymbol{p}_{2}$ as follows: Since the pair-distance of $\boldsymbol{p}_{1}$ and $\boldsymbol{p}_{2}$ needs to be at least $[\boldsymbol{D}]_{12}$, we choose an arbitrary $\boldsymbol{p}_{2}$ such that $d_{p}(\boldsymbol{p}_{1},\boldsymbol{p}_{2})\geq [\boldsymbol{D}]_{12}$. Such a codeword $\boldsymbol{p}_{2}$ exists, if the length satisfies $2^{r}>B(r,[\boldsymbol{D}]_{12}-1)$.

  Next, we choose the third codeword $\boldsymbol{p}_{3}$. Similarly as before, we need to have $d_{p}(\boldsymbol{p}_{1},\boldsymbol{p}_{3})\geq [\boldsymbol{D}]_{13}$ and $d_{p}(\boldsymbol{p}_{2},\boldsymbol{p}_{3})\geq [\boldsymbol{D}]_{23}$. If $2^{r}>B(r,[\boldsymbol{D}]_{13}-1)+B(r,[\boldsymbol{D}]_{23}-1)$, we can guarantee the existence of such a codeword $\boldsymbol{p}_{3}$.

  The lemma then follows by iteratively selecting the remaining codewords $\boldsymbol{p}_{j}$ such that $d_{p}(\boldsymbol{p}_{i},\boldsymbol{p}_{j})\geq [\boldsymbol{D}]_{ij}$ for all $i<j$. Under the condition of the lemma, we can guarantee the existence of all codewords. Since the codewords can be chosen in an arbitrary order, the lemma holds for any order $\pi$ in which the codewords are selected.
\end{proof}
\begin{Remark}{\rm
Lemma \ref{lemma 3.3} is a generalization of well-known Gilbert-Varshamov bound. Taking $[\boldsymbol{D}]_{ij}=D$, we get
 $N_{p}(M,D)\leq \min\limits_{r\in\mathbb{N}}\{r\,|\,2^{r}>(M-1)B(r,D-1)\}$ , a variant of the Gilbert-Varshamov bound.}
\end{Remark}
The next lemma connects $N_{p}(M,D)$ with $N(M,D)$.
\begin{lem}\label{hammingsymbol}
  Let $M,D$ be positive integers with $M> 2$, $D\geq 2$, we have  $N_{p}(M,D)\leq N(M,D-1).$
\end{lem}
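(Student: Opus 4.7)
The plan is to take an optimal Hamming-type irregular-distance code of length $r = N(M, D-1)$ and argue that the same set of codewords already has pair-distance at least $D$, so it qualifies as an irregular-pair-distance code. Concretely, I would fix $r = N(M, D-1)$ and pick a code $\mathcal{C} = \{c_1, \ldots, c_M\} \subset \mathbb{Z}_2^r$ with $d_H(c_i, c_j) \geq D-1$ for all $i \neq j$. Then I would invoke Lemma \ref{lemma 2.3}: for every pair $i \neq j$, either $d_H(c_i, c_j) \in \{0, r\}$ in which case $d_p(c_i, c_j) = d_H(c_i, c_j)$, or else $d_p(c_i, c_j) \geq d_H(c_i, c_j) + 1 \geq D$.

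The case $d_H(c_i, c_j) = 0$ is ruled out by distinctness of codewords, so the only thing to worry about is the \emph{complementary} case $d_H(c_i, c_j) = r$, where the pair-distance drops to just $r$ and may in principle be smaller than $D$. This is where the hypothesis $M > 2$ enters, and it is the main (indeed, essentially only) obstacle in the argument. My plan for handling it is a short counting step: if $c_i$ and $c_j$ are complements of length $r$, then for any third codeword $c_\ell$ (which exists since $M > 2$) one has $d_H(c_i, c_\ell) + d_H(c_j, c_\ell) = r$, and since both summands are at least $D-1$, this forces $r \geq 2D - 2 \geq D$ (using $D \geq 2$). Thus even in the complementary case $d_p(c_i, c_j) = r \geq D$.

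Combining the two cases gives $d_p(c_i, c_j) \geq D$ for all $i \neq j$, so $\mathcal{C}$ is a $\boldsymbol{D}_p$-code with $[\boldsymbol{D}]_{ij} = D$ for $i \neq j$, of length $r = N(M, D-1)$. By the definition of $N_p(M, D)$ this yields $N_p(M, D) \leq N(M, D-1)$, which is exactly the claim. The whole proof is therefore a direct application of Lemma \ref{lemma 2.3} together with a small pigeonhole-style argument in the complementary case, and I expect no hidden subtleties beyond verifying that the hypothesis $M > 2$, $D \geq 2$ is precisely what makes the complementary case benign.
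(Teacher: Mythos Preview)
Your proof is correct and follows the same overall route as the paper: take an optimal $(M,D-1)$ Hamming-distance code of length $r=N(M,D-1)$, apply Lemma~\ref{lemma 2.3} to convert Hamming distance to pair distance, and separately handle the extreme case $d_H(c_i,c_j)=r$. The only difference is how you dispose of that extreme case. The paper invokes the Singleton bound to get $2^{r-(D-1)+1}\geq M>2$, hence $r\geq D$ globally, so in particular $d_p(c_i,c_j)=r\geq D$ when $c_i,c_j$ are complementary. You instead argue directly from the existence of a third codeword $c_\ell$: since $c_i,c_j$ are complements, $d_H(c_i,c_\ell)+d_H(c_j,c_\ell)=r$, and both summands are at least $D-1$, forcing $r\geq 2D-2\geq D$. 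Your version is slightly more self-contained (no external bound needed) and yields the marginally sharper $r\geq 2D-2$ in the complementary case, while the paper's Singleton-bound step gives $r\geq D$ unconditionally; for the purpose of this lemma both are equally adequate.
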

\begin{proof}
  	Let $r=N(M,D-1)$, then there exists a code $\mathcal{P}=\{\boldsymbol{p}_{1},\ldots,\boldsymbol{p}_{M}\}$ of length $r$ such that $d_{H}(\boldsymbol{p}_{i},\boldsymbol{p}_{j})\geq D-1$ for any $i,j\in [M]$ with $i\neq j$.
By the Singleton bound of Hamming metric \cite{HP}, $2^{r-(D-1)+1}\geq M>2$,  giving $r>D-1.$
If $d_{H}(\boldsymbol{p}_{i},\boldsymbol{p}_{j})<r$, then $$d_{p}(\boldsymbol{p}_{i},\boldsymbol{p}_{j})\geq d_{H}(\boldsymbol{p}_{i},\boldsymbol{p}_{j})+1\geq D-1+1=D.$$
If $d_{H}(\boldsymbol{p}_{i},\boldsymbol{p}_{j})=r$, then $d_{p}(\boldsymbol{p}_{i},\boldsymbol{p}_{j})=d_{H}(\boldsymbol{p}_{i},\boldsymbol{p}_{j})=r\geq D.$
It follows that $N_{p}(M,D)\leq r=N(M,D-1).$
\end{proof}
\begin{Remark}{\rm
By Lemma \ref{lemma 2.3}, $d_{H}(\boldsymbol{x},\boldsymbol{y})+1\leq d_{p}(\boldsymbol{x},\boldsymbol{y})\leq 2d_{H}(\boldsymbol{x},\boldsymbol{y})$ except in the extreme cases. When $M=2$, we cannot avoid the extreme cases, so we cannot have $N_{p}(M,D)\leq N(M,D-1).$ In fact, when $D\geq 2$, $N_{p}(M,D)=D$ and $N(M,D-1)=D-1$.}
\end{Remark}
Combining Lemmas \ref{lemma 2.1}, \ref{lemma 2.2} and \ref{hammingsymbol}, we can derive two upper bounds of $N_{p}(M,D)$ easily.
\begin{lem}\label{lemma 3.5}
  Let $D\geq 2$ be a positive integer such that there exists a Hadamard matrix of order $2D-2$ and $M\leq 4D-4$. We then have $N_{p}(M,D)\leq 2D-2.$
\end{lem}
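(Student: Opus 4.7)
The plan is to derive this as an immediate corollary of Lemma \ref{lemma 2.1} (the Hadamard upper bound on $N(M,D)$) chained with Lemma \ref{hammingsymbol} (the comparison $N_p(M,D) \leq N(M,D-1)$). Concretely, observe that the hypothesis given here, namely that a Hadamard matrix of order $2D-2$ exists and $M \leq 4D-4$, is exactly the hypothesis of Lemma \ref{lemma 2.1} with the parameter $D$ replaced by $D-1$. Applying Lemma \ref{lemma 2.1} with parameter $D-1$ therefore yields $N(M, D-1) \leq 2(D-1) = 2D-2$. Then, provided $M > 2$ and $D \geq 2$, Lemma \ref{hammingsymbol} gives $N_p(M,D) \leq N(M,D-1)$. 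Chaining the two inequalities produces $N_p(M,D) \leq 2D-2$, which is the desired conclusion.

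The only point requiring a separate argument is the boundary case $M \leq 2$, since Lemma \ref{hammingsymbol} is stated under the assumption $M > 2$. If $M = 1$, the claim is vacuous: one can take the trivial code $\{\boldsymbol{0}\}$ of length $0$, so $N_p(1,D) = 0 \leq 2D-2$. If $M = 2$, the Remark following Lemma \ref{hammingsymbol} records that $N_p(2,D) = D$, and $D \leq 2D-2$ holds precisely when $D \geq 2$, which is part of the current hypothesis. Thus the small-$M$ cases are absorbed without additional work.

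There is essentially no main obstacle in this lemma; it is a clean composition of two previously established bounds, and the only care needed is the bookkeeping of the parameter shift (replacing $D$ by $D-1$ when invoking the Hadamard-based bound) and the disposal of the boundary values $M \in \{1,2\}$. Consequently I would write the proof as a short two-line chain of inequalities, with a one-sentence aside handling $M \leq 2$.
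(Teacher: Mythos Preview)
Your proposal is correct and follows precisely the route the paper intends: the paper states just before Lemma~\ref{lemma 3.5} that the result follows by combining Lemma~\ref{lemma 2.1} with Lemma~\ref{hammingsymbol}, and does not supply a separate proof. Your explicit parameter shift $D \mapsto D-1$ and your handling of the boundary cases $M \in \{1,2\}$ (which the paper omits) make the argument slightly more careful than the paper's one-line justification, but the approach is identical.
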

\begin{lem}\label{lemma 3.6}
  For any positive integers $M,D$ with $D\geq 11$ and $M\leq (D-1)^{2}$, $$N_{p}(M,D)\leq\frac{2D-4}{1-2\sqrt{ln(D-1)/(D-1)}}.$$
\end{lem}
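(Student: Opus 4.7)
The plan is to chain two earlier results: Lemma \ref{hammingsymbol}, which bounds $N_p(M,D)$ by a Hamming-metric quantity $N(M,D-1)$, and Lemma \ref{lemma 2.2}, which supplies an explicit upper bound on $N(M,D)$ when $D$ is large enough and $M$ is not too big. Since Lemma \ref{lemma 3.6} is stated as a corollary of these two lemmas (as the surrounding text explicitly announces), the argument should require nothing beyond a substitution of parameters.

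First I would invoke Lemma \ref{hammingsymbol}. Under the hypothesis $D\geq 11\geq 2$ one certainly has $D\geq 2$, and from $M\leq (D-1)^2$ with $D\geq 11$ we get $M\leq 100$, in particular $M>2$ may be assumed without loss of generality (the degenerate case $M\leq 2$ is trivial since $N_p(M,D)\leq D$ by a repetition code, and $D\leq \tfrac{2D-4}{1-2\sqrt{\ln(D-1)/(D-1)}}$ holds easily for $D\geq 11$). So Lemma \ref{hammingsymbol} yields
\[
N_p(M,D)\leq N(M,D-1).
\]

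Next I would apply Lemma \ref{lemma 2.2} with the parameter $D$ there replaced by $D-1$. The hypotheses of that lemma require $D-1\geq 10$ and $M\leq (D-1)^2$, which are exactly our assumptions $D\geq 11$ and $M\leq (D-1)^2$. Lemma \ref{lemma 2.2} therefore gives
\[
N(M,D-1)\leq \frac{2(D-1)-2}{1-2\sqrt{\ln(D-1)/(D-1)}}=\frac{2D-4}{1-2\sqrt{\ln(D-1)/(D-1)}}.
\]
Chaining the two inequalities gives the claim.

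I do not anticipate a serious obstacle: the only subtlety is a careful bookkeeping of the $-1$ shift in the argument of $N(\cdot,\cdot)$, which propagates to the $2D-4$ in the numerator and the $\ln(D-1)/(D-1)$ in the denominator, and a cheap verification of the edge cases where the hypothesis $M>2$ of Lemma \ref{hammingsymbol} might fail. Once those are handled, the proof is a single line.
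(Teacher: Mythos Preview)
Your proposal is correct and matches the paper's approach exactly: the paper states Lemma \ref{lemma 3.6} without a written proof, simply announcing that it follows by combining Lemmas \ref{lemma 2.2} and \ref{hammingsymbol}. Your substitution $D\mapsto D-1$ and the resulting bookkeeping are precisely what is needed, and your treatment of the edge case $M\leq 2$ (which the paper does not mention) is a welcome extra care.
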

\begin{Remark}{\rm
By\cite{H}, a positive integer $n$ can be the order of a Hadamard matrix only if $n=1$, $n=2$ or $n$ is a multiple of $4$. Therefore, the two upper bounds can be used in different cases.}
\end{Remark}
Next, using the relation between $N_{p}(M,D)$ and $N_{p}(\boldsymbol{D})$, we can get an upper bounds of $N_{p}(\boldsymbol{D})$,
as given below.
\begin{lem}
  Let $\boldsymbol{D}\in\mathbb{N}_{0}^{M\times M}$ and $D_{max}=\max\limits_{i,j}[\boldsymbol{D}]_{ij}$. If $D_{max}\leq D$, then $N_{p}(\boldsymbol{D})\leq N_{p}(M,D).$
\end{lem}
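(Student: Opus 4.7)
The plan is to observe that this inequality follows almost immediately from the definitions, by exhibiting a concrete $\boldsymbol{D}_{p}$-code whose length is $N_{p}(M,D)$. The key point is that a code with uniform minimum pair-distance at least $D$ automatically satisfies every individual constraint imposed by $\boldsymbol{D}$, as long as no entry of $\boldsymbol{D}$ exceeds $D$.

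Concretely, I would first set $r = N_{p}(M,D)$ and invoke the definition of $N_{p}(M,D)$ to obtain a code $\mathcal{P} = \{\boldsymbol{p}_{1}, \ldots, \boldsymbol{p}_{M}\} \subseteq \mathbb{Z}_{2}^{r}$ of cardinality $M$ with $d_{p}(\boldsymbol{p}_{i}, \boldsymbol{p}_{j}) \geq D$ for all $i \neq j$. Then, for any ordering of the codewords of $\mathcal{P}$ and any indices $i,j \in [M]$, I would check that $d_{p}(\boldsymbol{p}_{i}, \boldsymbol{p}_{j}) \geq [\boldsymbol{D}]_{ij}$: when $i = j$ this is trivial since $[\boldsymbol{D}]_{ii} = 0$ (as $\boldsymbol{D}$ is a distance matrix, or at worst one can arrange the code so that the $i$th codeword corresponds to the $i$th row), and when $i \neq j$ this follows from $[\boldsymbol{D}]_{ij} \leq D_{\max} \leq D \leq d_{p}(\boldsymbol{p}_{i},\boldsymbol{p}_{j})$. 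Hence $\mathcal{P}$ is already a $\boldsymbol{D}_{p}$-code of length $r$, giving $N_{p}(\boldsymbol{D}) \leq r = N_{p}(M,D)$.

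There is essentially no obstacle here: the lemma is a monotonicity statement saying that relaxing every constraint from $D$ down to something no larger can only make the problem easier. The only minor care is in matching the definition of $\boldsymbol{D}_{p}$-code, which requires an ordering of codewords to witness $d_{p}(\boldsymbol{p}_{i}, \boldsymbol{p}_{j}) \geq [\boldsymbol{D}]_{ij}$; since the uniform bound $d_{p}(\boldsymbol{p}_{i},\boldsymbol{p}_{j}) \geq D$ holds for \emph{every} pair $i \neq j$ simultaneously, any ordering serves as a valid witness, so no combinatorial rearrangement is required.
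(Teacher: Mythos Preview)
Your proposal is correct and essentially identical to the paper's own proof: set $r=N_{p}(M,D)$, take a code of $M$ codewords with pairwise pair-distance at least $D$, and observe via the chain $[\boldsymbol{D}]_{ij}\leq D_{\max}\leq D\leq d_{p}(\boldsymbol{p}_{i},\boldsymbol{p}_{j})$ that this code is already a $\boldsymbol{D}_{p}$-code. Your additional remarks on the diagonal case and the ordering requirement are fine but not strictly needed; the paper simply omits them.
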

\begin{proof}
  Let $r=N_{p}(M,D)$ and let $\mathcal{C}=\{\boldsymbol{p}_{1},\ldots,\boldsymbol{p}_{M}\}$ be a code of length $r$   satisfying $d_{p}(\boldsymbol{p}_{i},\boldsymbol{p}_{j})\geq D$ when $i\neq j$. Since $D_{max}\leq D$, $$d_{p}(\boldsymbol{p}_{i},\boldsymbol{p}_{j})\geq D\geq D_{max}\geq [\boldsymbol{D}]_{ij}.$$ Hence, $\mathcal{C}$ is a $\boldsymbol{D}_{p}$-code. Therefore, $N_{p}(\boldsymbol{D})\leq N_{p}(M,D).$
\end{proof}
With Lemmas \ref{lemma 3.5} and \ref{lemma 3.6}, we get the following two corollaries easily.
\begin{Corollary}
  Let $\boldsymbol{D}\in\mathbb{N}_{0}^{M\times M}$ and $D_{max}=\max\limits_{i,j}[\boldsymbol{D}]_{ij}$. If $D_{max}\leq D$ where   $D\geq 2$
  is a positive integer such that there exists a Hadamard matrix of order $2D-2$ and $M\leq 4D-4$,
  then
  $N_{p}(\boldsymbol{D})\leq 2D-2.$
\end{Corollary}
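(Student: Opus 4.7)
The plan is to chain together the two preceding results: the unnamed lemma stating that $N_p(\boldsymbol{D})\leq N_p(M,D)$ whenever $D_{\max}\leq D$, and Lemma \ref{lemma 3.5}, which supplies the bound $N_p(M,D)\leq 2D-2$ under the Hadamard-order and cardinality hypotheses.

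Concretely, I would first invoke the lemma on the relation between $N_p(\boldsymbol{D})$ and $N_p(M,D)$: since the hypothesis $D_{\max}\leq D$ is exactly what that lemma requires, I immediately get
\[
N_p(\boldsymbol{D})\;\leq\;N_p(M,D).
\]
Next, I would verify that the hypotheses of Lemma \ref{lemma 3.5} are in force, namely $D\geq 2$, the existence of a Hadamard matrix of order $2D-2$, and $M\leq 4D-4$; all three are assumed in the statement of the corollary. Applying Lemma \ref{lemma 3.5} yields
\[
N_p(M,D)\;\leq\;2D-2.
\]
Combining the two inequalities gives $N_p(\boldsymbol{D})\leq 2D-2$, as desired.

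There is no real obstacle here: the corollary is a direct concatenation of two previously established inequalities, and no additional combinatorial argument or construction is needed. The only point worth emphasizing in the write-up is the logical order (first pass from $\boldsymbol{D}$ to the constant-entry matrix of value $D$, then apply the Hadamard-based construction), and the fact that the Hadamard/cardinality hypotheses are inherited unchanged from Lemma \ref{lemma 3.5}.
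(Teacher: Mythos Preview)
Your proposal is correct and matches the paper's own treatment: the paper simply states that this corollary follows easily from the preceding lemma $N_p(\boldsymbol{D})\leq N_p(M,D)$ together with Lemma~\ref{lemma 3.5}, which is exactly the chaining you describe.
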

\begin{Corollary}\label{corollary 3.3}
  Let $\boldsymbol{D}\in\mathbb{N}_{0}^{M\times M}$ and $D_{max}=\max\limits_{i,j}[\boldsymbol{D}]_{ij}$. If $D_{max}\leq D$ where   $D\geq 11$
  is a positive integer and $M\leq (D-1)^{2}$, then
  $$N_{p}(\boldsymbol{D})\leq\frac{2D-4}{1-2\sqrt{ln(D-1)/(D-1)}}.$$
\end{Corollary}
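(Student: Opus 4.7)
The plan is to chain the two inequalities already established just above the statement. First I would invoke the lemma immediately preceding the corollary: since $D_{max} \leq D$, every $\boldsymbol{D}_p$-code constraint is implied by the uniform constraint ``all pairwise pair-distances at least $D$'', and therefore $N_p(\boldsymbol{D}) \leq N_p(M, D)$. This reduces the problem to bounding the uniform quantity $N_p(M, D)$.

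Second, I would check that the hypotheses of Lemma \ref{lemma 3.6} are in force, namely $D \geq 11$ and $M \leq (D-1)^2$; these are exactly the assumptions of the corollary. Applying that lemma then yields
\[
N_p(M, D) \leq \frac{2D-4}{1 - 2\sqrt{\ln(D-1)/(D-1)}}.
\]
Combining the two bounds gives the desired conclusion.

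There is no real obstacle: the work has been done upstream, in (i) the ``$D_{max}\leq D$'' monotonicity lemma, which rests on the observation that a code satisfying the stronger uniform constraint automatically satisfies the entrywise constraints of $\boldsymbol{D}$, and (ii) Lemma \ref{lemma 3.6}, which itself followed from Lemma \ref{hammingsymbol} together with the Hamming-metric bound of Lemma \ref{lemma 2.2}. The only sanity check worth doing is that the denominator $1 - 2\sqrt{\ln(D-1)/(D-1)}$ is positive for $D \geq 11$, which is guaranteed precisely by the threshold already used in Lemma \ref{lemma 3.6}, so no extra hypothesis is needed.
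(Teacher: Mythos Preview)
Your proposal is correct and matches the paper's approach exactly: the paper presents Corollary \ref{corollary 3.3} as an immediate consequence of the monotonicity lemma $N_p(\boldsymbol{D}) \leq N_p(M,D)$ together with Lemma \ref{lemma 3.6}, and you have spelled out precisely that two-step chain.
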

In the following sections, we will discuss some specific functions. We provide some upper and lower bounds of the optimal redundancy of FCSPCs designed for these functions and some explicit constructions of FCSPCs.

\section{Pair-Locally Binary Functions}
This section is centred around pair-locally binary functions, a new class of functions which is extended from locally binary functions. Firstly, we give the definitions of such functions.
\begin{Definition}
  The function ball of a function $f$ with pair-radius $\rho$ around $\boldsymbol{u}\in\mathbb{Z}_{2}^{k}$ is defined by $$B_{p}^{f}(\boldsymbol{u},\rho)=\{f(\boldsymbol{u}^{\prime})\,|\,\boldsymbol{u}^{\prime}\in \mathbb{Z}_{2}^{k}~\hbox{and}~ d_{p}(\boldsymbol{u},\boldsymbol{u}^{\prime})\leq\rho\}.$$
\end{Definition}
\begin{Definition}
  A function $f\,:\,\mathbb{Z}_{2}^{k}\rightarrow Im(f)$ is called a $\rho$-pair-locally binary function if for all $\boldsymbol{u}\in\mathbb{Z}_{2}^{k}$, $|B_{p}^{f}(\boldsymbol{u},\rho)|\leq 2.$
\end{Definition}
Then, we derive the optimal redundancy of FCSPCs designed for pair-locally binary functions and provide an explicit construction.
\begin{lem}
  	For any $2t$-pair-locally binary function, we have
  $2t-2\leq r_{p}^{f}(k,t)\leq 2t-1.$
\end{lem}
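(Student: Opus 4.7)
The plan is to handle the two bounds separately. The lower bound $r_{p}^{f}(k,t)\geq 2t-2$ is immediate from Corollary~\ref{corollary 3.1}: every non-constant $2t$-pair-locally binary function $f$ satisfies $|\mathrm{Im}(f)|\geq 2$, so the corollary applies directly. (If $f$ is constant the statement is vacuous, so I will tacitly assume the non-constant case throughout.)

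For the upper bound my plan is an explicit construction of an FCSPC of redundancy $2t-1$. The central idea is that the local-binary condition lets me $2$-color the image of $f$. Concretely, I will form the auxiliary graph $G$ on $\mathrm{Im}(f)=\{f_{1},\dots,f_{E}\}$ by joining $f_{i}\sim f_{j}$ whenever $d_{p}^{f}(f_{i},f_{j})\leq 2t$; if some vertex had degree $\geq 2$ then some information vector $\boldsymbol{u}$ would realise $|B_{p}^{f}(\boldsymbol{u},2t)|\geq 3$, contradicting the hypothesis. Hence $G$ is a matching plus isolated vertices and admits a proper $2$-coloring $c:\mathrm{Im}(f)\to\{0,1\}$. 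I will then define $Enc(\boldsymbol{u})=(\boldsymbol{u},\,c(f(\boldsymbol{u}))\cdot(1,\dots,1))$, where the redundancy block is the all-zero or all-one vector of length $2t-1$. Verification of the defining inequality $d_{p}(Enc(\boldsymbol{u}_{1}),Enc(\boldsymbol{u}_{2}))\geq 2t+1$ for $f(\boldsymbol{u}_{1})\neq f(\boldsymbol{u}_{2})$ splits on whether $f(\boldsymbol{u}_{1})\sim f(\boldsymbol{u}_{2})$: in the adjacent case the two redundancy blocks are $0^{2t-1}$ and $1^{2t-1}$ with pair-distance $2t-1$, and the lemma of Subsection~3.1 already yields $d_{p}(Enc(\boldsymbol{u}_{1}),Enc(\boldsymbol{u}_{2}))\geq d_{p}(\boldsymbol{u}_{1},\boldsymbol{u}_{2})+2t-2\geq 2t+1$ whenever $d_{p}(\boldsymbol{u}_{1},\boldsymbol{u}_{2})\geq 3$; in the non-adjacent case automatically $d_{p}(\boldsymbol{u}_{1},\boldsymbol{u}_{2})\geq 2t+1$, and when the two colours disagree the same lemma gives $d_{p}(Enc)\geq 4t-1$.

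The main obstacle is two borderline corners where the generic lemma of Subsection~3.1 falls short by exactly one: first, $d_{p}(\boldsymbol{u}_{1},\boldsymbol{u}_{2})=2$ with distinct all-constant redundancies (the lemma yields only $2t$), and second, $d_{p}(\boldsymbol{u}_{1},\boldsymbol{u}_{2})\geq 2t+1$ with coinciding redundancies (the lemma yields only $d_{p}(\boldsymbol{u}_{1},\boldsymbol{u}_{2})-1$). In both corners I plan to recover the missing $+1$ by a direct enumeration of the pair-vector positions of $\pi(Enc(\boldsymbol{u}_{1}))$ versus $\pi(Enc(\boldsymbol{u}_{2}))$, exploiting the fact that the two boundary pairs $(u_{k},p_{1})$ and $(p_{r},u_{1})$ behave non-generically for the all-zero/all-one choice: in the distinct-redundancy corner both boundary pairs are forced to differ (since $0\neq 1$), which together with the $2t-2$ internal redundancy pairs and any $\boldsymbol{u}$-side disagreement pushes the count above $2t$; in the equal-redundancy corner the boundary pairs contribute exactly what is needed to guarantee $d_{p}(Enc)\geq d_{p}(\boldsymbol{u}_{1},\boldsymbol{u}_{2})$. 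These coordinate-level counts, together with the lemma applied in the non-extreme sub-cases, complete the verification that $Enc$ is an FCSPC and therefore that $r_{p}^{f}(k,t)\leq 2t-1$.
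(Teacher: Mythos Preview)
Your lower bound via Corollary~\ref{corollary 3.1} matches the paper. The gap is in the upper bound, specifically the assertion that the graph $G$ on $\mathrm{Im}(f)$ has maximum degree at most~$1$. Your justification---that degree $\geq 2$ at some $f_i$ would force $|B_p^f(\boldsymbol{u},2t)|\geq 3$ for some $\boldsymbol{u}$---does not hold, because the two edges incident to $f_i$ may be witnessed by \emph{different} preimages of $f_i$. For a concrete counterexample take $k$ large, $t=1$, fix two vectors $\boldsymbol{u}_2,\boldsymbol{v}_3$ with $d_p(\boldsymbol{u}_2,\boldsymbol{v}_3)>4$, set $f(\boldsymbol{u}_2)=2$, $f(\boldsymbol{v}_3)=3$, and $f\equiv 1$ elsewhere. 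Then vertex $1$ has degree~$2$ in $G$ (each of $\boldsymbol{u}_2,\boldsymbol{v}_3$ has an $f_1$-neighbour at pair-distance $2$), yet no pair-radius-$2$ ball can reach both $\boldsymbol{u}_2$ and $\boldsymbol{v}_3$, so $f$ is $2$-pair-locally binary. Without the matching claim your $2$-coloring of $\mathrm{Im}(f)$ is unjustified, and it is not clear that $G$ must even be bipartite.

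The paper avoids this by coloring information vectors rather than function values: it sets $w_{2t}(\boldsymbol{u})=1$ if and only if $f(\boldsymbol{u})=\max B_p^f(\boldsymbol{u},2t)$, and uses the $(2t-1)$-fold repetition of this bit as redundancy. Whenever $d_p(\boldsymbol{u},\boldsymbol{u}')\leq 2t$ with $f(\boldsymbol{u})\neq f(\boldsymbol{u}')$, the hypothesis forces $B_p^f(\boldsymbol{u},2t)=B_p^f(\boldsymbol{u}',2t)=\{f(\boldsymbol{u}),f(\boldsymbol{u}')\}$, so exactly one of the two realises the maximum and $w_{2t}(\boldsymbol{u})\neq w_{2t}(\boldsymbol{u}')$. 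Note that $w_{2t}$ need not be constant on a fibre $f^{-1}(f_i)$, so this is genuinely different from a coloring of $\mathrm{Im}(f)$. Your boundary-pair analysis for the two borderline cases is correct (and more careful than what the paper writes out); it applies verbatim once the redundancy bit is taken to be $w_{2t}(\boldsymbol{u})$ instead of $c(f(\boldsymbol{u}))$.
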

\begin{proof}
It follows from Corollary \ref{corollary 3.1} that  $r_{p}^{f}(k,t)\geq 2t-2.$
We show  the upper bound holds true by using the following explicit code construction.
For $\boldsymbol{u}\in\mathbb{Z}_{2}^{k}$, let $$w_{2t}(\boldsymbol{u})=\begin{cases}
 		1, & if\ f(\boldsymbol{u})=\max B_{p}^{f}(\boldsymbol{u},2t),\\
 		0, & otherwise,
 	\end{cases}$$
and $$Enc(\boldsymbol{u})=(\boldsymbol{u},(w_{2t}(\boldsymbol{u}))^{2t-1}),$$
where $(w_{2t}(\boldsymbol{u}))^{2t-1}$ means the $(2t-1)$-fold repetition of the bit $w_{2t}(\boldsymbol{u})$.
This gives an FCSPC for $f$ due to the following. Let $\boldsymbol{u}$, $\boldsymbol{u}^{\prime}\in\mathbb{Z}_{2}^{k}$ such that $f(\boldsymbol{u})\neq f(\boldsymbol{u}^{\prime})$. If $d_{p}(\boldsymbol{u},\boldsymbol{u}^{\prime})>2t$, then $d_{p}(Enc(\boldsymbol{u}),Enc(\boldsymbol{u}^{\prime}))\geq 2t+1.$ If $d_{p}(\boldsymbol{u},\boldsymbol{u}^{\prime})\leq 2t$, then $w_{2t}(\boldsymbol{u})\neq w_{2t}(\boldsymbol{u}^{\prime}).$
Thus, $d_{p}(Enc(\boldsymbol{u}),Enc(\boldsymbol{u}^{\prime}))\geq 2t+1.$
Hence, $r_{p}^{f}(k,t)\leq 2t-1.$
\end{proof}
According to Lemma \ref{lemma 2.4}, we give an example of pair-locally binary functions.
\begin{Example}{\rm
  Let $\mathcal{Q}=\{\boldsymbol{q}_{1},\ldots,\boldsymbol{q}_{M}\}\subseteq\mathbb{Z}_{2}^{k}$ be a code with minimum pair-distance $d_{p}=\min\limits_{i\neq j}d_{p}(\boldsymbol{q}_{i},\boldsymbol{q}_{j})$. Then the function on $\mathbb{Z}_{2}^{k}$, $$f(\boldsymbol{u})=\begin{cases}
                                                                                         i, & \mbox{if $\boldsymbol{u}=\boldsymbol{q}_{i}$,}\\
                                                                                         0, & \mbox{otherwise}
                                                                                       \end{cases}$$
  is a $\lfloor\frac{d_{p}-1}{2}\rfloor$-pair-locally function.}
\end{Example}
In order to find more pair-locally binary functions, we give the following relation between locally binary functions and pair-locally binary functions.
\begin{lem}
  If $2\leq\rho\leq k-1$ and $f\,:\,\mathbb{Z}_{2}^{k}\rightarrow Im(f)$ is a $(\rho-1)$-locally binary function, $f$ is a $\rho$-pair-locally binary function.
\end{lem}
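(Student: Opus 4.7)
The plan is to prove the containment $B_p^f(\boldsymbol{u},\rho) \subseteq B_f(\boldsymbol{u},\rho-1)$ for every $\boldsymbol{u} \in \mathbb{Z}_2^k$; the lemma follows at once, because then $|B_p^f(\boldsymbol{u},\rho)| \leq |B_f(\boldsymbol{u},\rho-1)| \leq 2$ by the $(\rho-1)$-local binarity of $f$.

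To establish this containment, I will fix $\boldsymbol{u}' \in \mathbb{Z}_2^k$ with $d_p(\boldsymbol{u},\boldsymbol{u}') \leq \rho$ and argue $d_H(\boldsymbol{u},\boldsymbol{u}') \leq \rho - 1$ by splitting on the value of the Hamming distance. First, if $d_H(\boldsymbol{u},\boldsymbol{u}') = 0$, then $d_H(\boldsymbol{u},\boldsymbol{u}') = 0 \leq \rho - 1$ since $\rho \geq 2$. Second, if $0 < d_H(\boldsymbol{u},\boldsymbol{u}') < k$, Lemma \ref{lemma 2.3} gives the nondegenerate inequality $d_H(\boldsymbol{u},\boldsymbol{u}') + 1 \leq d_p(\boldsymbol{u},\boldsymbol{u}') \leq \rho$, so $d_H(\boldsymbol{u},\boldsymbol{u}') \leq \rho - 1$.

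The delicate step is the extreme case $d_H(\boldsymbol{u},\boldsymbol{u}') = k$, which is exactly where the elementary inequality $d_H + 1 \leq d_p$ fails. Here Lemma \ref{lemma 2.3} states $d_p(\boldsymbol{u},\boldsymbol{u}') = d_H(\boldsymbol{u},\boldsymbol{u}') = k$; since the hypothesis $\rho \leq k - 1$ gives $d_p(\boldsymbol{u},\boldsymbol{u}') = k > \rho$, this case violates the assumption $d_p(\boldsymbol{u},\boldsymbol{u}') \leq \rho$ and therefore cannot occur. This is the place where the two boundary hypotheses $\rho \geq 2$ and $\rho \leq k-1$ are each used: the former rules out the trivial obstruction at $d_H = 0$ and the latter rules out the boundary behavior of the pair distance at $d_H = k$.

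Combining the three cases yields $d_H(\boldsymbol{u},\boldsymbol{u}') \leq \rho - 1$, so $f(\boldsymbol{u}') \in B_f(\boldsymbol{u},\rho-1)$, proving the containment and hence the lemma. I do not expect any genuine obstacle beyond correctly handling the extreme case; the proof is a straightforward exploitation of Lemma \ref{lemma 2.3} together with the hypotheses on $\rho$.
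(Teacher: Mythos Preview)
Your proof is correct and follows essentially the same approach as the paper: establish the containment $B_p^f(\boldsymbol{u},\rho)\subseteq B_f(\boldsymbol{u},\rho-1)$ via Lemma~\ref{lemma 2.3} and conclude from the $(\rho-1)$-local binarity of $f$. The only difference is that you explicitly separate out the extreme cases $d_H=0$ and $d_H=k$ to justify using $d_H+1\leq d_p$, whereas the paper invokes the hypothesis $2\leq\rho\leq k-1$ and Lemma~\ref{lemma 2.3} in a single line without spelling this out; your treatment is the more careful of the two.
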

\begin{proof}
  If $f(\boldsymbol{u}^{\prime})\in B_{p}^{f}(\boldsymbol{u},\rho)$, $d_{p}(\boldsymbol{u},\boldsymbol{u}^{\prime})\leq\rho$. Since $2\leq\rho\leq k-1$, by Lemma \ref{lemma 2.3},$$d(\boldsymbol{u},\boldsymbol{u}^{\prime})\leq d_{p}(\boldsymbol{u},\boldsymbol{u}^{\prime})-1\leq\rho-1.$$ Thus, $f(\boldsymbol{u}^{\prime})\in B_{f}(\boldsymbol{u},\rho-1)$
and $B_{p}^{f}(\boldsymbol{u},\rho)\subseteq B_{f}(\boldsymbol{u},\rho-1).$
Since $f$ is a $(\rho-1)$-locally binary function, $f$ is a $\rho$-pair-locally binary function.
\end{proof}

\section{Pair Weight Functions}
In this section, we centre around an important class of functions, pair weight functions and derive the optimal redundancy of FCSPCs designed for them. Let $f(\boldsymbol{u})=w_{p}(\boldsymbol{u})$ where $\boldsymbol{u}\in\mathbb{Z}_{2}^{k}$, $k\geq 2$. Note that $Im(w_{p})=\{0,2,\ldots,k\}$ and $E=|Im(w_{p})|=k$. For simplicity, we denote $\boldsymbol{D}_{w_{p}}^{(2)}(t,f_{1},\ldots,f_{E})$ by $\boldsymbol{D}_{w_{p}}^{(2)}(t)$.

Using Theorem \ref{theorem 3.2}, we first present an upper bound on the optimal redundancy of FCSPCs designed for the pair weight functions.
\begin{lem}\label{lemma 5.1}
  Let $f(\boldsymbol{u})=w_{p}(\boldsymbol{u})$. We have $r_{p}^{w_{p}}(k,t)\leq N_{p}(\boldsymbol{D}_{w_{p}}^{(2)}(t))$
  where $$[\boldsymbol{D}_{w_{p}}^{(2)}(t)]_{ij}=\begin{cases}
				0, & i=j,\\
				2t, & |i-j|=1,\\
				[2t+2-|i-j|]^{+}, & |i-j|>1.
			\end{cases}$$
\end{lem}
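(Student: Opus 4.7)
The plan is to invoke Theorem \ref{theorem 3.2} with the pair-weight function and then compare the matrix $\boldsymbol{D}_{w_p}^{(2)}(t,f_1,\ldots,f_E)$ produced by that theorem with the matrix $\boldsymbol{D}_{w_p}^{(2)}(t)$ stated in the lemma. Since $\mathrm{Im}(w_p)=\{0,2,3,\ldots,k\}$ and $E=k$, I would fix the natural ordering $f_1=0$, $f_2=2$, and $f_i=i$ for $i\ge 2$, and reduce the proof to showing that $\boldsymbol{D}_{w_p}^{(2)}(t)$ dominates $\boldsymbol{D}_{w_p}^{(2)}(t,f_1,\ldots,f_k)$ entrywise.

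Next I would establish the lower bound $d_p^{w_p}(f_i,f_j)\ge\max\{2,\,|f_i-f_j|\}$ for $i\ne j$. Two ingredients suffice: first, any nonzero binary vector has pair weight at least $2$, so two distinct vectors always satisfy $d_p\ge 2$; second, since $d_p$ is a metric (it is the Hamming metric on symbol-pair images), the reverse triangle inequality gives
\[
|w_p(\boldsymbol{u}_1)-w_p(\boldsymbol{u}_2)|=\bigl|d_p(\boldsymbol{u}_1,\boldsymbol{0})-d_p(\boldsymbol{u}_2,\boldsymbol{0})\bigr|\le d_p(\boldsymbol{u}_1,\boldsymbol{u}_2),
\]
whence $d_p^{w_p}(f_i,f_j)\ge|f_i-f_j|$. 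A brief inspection of the chosen ordering then shows $|f_i-f_j|\ge|i-j|$ for all $i,j\in[k]$: equality holds when $i,j\ge 2$, while for $i=1,\,j\ge 2$ one has $|f_1-f_j|=j\ge j-1=|i-j|$ because of the gap between $f_1=0$ and $f_2=2$.

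Combining these facts, $d_p^{w_p}(f_i,f_j)\ge\max\{2,|i-j|\}$ for $i\ne j$. Substituting into the definition of $\boldsymbol{D}_{w_p}^{(2)}(t,f_1,\ldots,f_k)$, the true $(i,j)$-entry satisfies $[2t+2-d_p^{w_p}(f_i,f_j)]^+\le 2t$ when $|i-j|=1$ and $[2t+2-d_p^{w_p}(f_i,f_j)]^+\le [2t+2-|i-j|]^+$ when $|i-j|>1$. Hence $\boldsymbol{D}_{w_p}^{(2)}(t)$ of the lemma is entrywise $\ge\boldsymbol{D}_{w_p}^{(2)}(t,f_1,\ldots,f_k)$. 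By monotonicity of $N_p$ (any code satisfying the stronger constraints of $\boldsymbol{D}_{w_p}^{(2)}(t)$ automatically satisfies the weaker constraints of $\boldsymbol{D}_{w_p}^{(2)}(t,f_1,\ldots,f_k)$), we obtain $N_p(\boldsymbol{D}_{w_p}^{(2)}(t,f_1,\ldots,f_k))\le N_p(\boldsymbol{D}_{w_p}^{(2)}(t))$, and Theorem \ref{theorem 3.2} then yields the claimed bound.

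The one slightly subtle step is the first row/column, where $f_1=0$ forces the strict inequality $|f_1-f_j|=j>j-1=|i-j|$. Recognising that the lemma records only the weaker uniform bound $d_p^{w_p}(f_i,f_j)\ge|i-j|$ (instead of the tight $|f_i-f_j|$) is what keeps the formula for $\boldsymbol{D}_{w_p}^{(2)}(t)$ clean and simultaneously makes it a valid, if mildly suboptimal, upper bound. No other nontrivial obstacle arises; the rest is bookkeeping.
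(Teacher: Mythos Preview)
Your argument is correct and follows the same overall route as the paper: invoke Theorem~\ref{theorem 3.2} and then pin down the function pair-distance matrix for $w_p$. The difference is one of rigor. The paper asserts that the displayed matrix \emph{equals} $\boldsymbol{D}_{w_p}^{(2)}(t,f_1,\ldots,f_E)$, tacitly identifying the matrix index $i$ with the function value $f_i$; it computes $d_p^{w_p}(i,j)=2$ when $|i-j|=1$ and $d_p^{w_p}(i,j)=|i-j|$ when $|i-j|>1$, then substitutes directly. As you correctly notice, this identification slips in the first row and column because $f_1=0$ while $f_2=2$, so there $|f_1-f_j|=j>j-1=|1-j|$ and the true entry is one smaller than what the formula in the lemma records. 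Your domination-plus-monotonicity step (showing the stated matrix is entrywise $\ge$ the genuine $\boldsymbol{D}_{w_p}^{(2)}(t,f_1,\ldots,f_k)$ and that $N_p$ is monotone) is exactly what is needed to make the argument airtight, and it explains why the lemma is still a valid, if mildly suboptimal, upper bound. The reverse triangle inequality for $d_p$ and the observation $d_p(\boldsymbol{u},\boldsymbol{v})=w_p(\boldsymbol{u}-\boldsymbol{v})\ge 2$ for $\boldsymbol{u}\ne\boldsymbol{v}$ are both sound.
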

\begin{proof}
By Theorem \ref{theorem 3.2}, $r_{p}^{w_{p}}(k,t)\leq N_{p}(\boldsymbol{D}_{w_{p}}^{(2)}(t)).$
  Since $\min\limits_{\boldsymbol{u}_{1},\boldsymbol{u}_{2}\in\mathbb{Z}_{2}^{k}}d_{p}(\boldsymbol{u}_{1},\boldsymbol{u}_{2})$ s.t. $w_{p}(\boldsymbol{u}_{1})=i$, $w_{p}(\boldsymbol{u}_{2})=j$ is equal to $2$ if $|i-j|=1$ and equal to $|i-j|$ if $|i-j|>1$, $$d_{p}^{w_{p}}(i,j)=\begin{cases}
				2, & |i-j|=1,\\
			|i-j|, & |i-j|>1.
			\end{cases}$$
	Hence, $$[\boldsymbol{D}_{w_{p}}^{(2)}(t)]_{ij}=\begin{cases}
			0, & i=j,\\
			2t, & |i-j|=1,\\
			[2t+2-|i-j|]^{+}, & |i-j|>1.
		\end{cases}$$
\end{proof}

We provide a lower bound on the optimal redundancy of FCSPCs designed for the pair weight functions by using Corollary \ref{corollary 3.1}
\begin{lem}\label{lemma 5.2}
  Let $f(\boldsymbol{u})=w_{p}(\boldsymbol{u})$ and $\boldsymbol{u}_{i}=(1^{i}\,0^{k-i})$, $i\in\{0,1,\ldots,k-1\}$. We have
  $$r_{p}^{w_{p}}(k,t)\geq N_{p}(\boldsymbol{D}_{w_{p}}^{(1)}(t,\boldsymbol{u}_{0},\ldots,\boldsymbol{u}_{k-1}))$$
where $$[\boldsymbol{D}_{w_{p}}^{(1)}(t,\boldsymbol{u}_{0},\ldots,\boldsymbol{u}_{k-1})]_{ij}=\begin{cases}
		0, & i=j,\\
		[2t-|i-j|-1]^{+}, & i\neq j.
	\end{cases}$$
\end{lem}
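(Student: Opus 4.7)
The plan is to specialize Corollary~\ref{corollary 3.1} to the subset $\{\boldsymbol{u}_{0},\boldsymbol{u}_{1},\ldots,\boldsymbol{u}_{k-1}\}$ of $\mathbb{Z}_{2}^{k}$ defined by $\boldsymbol{u}_{i}=(1^{i}\,0^{k-i})$. That corollary immediately yields $r_{p}^{w_{p}}(k,t)\geq N_{p}(\boldsymbol{D}_{w_{p}}^{(1)}(t,\boldsymbol{u}_{0},\ldots,\boldsymbol{u}_{k-1}))$, so the entire task reduces to identifying the entries of the matrix $\boldsymbol{D}_{w_{p}}^{(1)}(t,\boldsymbol{u}_{0},\ldots,\boldsymbol{u}_{k-1})$. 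By the very definition of $\boldsymbol{D}_{f}^{(1)}$, this boils down to two computations: first, verifying that the function values $w_{p}(\boldsymbol{u}_{i})$ are pairwise distinct, so that the indicator ``$f(\boldsymbol{u}_{i})\neq f(\boldsymbol{u}_{j})$'' is satisfied for all $i\neq j$; and second, computing $d_{p}(\boldsymbol{u}_{i},\boldsymbol{u}_{j})$ for all $i\neq j$.

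The first computation is short: $\boldsymbol{u}_{0}=\boldsymbol{0}$ gives $w_{p}(\boldsymbol{u}_{0})=0$, and for $1\leq i\leq k-1$ the cyclic pair representation $\pi(\boldsymbol{u}_{i})$ contains $i-1$ copies of $(1,1)$, a single internal transition $(1,0)$, a single cyclic wrap-around $(0,1)$, and $k-i-1$ copies of $(0,0)$, giving $w_{p}(\boldsymbol{u}_{i})=i+1$. Hence the $k$ values $\{0,2,3,\ldots,k\}$ are realized exactly once each, and in particular are pairwise distinct.

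The second computation is the main technical step. For $i<j$ the Hamming-difference set is $S=\{i,i+1,\ldots,j-1\}$, and by the definition of $\pi$ the coordinate $l$ of $\pi(\boldsymbol{u}_{i})$ and $\pi(\boldsymbol{u}_{j})$ disagrees if and only if $l\in S$ or $l+1\pmod k\in S$. Thus the disagreement set is the union $S\cup(S-1)$, which unpacks to the modular interval $\{i-1,i,\ldots,j-1\}\pmod k$ of cardinality $j-i+1$. Consequently $d_{p}(\boldsymbol{u}_{i},\boldsymbol{u}_{j})=|i-j|+1$ for every pair $i\neq j$; the case $i=0$ is included, since the ``interval'' then wraps around via $-1\equiv k-1\pmod k$ and still has the same size.

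Combining these two facts with the definition of $\boldsymbol{D}_{f}^{(1)}$ gives
$$[\boldsymbol{D}_{w_{p}}^{(1)}(t,\boldsymbol{u}_{0},\ldots,\boldsymbol{u}_{k-1})]_{ij}=\begin{cases}0,&i=j,\\ [2t-|i-j|-1]^{+},&i\neq j,\end{cases}$$
which is exactly the form asserted in the lemma, completing the reduction. The only mildly delicate point is the cyclic wrap-around when $i=0$, which must be handled when verifying $|S\cup(S-1)|=j-i+1$; this is bookkeeping rather than a genuine obstacle, and is the only step that really uses the cyclic (as opposed to the linear) definition of $\pi$.
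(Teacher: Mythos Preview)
Your proposal is correct and follows essentially the same route as the paper's proof: invoke Corollary~\ref{corollary 3.1} on the subset $\{\boldsymbol{u}_0,\ldots,\boldsymbol{u}_{k-1}\}$, verify that $w_p(\boldsymbol{u}_0)=0$ and $w_p(\boldsymbol{u}_i)=i+1$ for $i\geq 1$ so the function values are distinct, and then use $d_p(\boldsymbol{u}_i,\boldsymbol{u}_j)=|i-j|+1$ to read off the matrix entries. The paper merely asserts these two computations without justification, whereas you supply the explicit argument via the disagreement set $S\cup(S-1)$ and the cyclic wrap-around check; this extra detail is helpful but does not constitute a different approach.
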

\begin{proof}
  By Corollary \ref{corollary 3.1}, $r_{p}^{w_{p}}(k,t)\geq N_{p}(\boldsymbol{D}_{w_{p}}^{(1)}(t,\boldsymbol{u}_{0},\ldots,\boldsymbol{u}_{k-1})).$
Since $w_{p}(\boldsymbol{u}_{i})=\begin{cases}
	0, & i=0,\\
	i+1, & i\geq 1,
\end{cases}$
we see that
$$w_{p}(\boldsymbol{u}_{i})\neq w_{p}(\boldsymbol{u}_{j}),\  \text{if}\  i\neq j.$$
Because $d_{p}(\boldsymbol{u}_{i},\boldsymbol{u}_{j})=|i-j|+1$ for any $i,j\in\{0,1,\ldots,k-1\}$ with $i\neq j$, $$[\boldsymbol{D}_{w_{p}}^{(1)}(t,\boldsymbol{u}_{0},\ldots,\boldsymbol{u}_{k-1})]_{ij}=\begin{cases}
			0, & i=j,\\
			[2t-|i-j|-1]^{+}, & i\neq j.
		\end{cases}$$
\end{proof}

\begin{Remark}{\rm
In order to make the gay between the upper and lower bounds of optimal redundancy small, theoretically, we should try to find a series of words
$\boldsymbol{u}_{i}$, $i\in\{0,1,\ldots,k-1\}$ such that $$\{w_{p}(\boldsymbol{u}_{0}),w_{p}(\boldsymbol{u}_{1}),\ldots,w_{p}(\boldsymbol{u}_{k-1})\}=Im(w_{p})$$
and
$$\boldsymbol{D}_{w_{p}}^{(1)}(t,\boldsymbol{u}_{0},\ldots,\boldsymbol{u}_{k-1})=\\\boldsymbol{D}_{w_{p}}^{(1)}(t,f_{1},\ldots,f_{E}),$$
i.e.,
$d_{p}(\boldsymbol{u}_{i},\boldsymbol{u}_{j})=d_{p}^{w_{p}}(w_{p}(\boldsymbol{u}_{i}),w_{p}(\boldsymbol{u}_{j}))$ for any $i,j\in\{0,1,\ldots,k-1\}$. However, when $k\geq 4$, we find it is impossible. The following is the reason.

Let $w_{p}(\boldsymbol{u}_{1})=2$, then $\boldsymbol{u}_{1}=(0,\ldots,0,1,0,\ldots,0)$. We choose a word $\boldsymbol{u}_{2}$ such that $w_{p}(\boldsymbol{u}_{2})=3$ and $d_{p}(\boldsymbol{u}_{1},\boldsymbol{u}_{2})=d_{p}^{w_{p}}(2,3)=2$. We find that $\boldsymbol{u}_{2}=(0,\ldots,0,1,1,0,\ldots,0)$. Next, we choose a word $\boldsymbol{u}_{3}$ such that $w_{p}(\boldsymbol{u}_{3})=4$ and $d_{p}(\boldsymbol{u}_{1},\boldsymbol{u}_{3})=d_{p}^{w_{p}}(2,4)=2$. Then, $\boldsymbol{u}_{3}=(0,\ldots,0,1,0,\ldots,0,1,0,\ldots,0)$. However, $d_{p}(\boldsymbol{u}_{2},\boldsymbol{u}_{3})\geq 3>2=d_{p}^{w_{p}}(3,4)$. This contradicts to the assumption that $d_{p}(\boldsymbol{u}_{i},\boldsymbol{u}_{j})=d_{p}^{w_{p}}(w_{p}(\boldsymbol{u}_{i}),w_{p}(\boldsymbol{u}_{j}))$ for any $i,j\in\{0,1,\ldots,k-1\}$.}
\end{Remark}
To make the proofs of the following results more intuitive and understandable, we present an example.
\begin{Example}{\rm
  For $k=6$, $t=3$, one has $$\boldsymbol{D}_{w_{p}}^{(2)}(t)=\begin{pmatrix}
                                                       0 & 6 & 6 & 5 & 4 & 3 \\
                                                       6 & 0 & 6 & 6 & 5 & 4 \\
                                                       6 & 6 & 0 & 6 & 6 & 5 \\
                                                       5 & 6 & 6 & 0 & 6 & 6 \\
                                                       4 & 5 & 6 & 6 & 0 & 6 \\
                                                       3 & 4 & 5 & 6 & 6 & 0
 \end{pmatrix}$$
  and
  $$\boldsymbol{D}_{w_{p}}^{(1)}(t,\boldsymbol{u}_{0},\ldots,\boldsymbol{u}_{k-1})=\begin{pmatrix}
                                                       0 & 4 & 3 & 2 & 1 & 0 \\
                                                       4 & 0 & 4 & 3 & 2 & 1 \\
                                                       3 & 4 & 0 & 4 & 3 & 2 \\
                                                       2 & 3 & 4 & 0 & 4 & 3 \\
                                                       1 & 2 & 3 & 4 & 0 & 4 \\
                                                       0 & 1 & 2 & 3 & 4 & 0
                                                     \end{pmatrix}.$$}
\end{Example}

In the following, combining with the upper and lower bounds of $N_{p}(\boldsymbol{D})$, we provide some more simplified bounds of $r_{p}^{w_{p}}(k,t)$. Using Lemma \ref{lemma 5.2}, we get the lower bound of $N_{p}(\boldsymbol{D})$ below.
\begin{Corollary}\label{corollary 5.1}
For any $k>t$,
we have
$$r_{p}^{w_{p}}(k,t)\geq\frac{20t^{3}-20t}{9(t+1)^{2}}.$$
\end{Corollary}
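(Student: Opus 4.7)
The plan is to combine the general lower bound from Corollary~\ref{corollary 3.1} with the Plotkin-type inequality of Lemma~\ref{lemma 3.2}, applied not to the full matrix of Lemma~\ref{lemma 5.2} but to a carefully chosen principal sub-matrix. Since Corollary~\ref{corollary 3.1} is valid for any subset of information vectors, and the hypothesis $k > t$ guarantees that the first $t+1$ vectors $\boldsymbol{u}_0,\boldsymbol{u}_1,\ldots,\boldsymbol{u}_t$ from Lemma~\ref{lemma 5.2} lie inside $\mathbb{Z}_2^k$, I will work with $M=t+1$. Writing $\boldsymbol{D}'$ for the $(t+1)\times(t+1)$ principal sub-matrix of $\boldsymbol{D}_{w_p}^{(1)}(t,\boldsymbol{u}_0,\ldots,\boldsymbol{u}_{k-1})$ indexed by $\{0,1,\ldots,t\}$, Corollary~\ref{corollary 3.1} together with the entry formula in Lemma~\ref{lemma 5.2} gives $r_p^{w_p}(k,t)\geq N_p(\boldsymbol{D}')$ with $[\boldsymbol{D}']_{ij}=[2t-|i-j|-1]^+$ for $i\neq j$.

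Next I evaluate the off-diagonal sum. For indices in $\{0,\ldots,t\}$ one has $|i-j|\leq t\leq 2t-2$ (the statement being trivial when $t\leq 1$), so the positive-part truncation is inactive and every entry equals $2t-1-|i-j|\geq t-1$. Counting pairs by the gap $d=j-i\in\{1,\ldots,t\}$ yields
$$\sum_{i<j}[\boldsymbol{D}']_{ij}=\sum_{d=1}^{t}(t+1-d)(2t-1-d).$$
The substitution $e=t+1-d$ turns this into $\sum_{e=1}^{t}e(e+t-2)=\sum e^2+(t-2)\sum e$, and the standard closed forms collapse the expression to $\tfrac{5t(t-1)(t+1)}{6}$.

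Finally I apply Lemma~\ref{lemma 3.2} with $M=t+1$. Of the three coefficients appearing there, $\tfrac{8}{3(M^2-1)}$, $\tfrac{8}{3M^2}$ and $\tfrac{8}{3M^2-4}$, the smallest is the middle one, namely $\tfrac{8}{3(t+1)^2}$; hence, uniformly in the residue of $t+1$ modulo $4$,
$$N_p(\boldsymbol{D}')\geq\frac{8}{3(t+1)^2}\cdot\frac{5t(t-1)(t+1)}{6}=\frac{20t(t-1)}{9(t+1)}=\frac{20t^3-20t}{9(t+1)^2},$$
which is precisely the claimed bound.

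The main obstacle is identifying the correct sub-matrix size $M=t+1$: a smaller $M$ throws away too many large-weight off-diagonal entries, while a larger $M$ both introduces zero entries (from the positive-part truncation whenever $|i-j|\geq 2t-1$) and inflates the $M^2$ denominator faster than the sum grows, so neither extreme reproduces the coefficient $\tfrac{20}{9}$. Once this sweet spot is located, the remaining work is routine arithmetic on $\sum_{d=1}^{t}(t+1-d)(2t-1-d)$, together with the mild observation that the $M\equiv 0\pmod{4}$ branch of Lemma~\ref{lemma 3.2} gives a bound valid in all three residue classes.
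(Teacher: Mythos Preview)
Your proof is correct and follows essentially the same route as the paper: restrict to the $(t+1)\times(t+1)$ principal sub-matrix coming from $\boldsymbol{u}_0,\ldots,\boldsymbol{u}_t$, apply the Plotkin-type Lemma~\ref{lemma 3.2}, and evaluate $\sum_{d=1}^{t}(t+1-d)(2t-1-d)=\tfrac{5t(t-1)(t+1)}{6}$. Your version is in fact slightly cleaner, since you invoke Corollary~\ref{corollary 3.1} directly on the small subset and explicitly justify using the uniform coefficient $\tfrac{8}{3(t+1)^2}$ across all residue classes of $M=t+1$, a point the paper's proof leaves implicit.
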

\begin{proof}
  Let $\boldsymbol{u}_{i}=(1^{i}\,0^{k-i})$, $i\in\{0,1,\ldots,k-1\}$ and $\mathcal{P}=\{\boldsymbol{p}_{0},\ldots,\boldsymbol{p}_{k-1}\}$ be a $\boldsymbol{D}_{w_{p}}^{(1)}(t,\boldsymbol{u}_{0},\ldots,\boldsymbol{u}_{k-1})$-code of length $N_{p}(\boldsymbol{D}_{w_{p}}^{(1)}(t,\boldsymbol{u}_{0},\ldots,\boldsymbol{u}_{k-1}))$. Consider the first $t+1$ codewords of $\mathcal{P}$.
  Let
  $$\boldsymbol{D}_{t+1}=\boldsymbol{D}_{w_{p}}^{(1)}(t,\boldsymbol{u}_{0},\ldots,\boldsymbol{u}_{k-1})\begin{pmatrix}
                                                                                                       1 & \cdots & t+1 \\
                                                                                                       1 & \cdots & t+1
                                                                                                     \end{pmatrix}.$$
Then $\{\boldsymbol{p}_{0},\ldots,\boldsymbol{p}_{t}\}$ is a $\boldsymbol{D}_{t+1}$-code.
  By Lemma \ref{lemma 3.2}, \begin{align*}
   N_{p}(\boldsymbol{D}_{t+1}) & \geq \frac{8}{3(t+1)^{2}}\sum_{i=1}^{t+1}\sum_{j=i+1}^{t+1}[\boldsymbol{D}_{t+1}]_{ij}\\ & =\frac{8}{3(t+1)^{2}}\sum_{i=1}^{t+1}\sum_{j=i+1}^{t+1}[\boldsymbol{D}_{w_{p}}^{(1)}(t,\boldsymbol{u}_{0},\ldots,\boldsymbol{u}_{k-1})]_{ij}\\ &
  \geq \frac{8}{3(t+1)^{2}}\sum_{k=1}^{t}(2t-k-1)(t+1-k)\\ & =\frac{20t^{3}-20t}{9(t+1)^{2}}.
                            \end{align*}
Finally, using Lemma \ref{lemma 5.2}, we have
 $$r_{p}^{w_{p}}(k,t)\geq N_{p}(\boldsymbol{D}_{w_{p}}^{(1)}(t,\boldsymbol{u}_{0},\ldots,\boldsymbol{u}_{k-1})\geq N_{p}(\boldsymbol{D}_{t+1})\geq \frac{20t^{3}-20t}{9(t+1)^{2}}.$$
\end{proof}

By virtue of  Lemma \ref{lemma 5.1}, we can derive an upper bound on $r_{p}^{w_{p}}(k,t)$, as given below.
\begin{Corollary}
  For $t\geq 6$ and $2\leq k\leq (2t-1)^{2}$, $$r_{p}^{w_{p}}(k,t)\leq\frac{4t-4}{1-2\sqrt{ln(2t-1)/(2t-1)}}.$$
\end{Corollary}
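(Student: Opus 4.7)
The plan is to combine Lemma \ref{lemma 5.1} with Corollary \ref{corollary 3.3}. By Lemma \ref{lemma 5.1}, it suffices to upper bound $N_{p}(\boldsymbol{D}_{w_{p}}^{(2)}(t))$, where $\boldsymbol{D}_{w_{p}}^{(2)}(t)$ is a $k\times k$ matrix. To feed this into Corollary \ref{corollary 3.3}, I need to know (i) the size $M$ of the matrix, and (ii) the maximal entry $D_{\max}=\max_{i,j}[\boldsymbol{D}_{w_{p}}^{(2)}(t)]_{ij}$.

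First I would read off the relevant parameters directly from the entry formula in Lemma \ref{lemma 5.1}. The entries are $0$ on the diagonal, equal to $2t$ when $|i-j|=1$, and equal to $[2t+2-|i-j|]^{+}$ when $|i-j|>1$. For $|i-j|=2$ this still gives $2t$, and it strictly decreases as $|i-j|$ grows. Hence $D_{\max}=2t$, and the matrix has size $M=k$.

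Next I would apply Corollary \ref{corollary 3.3} with the choice $D=2t$. The hypotheses of that corollary require $D\geq 11$ and $M\leq (D-1)^{2}$; these translate under our choice into $2t\geq 11$ (equivalently $t\geq 6$) and $k\leq (2t-1)^{2}$, which are precisely the assumptions of the corollary to be proved. Corollary \ref{corollary 3.3} then yields
\[
N_{p}(\boldsymbol{D}_{w_{p}}^{(2)}(t))\leq \frac{2(2t)-4}{1-2\sqrt{\ln(2t-1)/(2t-1)}}=\frac{4t-4}{1-2\sqrt{\ln(2t-1)/(2t-1)}}.
\]
Chaining this with $r_{p}^{w_{p}}(k,t)\leq N_{p}(\boldsymbol{D}_{w_{p}}^{(2)}(t))$ from Lemma \ref{lemma 5.1} closes the proof.

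There is no real obstacle here; the argument is a direct specialization. The only thing to double-check is the threshold $t\geq 6$: since Corollary \ref{corollary 3.3} (coming from Lemma \ref{lemma 3.6} and Lemma \ref{hammingsymbol}) demands $D\geq 11$ rather than $D\geq 10$, the smallest admissible $t$ with $D=2t$ is indeed $t=6$, matching the statement. No auxiliary combinatorial estimate or construction is needed beyond invoking the already-established bounds.
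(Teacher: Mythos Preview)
Your proof is correct and follows essentially the same approach as the paper: both bound the maximal entry of $\boldsymbol{D}_{w_{p}}^{(2)}(t)$ by $2t$, apply Corollary~\ref{corollary 3.3} with $D=2t$ (checking $t\geq 6$ gives $D\geq 11$ and $k\leq(2t-1)^{2}$ gives $M\leq(D-1)^{2}$), and then chain with Lemma~\ref{lemma 5.1}. Your write-up even makes the verification of $D_{\max}=2t$ and the threshold $t\geq 6$ a bit more explicit than the paper does.
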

\begin{proof}
From $$[\boldsymbol{D}_{w_{p}}^{(2)}(t)]_{ij}=\begin{cases}
			0, & i=j,\\
			2t, & |i-j|=1,\\
			[2t+2-|i-j|]^{+}, & |i-j|>1,
		\end{cases}$$
we see that $[\boldsymbol{D}_{w_{p}}^{(2)}(t)]_{ij}\leq 2t.$
As $t\geq 6$ and $k\leq (2t-1)^{2}$, by Corollary \ref{corollary 3.3}, $$N_{p}(\boldsymbol{D}_{w_{p}}^{(2)}(t))\leq\frac{4t-4}{1-2\sqrt{ln(2t-1)/(2t-1)}}.$$
Then by Lemma \ref{lemma 5.1}, $$r_{p}^{w_{p}}(k,t)\leq N_{p}(\boldsymbol{D}_{w_{p}}^{(2)}(t))\leq\frac{4t-4}{1-2\sqrt{ln(2t-1)/(2t-1)}}.$$
\end{proof}

With the upper and lower bounds above,  for positive integers  $k,t$ with $6\leq t<k\leq (2t-1)^{2}$,
we can narrow down the optimal redundancy between roughly $\frac{20t}{9}$ and $4t$.
By Corollary \ref{corollary 3.1}, for any function $f$ with $|Im(f)|\geq 2$, $r_{p}^{f}(k,t)\geq 2t-2$. We can also provide a tighter lower bound of the optimal redundancy of FCSPCs designed for pair-weight functions.
\begin{lem}\label{lemma 5.3}
  For $k\geq 2$, we have $r_{p}^{w_{p}}(k,t)\geq 2t-1$.
\end{lem}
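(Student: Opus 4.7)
The plan is to pick two specific, nearby information vectors with distinct pair weights and to upper bound the pair distance between their encodings directly, bypassing the generic inequality used in Theorem \ref{theorem 3.1}. Take $\boldsymbol{u}_1=(0,0,\ldots,0)$ and $\boldsymbol{u}_2=(1,0,\ldots,0)$ in $\mathbb{Z}_2^k$. A short calculation gives $w_p(\boldsymbol{u}_1)=0$ and $w_p(\boldsymbol{u}_2)=2$, so the FCSPC definition forces $d_p(Enc(\boldsymbol{u}_1),Enc(\boldsymbol{u}_2))\geq 2t+1$.

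Next I would inspect the two symbol-pair vectors $\pi(Enc(\boldsymbol{u}_i))$, where $Enc(\boldsymbol{u}_i)=(\boldsymbol{u}_i,\boldsymbol{p}(\boldsymbol{u}_i))$ has total length $n=k+r$. The two codewords agree on the information coordinates indexed $1,2,\ldots,k-1$ (which are zero in both vectors) and can differ only at coordinate $0$ and at the $r$ redundancy coordinates. Consequently every pair position $j$ with $1\leq j\leq k-2$ has the form $(c_j,c_{j+1})$ with both indices in $\{1,\ldots,k-1\}$, so it equals $(0,0)$ in both encodings and contributes nothing to the pair distance. This yields $k-2$ forced matches (an empty set when $k=2$), leaving at most $n-(k-2)=r+2$ pair positions where the pair distance can accrue.

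Combining the two inequalities gives $2t+1\leq d_p(Enc(\boldsymbol{u}_1),Enc(\boldsymbol{u}_2))\leq r+2$, hence $r\geq 2t-1$. The argument is uniform in $k\geq 2$; the only bookkeeping subtlety is the wrap-around pair position $(c_{n-1},c_0)$, but since $c_0=u_0$ differs between $\boldsymbol{u}_1$ and $\boldsymbol{u}_2$, this position automatically falls among the $r+2$ potentially-differing ones rather than among the forced matches. There is no real obstacle—the improvement over the $2t-2$ bound of Corollary \ref{corollary 3.1} comes precisely from the fact that the generic inequality $d_p(\boldsymbol{u},\boldsymbol{v})\leq d_p(\boldsymbol{u}^{(1)},\boldsymbol{v}^{(1)})+d_p(\boldsymbol{u}^{(2)},\boldsymbol{v}^{(2)})+1$ is loose by $1$ for this particular pair of information vectors, and the direct position count recovers that slack.
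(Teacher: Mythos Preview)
Your proof is correct and follows essentially the same approach as the paper. The paper also selects two information vectors differing in a single bit (it uses $\boldsymbol{u}=(0,\ldots,0,0)$ and $\boldsymbol{u}'=(0,\ldots,0,1)$, differing at position $k-1$ rather than your position $0$) and argues directly that $d_{p}(Enc(\boldsymbol{u}),Enc(\boldsymbol{u}'))\leq r+2$, which under the hypothesis $r\leq 2t-2$ gives the contradiction $d_{p}\leq 2t$; your explicit position count simply spells out the step the paper leaves implicit.
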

\begin{proof}
  Suppose that $r=r_{p}^{w_{p}}(k,t)<2t-1$. There exists an encoding function $$Enc\,:\,\mathbb{Z}_{2}^{k}\rightarrow\mathbb{Z}_{2}^{k+r},\,\boldsymbol{u}\mapsto(\boldsymbol{u},\boldsymbol{p(u)})$$ which defines an FCSPC for the pair weight function. Let $\boldsymbol{u}=(0,\ldots,0,0)$, $\boldsymbol{u}^{\prime}=(0,\ldots,0,1)$. Then $$d_{p}(Enc(\boldsymbol{u}),Enc(\boldsymbol{u}^{\prime}))\leq 2t.$$ This contradicts the definition of FCSPC. Hence, $r_{p}^{w_{p}}(k,t)\geq 2t-1.$
\end{proof}

To get the value of the optimal redundancy more exactly, we provide a construction of FCSPCs for the pair-weight functions.
\begin{Construction}
  For $\boldsymbol{u}\in\mathbb{Z}_{2}^{k}$, we define $Enc(\boldsymbol{u})=(\boldsymbol{u},\boldsymbol{p}_{w_{p}\big(\boldsymbol{u})+1}\big)$ where the $\boldsymbol{p}_{i}'s$ are defined as follows:
Let $\boldsymbol{p}_{1},\ldots,\boldsymbol{p}_{2t+1}$ be a code with minimum pair-distance $2t$, i.e., $d_{p}(\boldsymbol{p}_{i},\boldsymbol{p}_{j})\geq 2t\ \text{for all}\ i,j\leq 2t+1,\,i\neq j.$  Set $\boldsymbol{p}_{i}=\boldsymbol{p}_{i\,\rm{smod}\,(2t+1)}$ for $i\geq 2t+2$ ($a\,\rm{smod}\,b=(a-1)\mod b+1$).
\end{Construction}
We  show that the encoding function gives an FCSPC for the pair-weight functions.
\begin{proof}
  	Let $\boldsymbol{u}_{i},\boldsymbol{u}_{j}\in\mathbb{Z}_{2}^{k}$ with $w_{p}(\boldsymbol{u}_{i})\neq w_{p}(\boldsymbol{u}_{j})$. If $|w_{p}(\boldsymbol{u}_{i})-w_{p}(\boldsymbol{u}_{j})|\geq 2t+1$,
   then
   $$d_{p}(Enc(\boldsymbol{u}_{i}),Enc(\boldsymbol{u}_{j}))\geq d_{p}(\boldsymbol{u}_{i},\boldsymbol{u}_{j})\geq |w_{p}(\boldsymbol{u}_{i})-w_{p}(\boldsymbol{u}_{j})|\geq 2t+1.$$
If $|w_{p}(\boldsymbol{u}_{i})-w_{p}(\boldsymbol{u}_{j})|\leq 2t$, then
$$d_{p}(Enc(\boldsymbol{u}_{i}),Enc(\boldsymbol{u}_{j}))\geq d_{p}(\boldsymbol{u}_{i},\boldsymbol{u}_{j})+d_{p}(\boldsymbol{p}_{i},\boldsymbol{p}_{j})-1\geq 2+2t-1=2t+1.$$
\end{proof}
Using the construction, we can derive  more exact values of the optimal redundancy.
\begin{lem}
  	For $k\geq 2$, we have $1\leq r_{p}^{w_{p}}(k,1)\leq 2$, $3\leq r_{p}^{w_{p}}(k,2)\leq 5$ and $5\leq r_{p}^{w_{p}}(k,3)\leq 7$.
\end{lem}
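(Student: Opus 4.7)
The lower bounds are immediate from Lemma \ref{lemma 5.3}: specializing $r_{p}^{w_{p}}(k,t)\geq 2t-1$ to $t\in\{1,2,3\}$ yields $1,3,5$ respectively, matching the three claimed lower bounds.

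For the upper bounds, the plan is to apply the explicit Construction just described. According to it, to obtain an FCSPC with redundancy $r$ for the pair-weight function it suffices to produce $2t+1$ binary vectors $\boldsymbol{p}_{1},\ldots,\boldsymbol{p}_{2t+1}$ of some common length $r$ whose pairwise pair-distances are all at least $2t$; feeding such a list into the Construction immediately gives $r_{p}^{w_{p}}(k,t)\leq r$. Thus the task reduces to establishing $N_{p}(2t+1,2t)\leq r$ with $r=2,5,7$ for $t=1,2,3$ respectively.

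For $t=1$, the three vectors $(0,0),(0,1),(1,0)\in\mathbb{Z}_{2}^{2}$ trivially have pairwise pair-distance $2$, which gives $r=2$. For $t=2$ and $t=3$, the strategy is to exhibit concrete binary codes of length $5$ and $7$ respectively and verify the $\binom{2t+1}{2}$ pair-distance inequalities directly by listing the symbol-pair read vectors $\pi(\boldsymbol{p}_{i})$. A workable candidate for $t=2$ is $\{00000,11110,00111,11001,10011\}\subseteq\mathbb{Z}_{2}^{5}$, whose ten pair-distances are all at least $4$; for $t=3$ an analogous $7$-element subset of $\mathbb{Z}_{2}^{7}$ with minimum pair-distance $6$ can be found by inspection or a short exhaustive search.

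The main obstacle is that the target lengths are essentially optimal against the Singleton bound for symbol-pair codes: the constraints $M\leq 2^{n-d_{p}+2}$, $M=2t+1$, and $d_{p}=2t$ together force $n\geq 2t-2+\lceil\log_{2}(2t+1)\rceil$, which evaluates to exactly $2,5,7$ for $t=1,2,3$. Consequently the required codes sit right at the pair-MDS threshold with no slack, so the difficulty lies entirely in producing them explicitly and carrying out the pairwise pair-distance verification — a routine but slightly tedious bookkeeping step, especially for the length-$7$ code at $t=3$.
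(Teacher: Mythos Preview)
Your proposal is correct and follows essentially the same route as the paper: lower bounds via Lemma \ref{lemma 5.3} and upper bounds via the Construction, reducing everything to exhibiting a $(2t{+}1)$-element binary code of the stated length with minimum pair-distance $2t$. Your $t=1$ and $t=2$ codes work (the paper uses slightly different but equivalent ones); for $t=3$ the paper avoids the search you anticipate by simply quoting an explicit $(7,7,d_{p}{=}6)$ code from \cite[Example 8]{CL}, so you can close that last bookkeeping gap by citing the same source.
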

\begin{proof}
We first show that $1\leq r_{p}^{w_{p}}(k,1)\leq 2$. Obviously, $r_{p}^{w_{p}}(k,1)\geq 1.$
Then we only need to prove that  $r_{p}^{w_{p}}(k,1)\leq 2$ by the construction above. For $\boldsymbol{u}\in\mathbb{Z}_{2}^{k}$, we define $Enc(\boldsymbol{u})=(\boldsymbol{u},\boldsymbol{p}_{w_{p}(\boldsymbol{u})+1})$ where the $\boldsymbol{p}_{i}'s$ are defined as follows. Set $\boldsymbol{p}_{1}=(0,0)$, $\boldsymbol{p}_{2}=(1,0)$, $\boldsymbol{p}_{3}=(0,1)$ and $\boldsymbol{p}_{i}=\boldsymbol{p}_{i\,\rm{smod}\,3}$ for $i\geq 4$. Then the encoding function defines an FCSPC for the pair weight functions. Thus, $r_{p}^{w_{p}}(k,1)\leq 2.$

  Next, we prove that $3\leq r_{p}^{w_{p}}(k,2)\leq 5$. By Lemma \ref{lemma 5.3}, $r_{p}^{w_{p}}(k,2)\geq 3.$ It is enough to show  $r_{p}^{w_{p}}(k,2)\leq 5$ by the construction above. For $\boldsymbol{u}\in\mathbb{Z}_{2}^{k}$, we define $Enc(\boldsymbol{u})=(\boldsymbol{u},\boldsymbol{p}_{w_{p}(\boldsymbol{u})+1})$ where the $\boldsymbol{p}_{i}'s$ are defined as follows. Set $\boldsymbol{p}_{1}=(0,0,0,0,0)$, $\boldsymbol{p}_{2}=(1,1,1,0,0)$, $\boldsymbol{p}_{3}=(1,0,1,1,0)$, $\boldsymbol{p}_{4}=(1,1,0,0,1)$, $\boldsymbol{p}_{5}=(0,1,1,1,1)$ and $\boldsymbol{p}_{i}=\boldsymbol{p}_{i\,\rm{smod}\,5}$ for $i\geq 6$. Then the encoding function defines an FCSPC for the pair weight functions. Thus, $r_{p}^{w_{p}}(k,1)\leq 5.$

  At last, we prove that $5\leq r_{p}^{w_{p}}(k,3)\leq 7$. By Lemma \ref{lemma 5.3}, $r_{p}^{w_{p}}(k,3)\geq 5.$ Then we prove that $r_{p}^{w_{p}}(k,3)\leq 7$ by the construction above. For $\boldsymbol{u}\in\mathbb{Z}_{2}^{k}$, we define $Enc(\boldsymbol{u})=(\boldsymbol{u},\boldsymbol{p}_{w_{p}(\boldsymbol{u})+1})$ where the $\boldsymbol{p}_{i}'s$ are defined as follows. Set $\boldsymbol{p}_{1}=(0,0,0,1,1,0,0)$, $\boldsymbol{p}_{2}=(0,0,1,1,0,1,1)$, $\boldsymbol{p}_{3}=(0,1,0,0,0,1,0)$, $\boldsymbol{p}_{4}=(0,1,1,0,1,0,1)$, $\boldsymbol{p}_{5}=(1,0,0,0,1,1,1)$, $\boldsymbol{p}_{6}=(1,0,1,0,0,0,0)$, $\boldsymbol{p}_{7}=(1,1,0,1,0,0,1)$ and $\boldsymbol{p}_{i}=\boldsymbol{p}_{i\,\rm{smod}\,7}$ for $i\geq 8$. Then the encoding function defines an FCSPC for the pair weight functions. Thus, $r_{p}^{w_{p}}(k,3)\leq 7.$
\end{proof}
We note that the code $\mathcal{P}=\{(0,0,0,1,1,0,0),(0,0,1,1,0,1,1),(0,1,0,0,0,1,0),(0,1,1,0,1,0,1),\\(1,0,0,0,1,1,1),(1,0,1,0,0,0,0),(1,1,0,1,0,0,1)\}$ with minimum pair-distance $d_{p}=6$ is from  \cite[Example 8]{CL}.

\section{Pair Weight Distribution Functions}
In this section, we study FCSPCs designed for an interesting class of functions, pair-weight distribution functions. Let $f(\boldsymbol{u})=\Delta_{p}^{T}(\boldsymbol{u})\triangleq\lfloor\frac{w_{p}(\boldsymbol{u})}{T}\rfloor$ where $\boldsymbol{u}\in\mathbb{Z}_{2}^{k}$ and $k,\,T\in\mathbb{N}$ with $k\geq 2$. For simplicity, we only consider the case that $T$ divides $k+1$ and $T\geq 2t+1$. Under the condition, $|Im(f)|=E=\frac{k+1}{T}$. More precisely, the function $f$ is a step threshold function based on the pair-weight of $\boldsymbol{u}$ with $E$ steps. The function values increase by one at integer multiples of $T$. In the following, we derive the optimal redundancy of FCSPCs designed for pair-weight distribution functions.
\begin{lem}
  For any $k,t,T\in\mathbb{N}$ such that $T$ divides $k+1$ and $T\geq 2t+1$, we have $2t-2\leq r_{p}^{\Delta_{p}^{T}}(k,t)\leq 2t.$
\end{lem}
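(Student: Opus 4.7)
My plan is to prove the two inequalities separately. For the lower bound, I would simply invoke Corollary \ref{corollary 3.1}: under the stated hypotheses $|\mathrm{Im}(\Delta_p^T)|=(k+1)/T\geq 2$ (so $\Delta_p^T$ is non-constant), and Corollary \ref{corollary 3.1} immediately yields $r_p^{\Delta_p^T}(k,t)\geq 2t-2$.

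For the upper bound, my plan is to exhibit an explicit FCSPC of redundancy exactly $2t$ built from a single parity bit, namely
$$
Enc(\boldsymbol{u}) = \bigl(\boldsymbol{u},\, c(\boldsymbol{u})^{2t}\bigr),\qquad c(\boldsymbol{u})=\Delta_p^T(\boldsymbol{u})\bmod 2,
$$
where $c(\boldsymbol{u})^{2t}$ denotes the $2t$-fold repetition of the bit $c(\boldsymbol{u})$. To verify this is an FCSPC, I will take any $\boldsymbol{u}_1,\boldsymbol{u}_2$ with $\Delta_p^T(\boldsymbol{u}_1)\neq\Delta_p^T(\boldsymbol{u}_2)$, set $\boldsymbol{p}_i=c(\boldsymbol{u}_i)^{2t}$, and apply the subadditivity lemma of Section 3.1 to get
$$
d_p\bigl(Enc(\boldsymbol{u}_1),Enc(\boldsymbol{u}_2)\bigr)\geq d_p(\boldsymbol{u}_1,\boldsymbol{u}_2)+d_p(\boldsymbol{p}_1,\boldsymbol{p}_2)-1.
$$
I then split on $\delta:=|\Delta_p^T(\boldsymbol{u}_1)-\Delta_p^T(\boldsymbol{u}_2)|$. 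When $\delta=1$ the parities differ, so $\{\boldsymbol{p}_1,\boldsymbol{p}_2\}=\{0^{2t},1^{2t}\}$ and $d_p(\boldsymbol{p}_1,\boldsymbol{p}_2)=2t$ by the extreme case of Lemma \ref{lemma 2.3}; since a single flipped bit already perturbs two overlapping pair-positions, $d_p(\boldsymbol{u}_1,\boldsymbol{u}_2)\geq 2$, summing to $\geq 2t+1$. When $\delta\geq 2$, I would use the triangle inequality $|w_p(\boldsymbol{u}_1)-w_p(\boldsymbol{u}_2)|\leq d_p(\boldsymbol{u}_1,\boldsymbol{u}_2)$ (applying the pair-distance triangle inequality with the zero vector) together with the fact that $w_p(\boldsymbol{u}_1)$ and $w_p(\boldsymbol{u}_2)$ lie in non-adjacent $T$-blocks: explicitly $|w_p(\boldsymbol{u}_1)-w_p(\boldsymbol{u}_2)|\geq T+1\geq 2t+2$, so $d_p(\boldsymbol{u}_1,\boldsymbol{u}_2)\geq 2t+2$ and the subadditivity bound gives $\geq 2t+1$ regardless of $d_p(\boldsymbol{p}_1,\boldsymbol{p}_2)$.

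The only delicate point is the tight numerical balance in the case $\delta=1$: both the ``$-1$'' from subadditivity and the elementary bound $d_p(\boldsymbol{u}_1,\boldsymbol{u}_2)\geq 2$ are sharp, so there is no slack, and this is precisely what forces the repetition length to be $2t$ rather than $2t-1$. I do not anticipate any serious obstacle beyond carefully handling this bookkeeping; the hypothesis $T\geq 2t+1$ is exactly what keeps Case $\delta\geq 2$ trivial and permits a purely ``parity'' construction rather than the more elaborate cyclic codes used in Section 5 for the pair-weight function.
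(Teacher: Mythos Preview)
Your argument is correct, and it proceeds by a genuinely different (and simpler) construction than the paper's. For the lower bound both you and the paper invoke Corollary~\ref{corollary 3.1}. For the upper bound, the paper assigns to $\boldsymbol{u}$ a parity vector $\boldsymbol{p}_{(w_p(\boldsymbol{u})+1)\,\mathrm{smod}\,T}\in\mathbb{Z}_2^{2t}$ from a list of $T$ ``staircase'' words $\boldsymbol{p}_i=(1^{i-1}0^{2t-i+1})$ (saturated at $1^{2t}$ for $i>2t+1$), and then runs a fairly detailed case analysis on $w_1,w_2$ modulo $T$. Your encoding uses only two parity words, $0^{2t}$ and $1^{2t}$, keyed to $\Delta_p^T(\boldsymbol{u})\bmod 2$; the verification collapses to the dichotomy $\delta=1$ versus $\delta\ge 2$, with the latter handled cleanly by the pair-weight triangle inequality $|w_p(\boldsymbol{u}_1)-w_p(\boldsymbol{u}_2)|\le d_p(\boldsymbol{u}_1,\boldsymbol{u}_2)$ and the estimate $(\delta-1)T+1\ge T+1\ge 2t+2$. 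In effect you are exploiting (without saying so) that $\Delta_p^T$ is $2t$-pair-locally binary once $T\ge 2t+1$, so a single repeated parity bit suffices---this makes your proof shorter, while the paper's construction is more structural but requires more bookkeeping. One small caveat that applies equally to both proofs: the appeal to Corollary~\ref{corollary 3.1} tacitly needs $|\mathrm{Im}(\Delta_p^T)|=(k+1)/T\ge 2$, which is not literally forced by the stated hypotheses.
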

\begin{proof}
  By the Corollary \ref{corollary 3.1}, $r_{p}^{\Delta_{p}^{T}}(k,t)\geq 2t-2.$
We will show  $r_{p}^{\Delta_{p}^{T}}(k,t)\leq 2t$ by constructing an FCSPC for $\Delta_{p}^{T}$. To this end, we define $Enc_{\Delta_{p}^{T}}(\boldsymbol{u})=(\boldsymbol{u},\boldsymbol{p}_{(w_{p}(\boldsymbol{u})+1)\,\rm{smod}\,T})$ with $\boldsymbol{p}_{i}\in\mathbb{Z}_{2}^{2t}$ defined as follows. Set $\boldsymbol{p}_{i}=(1^{i-1}\,0^{2t-i+1})$ for $i\in\{1,2,\ldots,2t+1\}$ and $\boldsymbol{p}_{i}=(1^{2t})$ for $i\in\{2t+2,\ldots,T\}$.
Let $\boldsymbol{u}_{1},\boldsymbol{u}_{2}\in\mathbb{Z}_{2}^{k}$ with $f(\boldsymbol{u}_{1})\neq f(\boldsymbol{u}_{2})$. If $d_{p}(\boldsymbol{u}_{1},\boldsymbol{u}_{2})\geq 2t+1$, then  $d_{p}(Enc(\boldsymbol{u}_{1}),Enc(\boldsymbol{u}_{2}))\geq 2t+1.$
In the following, we assume that $d_{p}(\boldsymbol{u}_{1},\boldsymbol{u}_{2})\leq 2t$. Since $d_{p}(\boldsymbol{u}_{1},\boldsymbol{u}_{2})\leq 2t<T$, we let $$w_{p}(\boldsymbol{u}_{1})=(m-1)T+w_{1},\,w_{p}(\boldsymbol{u}_{2})=mT+w_{2}.$$ If $m=1$, $w_{1}\in\{0,2,\cdots,T-1\}$, $w_{2}\in\{0,1,\cdots,T-1\}.$ If $m\geq 2$, $w_{1},w_{2}\in\{0,1,\cdots,T-1\}$. Since $w_{p}(\boldsymbol{u}_{2})-w_{p}(\boldsymbol{u}_{1})\leq d_{p}(\boldsymbol{u}_{1},\boldsymbol{u}_{2})\leq 2t<T$, $T+w_{2}-w_{1}<T$. Thus, $w_{1}>w_{2}.$
By the definition, $$Enc(\boldsymbol{u}_{1})=(\boldsymbol{u}_{1},\boldsymbol{p}_{w_{1}+1})\,,\,Enc(\boldsymbol{u}_{2})=(\boldsymbol{u}_{2},\boldsymbol{p}_{w_{2}+1}).$$
If $T=2t+1$, then $$\boldsymbol{p}_{w_{1}+1}=(1^{w_{1}}\,0^{2t-w_{1}})\,,\,\boldsymbol{p}_{w_{2}+1}=(1^{w_{2}}\,0^{2t-w_{2}}).$$ If $w_{1}=2t$, then $w_{2}=0$ and $d_{p}(Enc(\boldsymbol{u}_{1}),Enc(\boldsymbol{u}_{2}))\geq 2t+1.$ Otherwise, $$d_{p}(Enc(\boldsymbol{u}_{1}),Enc(\boldsymbol{u}_{2}))\geq (w_{1}-w_{2}+1)+(T+w_{2}-w_{1})-1=T=2t+1.$$
In the case that $T>2t+1$, we have to discuss in five cases.
	\begin{itemize}
		\item If $w_{1}\leq 2t$, $w_{2}\leq 2t$, then $d_{p}(Enc(\boldsymbol{u}_{1}),Enc(\boldsymbol{u}_{2}))\geq 2t+1.$
\item If $w_{1}>2t$, $w_{2}=0$, then $d_{p}(Enc(\boldsymbol{u}_{1}),Enc(\boldsymbol{u}_{2}))\geq 2t+1.$
\item If $w_{1}>2t$, $w_{2}=2t$, then $d_{p}(Enc(\boldsymbol{u}_{1}),Enc(\boldsymbol{u}_{2}))\geq T+2t-w_{1}\geq 2t+1.$
\item If $w_{1}>2t$, $0<w_{2}<2t$, then $$d_{p}(Enc(\boldsymbol{u}_{1}),Enc(\boldsymbol{u}_{2}))\geq (2t-w_{2}+1)+(T+w_{2}-w_{1})-1=2t+T-w_{1}\geq 2t+1.$$
\item If $w_{1}>2t$, $w_{2}>2t$, then  $$d_{p}(Enc(\boldsymbol{u}_{1}),Enc(\boldsymbol{u}_{2}))\geq T+w_{2}-w_{1}\geq T-[(T-1)-(2t+1)]=2t+2>2t+1.$$
\end{itemize}
Hence, we conclude that $r_{p}^{\Delta_{p}^{T}}(k,t)\leq 2t.$
\end{proof}

\begin{table}[h]
  \centering
  \begin{tabular}{ccc}
  \hline
  $w_{p}(\boldsymbol{u})$ & $\boldsymbol{p}_{(w_{p}(\boldsymbol{u})+1)smodT}$ & $f(\boldsymbol{u})$ \\
  \hline
  $0$ & $\boldsymbol{p}_{1}$ & \multirow{4}*{$0$} \\
  $2$ & $\boldsymbol{p}_{3}$ & ~ \\
  $\vdots$ & $\vdots$ & ~ \\
  $T-1$ & $\boldsymbol{p}_{T}$ & ~ \\
  \hline
  $T$ & $\boldsymbol{p}_{1}$ & \multirow{4}*{$1$} \\
  $T+1$ & $\boldsymbol{p}_{2}$ & ~ \\
  $\vdots$ & $\vdots$ & ~ \\
  $2T-1$ & $\boldsymbol{p}_{T}$ & ~ \\
  \hline
  $\vdots$ & $\vdots$ & $\vdots$ \\
  \hline
  $(\frac{k+1}{T}-1)T$ & $\boldsymbol{p}_{1}$ & \multirow{4}*{$\frac{k+1}{T}-1$} \\
  $(\frac{k+1}{T}-1)T+1$ & $\boldsymbol{p}_{2}$ & ~ \\
  $\vdots$ & $\vdots$ & ~ \\
  $k$ & $\boldsymbol{p}_{T}$ & ~ \\
  \hline
\end{tabular}
  \caption{Parity Vectors Assigned to Words with Different Pair-weights.}
\end{table}

\section{Comparison to the Redundancy of  Symbol-Pair Codes}
At the beginning of the paper, we claim that using function-correcting symbol-pair codes reduces the redundancy when the message is long and the image of the function is small. In this section, we show that the redundancy is really reduced when the function belong to the three classes of functions discussed above. We use the notations as in Figures \ref{Figure1} and  \ref{Figure2}.

We first estimate the redundancy of classical symbol-pair codes (CSPCs), $n-k$. By the Sphere Packing bound, $2^{n}\geq 2^{k}B(n,t).$
Thus, $n-k\geq \log_{2}B(n,t).$ According to the computational formula of $B(n,t)$, one has $B(n,t)=\Theta(n^{\lfloor\frac{t}{2}\rfloor}).$
Thus, we can roughly get $n-k\geq \lfloor\frac{t}{2}\rfloor \log_{2}n.$ By the Singleton bound, $2^{k}\leq 2^{n-(2t+1)+2}.$
We see that  $n\geq k+2t+1.$
Hence, $$n-k\geq \lfloor\frac{t}{2}\rfloor \log_{2}(k+2t-1).$$

In the following, we compare the redundancy of CSPCs with that of FCSPCs by Table \ref{Table2}.
\begin{table}[h]
  \centering
  \begin{tabular}{ccc}
    \hline
    Functions & CSPC & FCSPC \\
    \hline
    2t-pair-locally binary functions & $\lfloor\frac{t}{2}\rfloor\log_{2}(k+2t-1)$ & $2t-1$  \\
    Pair weight functions & $\lfloor\frac{t}{2}\rfloor\log_{2}(k+2t-1)$ & $\frac{4t-4}{1-2\sqrt{ln(2t-1)/(2t-1)}}$  \\
    Pair weight distribution functions & $\lfloor\frac{t}{2}\rfloor\log_{2}(k+2t-1)$ & $2t$ \\
    \hline
  \end{tabular}
  \caption{The Comparison between the Redundancy of CSPCs and that of FCSPCs.}\label{Table2}
\end{table}

In Table \ref{Table2}, the contents of column 2 are the lower bounds of redundancy of CSPCs
and the contents of column 3 are the upper bounds of optimal redundancy of FCSPCs which the redundancy of FCSPCs can always achieve.
Comparing the column 2 and the column 3, we find that the redundancy is reduced provided $k$ is large enough.
For example, when $k>17-2t$, $\lfloor\frac{t}{2}\rfloor\log_{2}(k+2t-1)>2t-1$.
Thus, the redundancy can be reduced by using FCSPCs.
We note that this is just a very rough estimate with some constraints of $k$ and $t$,
but it can reflect the key advantage of FCSPCs.

\section{Conclusion}
In this paper, we extend the notion of  function correcting codes from binary symmetric channels to   symbol-pair read channels.
The new paradigm meets the need of protecting a certain attribute of the message against errors
when the message is read through the symbol-pair read channels and improves the efficiency of information storage and reading by reducing the redundancy.
Our key goal is to construct FCSPCs and derive the optimal redundancy of FCSPCs.
By connecting FCSPCs with irregular-pair-distance codes, we provide some upper and lower bounds of the optimal redundancy for generic functions and also do research on some specific functions.

Due to the difficulty of constructing FCSPCs explicitly, we are always unable to get the exact value of optimal redundancy or confirm
whether these bounds we present are tight or not.
So constructing more FCSPCs with small redundancy is an important further research direction.
In addition, due to the connection between FCSPCs and irregular-pair-distance codes, finding more effective ways to compute or estimate $N_{p}(\boldsymbol{D})$ is also significant for determining the optimal redundancy or optimizing the bounds we present.

Apart from these directions, we can also consider function correcting codes for other functions of interest or under other channels.

\vskip 2mm
{\bf Acknowledgements}~ This work was supported by National Natural Science Foundation of China under Grant
Nos. 12271199, 12171191, 12371521 and The Fundamental Research Funds for the Central Universities 30106220482.

\end{document}